\definecolor{DarkGreen}{rgb}{0.1,0.5,0.1}
\definecolor{DarkRed}{rgb}{0.5,0.1,0.1}
\definecolor{DarkBlue}{rgb}{0.1,0.1,0.5}
\definecolor{Black}{rgb}{0.0,0.0,0.0}
\newtheorem{theorem}{Theorem}[section]
\newtheorem{lemma}[theorem]{Lemma}
\newtheorem{proposition}[theorem]{Proposition}
\theoremstyle{definition}
\newtheorem{definition}[theorem]{Definition}
\newtheorem{construction}[theorem]{Construction}
\newtheorem{example}[theorem]{Example}
\newtheorem{remark}{Remark}
\numberwithin{equation}{section}
\newcommand{\<}{\langle}
\renewcommand{\>}{\rangle}
\newcommand{\EE}{\mathbb{E}}
\newcommand{\PP}{\mathbb{P}}
\newcommand{\R}{\mathbb{R}}
\newcommand{\C}{\mathbb{C}}
\newcommand{\Z}{\mathbb{Z}}
\newcommand{\A}{\mathcal{A}}
\newcommand{\Q}{\mathcal{Q}}
\newcommand{\T}{\mathcal{T}}
\newcommand{\X}{\mathcal{X}}
\newcommand\inner[1]{\langle #1 \rangle}
\newcommand\sign{\operatorname{sign}}
\newcommand\rad{\operatorname{rad}}
\newcommand{\sparseset}{\Sigma_k^n}
\title{Fast binary embeddings, and quantized compressed sensing with structured matrices}
\author{Thang Huynh \thanks{Department of Mathematics, University of California San Diego, e-mail: tlh007@ucsd.edu} \and Rayan Saab \thanks{Department of Mathematics, University of California San Diego, e-mail: rsaab@ucsd.edu.}}
\date{}
\begin{document}
\maketitle

\begin{abstract}
This paper deals with two related problems, namely distance-preserving binary embeddings and quantization for compressed sensing. First, we propose fast methods to replace points from a subset $\mathcal{X} \subset \R^n$, associated with the Euclidean metric, with points in the cube $\{\pm 1\}^m$ and we associate the cube with a pseudo-metric that approximates Euclidean distance among points in $\mathcal{X}$. Our methods rely on quantizing fast Johnson-Lindenstrauss embeddings based on \emph{bounded orthonormal systems} and \emph{partial circulant ensembles}, both of which admit fast transforms. 
Our quantization methods utilize \emph{noise-shaping}, and include Sigma-Delta schemes and distributed noise-shaping schemes. The resulting approximation errors  decay polynomially and exponentially fast in $m$, depending on the embedding method. This dramatically outperforms the current decay rates associated with binary embeddings and Hamming distances. Additionally, it is the first such binary embedding result that applies to fast Johnson-Lindenstrauss maps while preserving $\ell_2$ norms. 

Second, we again consider  noise-shaping schemes, albeit this time to quantize compressed sensing measurements arising from bounded orthonormal ensembles and partial circulant matrices. We show that these methods yield a reconstruction error that again decays with the number of measurements (and bits), when using convex optimization for reconstruction. Specifically, for Sigma-Delta schemes, the error decays polynomially in the number of measurements, and it decays exponentially  for distributed noise-shaping schemes based on beta encoding. These results are near optimal and the first of their kind dealing with bounded orthonormal systems. 
\end{abstract}

%

\section{Introduction}
In signal processing and machine learning, given a linear map $y = A x$, where $x$ is an unknown signal in $\C^n$ and $A$ is an $m\times n$ matrix in $\C^{m\times n}$, one would often like to quantize or digitize $y$, i.e., map each entry $y_i$ of $y$ to a finite discrete set, say $\mathcal{A}$. 

On the one hand, viewing $A$ as a measurement operator, quantization is the step that allows us to move from the analog world of continuous measurements to the digital world of bit-streams. It thereby allows computer processing of the signals, as well as their storage and transmission. On the other hand, one can view $A$ as a linear embedding, taking some vectors $x\in \mathcal{T} \subset \R^n$ to their image in $\R^m$. For a vector $x\in \mathcal{T}$,  the \emph{quantized} version of $Ax$ can be viewed as a \emph{non-linear embedding} of $x\in \R^n$ into $\mathcal{A}^m$. It can therefore be used to expedite such tasks as information retrieval (say via nearest neighbor searches) provided the action of $A$ and the quantization both admit fast computation, and provided "distance" computations in the embedding space can be done efficiently as well. 

\subsection{A brief introduction to compressed sensing}

While in the remainder of the paper we address binary embeddings before we address quantization  in the compressed sensing context, we find it more convenient here to first introduce  compressed sensing. Suppose $x \in \C^n$ is an unknown vector  which we wish to reconstruct from $m$ linear measurements. Specifically, suppose we collect the measurements \[y_j = \inner{a_j, x} + w_j\] where each $a_j$, $j=1,...,m$, is a known measurement vector in $\C^n$ and $w_j$ denotes unknown noise. When $x$ is sparse (i.e., most of its coefficients are zero) or compressible (i.e., well approximated by sparse vectors) $x$ can be recovered (or approximated) from the measurements, even when $m\ll n$. Indeed, the study of recovery algorithms and reconstruction guarantees from such measurements is the purview of  \emph{compressed sensing} (CS) and one of the most popular reconstruction methods in this setting is $\ell_1$-minimization. Here, the vector $x$ is approximated by $\hat{x}$, the solution to 
\begin{equation}\label{l1-min}
	\min_{z\in\C^n} \|z\|_1 \quad\mbox{subject to}\quad \|A z - y\|_2 \leq \eta.
\end{equation}
Above $A$ is the $m\times n$ matrix with $a_j$ as its $j$th row, $y = (y_j)_{j=1}^m$ is the vector of acquired measurements, and $\eta$ is an upper bound on the norm of the noise vector $w = (w_j)_{j=1}^m$.

Some of the earliest results on compressed sensing (e.g., \cite{CTR2006, Donoho2006, MPT2008, BDD2008}) show that for a certain class of \emph{subgaussian} random measurement matrices $A$, including those with i.i.d. standard Gaussian or $\pm 1$ Bernoulli entries, with high probability on the draw of the matrix and uniformly on $x$, the solution $\widehat{x}$ to \eqref{l1-min} obeys
\begin{equation}\label{error bound}
	\|x - \widehat{x}\|_2 \lesssim \eta+ \frac{\sigma_k(x)_1}{\sqrt{k}},
\end{equation}
provided $m \gtrsim k\log(n/k).$ Such guarantees are often based  on the measurement matrix $A$ satisfying the so-called Restricted Isometry Property (RIP), defined precisely in Section \ref{sec:math_prelim}. 

%
The above results on subgaussian random matrices are mathematically important both for the techniques they introduce and as a proof of concept that compressed sensing is viable. Nevertheless, for physical reasons one may not have full design control over the measurement matrix in practice. Moreover, for practical reasons (including reconstruction speed), one may require that the measurement matrices admit fast multiplications. Such reasons motivated the study of structured random matrices in the compressed sensing context (see, e.g., \cite{CT2006, TWDBB2006, RV2008, Rauhut2008, FR2013}). Perhaps two of the most popular classes of structured random measurement matrices are those drawn from a \emph{bounded orthogonal ensemble} (BOE) or from a \emph{partial circulant ensemble} (PCE).  Indeed, studying the interplay between  structured random matrices and quantization is  one focus of this paper.

Herein, we are interested in the case when $A$ is drawn from a bounded orthogonal ensemble, or from a partial circulant ensemble, both of which we now introduce.

\begin{definition}[Bounded Orthogonal Ensemble (BOE)]\label{def:BOE}
Let $\frac{1}{\sqrt{n}}U \in \C^{n\times n}$ be any unitary matrix with $|U_{ij}| \leq 1$ for all entries $U_{ij}$ of $U$. A matrix $A \in \C^{m\times n}$ is drawn from the bounded orthogonal ensemble associated with $U$ by picking each row $a_i$ of the matrix $A$ uniformly and independently from the set of all rows of $U$. 
\end{definition}
Examples of $U$ include the $n\times n$ discrete Fourier matrix and the $n\times n$ Hadamard matrix.
For physical reasons, BOEs arise naturally in various important applications of compressed sensing, including those involving Magnetic Resonance Imaging (MRI) (see, e.g., \cite{HHL2011, LDP2007, Murphy2012, Vasanawala2010}). An additional practical benefit of BOEs, e.g., those based on the Fourier or Hadamard transforms, is implementation speed.  When viewed as linear operators, such matrices admit fast implementations with a number of additions and multiplications that scales like $n \log{n}$. This is in contrast to the cost of standard matrix vector multiplication which scales like $n^2$. As such, BOEs have also appeared in various results involving fast Johnson-Lindenstrauss embeddings (e.g., \cite{AL2013, AC2009, KW2011}).

\begin{definition}[Partial Circulant Ensemble (PCE)]\label{def:PCE}
For $z\in\C^n,$ the circulant matrix $H_z \in\C^{n\times n}$ is given by its action $H_z x = z*x$, where $*$ denotes convolution. Fix $\Omega \subset\{1, 2, \ldots, n\}$ of size $m$ arbitrarily. A matrix $A$ is drawn from the partial circulant ensemble associated with $\Omega$  by choosing a vector $\sigma$ uniformly at random on $\{-1, 1\}^n$ (i.e., the entries $\sigma_i$ of $\sigma$ are independent $\pm 1$ Bernoulli random variables), and setting the rows of  $A$ to be the rows of $H_\sigma$ indexed by $\Omega$. 
\end{definition}
Partial Circulant Ensembles also appear naturally in various applications of compressed sensing, including those involving radar and wireless channel estimation mainly due to the fact that the action of a circulant matrix on a vector is equivalent to its convolution with a row of the circulant matrix. We refer to \cite{HBRN2010, RRT2012,Romberg2009, FKS2017} for a discussion on these applications. As with BOEs, PCEs also admit fast transformations,  as convolution is essentially diagonalized by the Fourier transform, which admits fast implementation. As such, PCEs also admit an implementation with a number of addition and multiplications that scales with $n\log n$ as opposed to $n^2$. For this reason PCEs, like BOEs, feature prominently in constructions for fast Johnson-Lindenstrauss embeddings. An additional benefit in applications is that the memory required for storing a PCE scales like $n$ (versus $mn$ for an unstructured matrix).

Cand\'es and Tao were the first to study structured random matrices  drawn from a BOE, in the compressed sensing context. In \cite{CT2006}, they show that such matrices satisfy the restricted isometry property with high probability, provided that the number of measurements $m \gtrsim k\log^6(n).$ Later, many researchers (e.g., \cite{RV2008, CGV2013, B2014, HR2016})  improved the dependence of the number of measurements $m$ on $n$, i.e., they improved the logarithmic factors. To our knowledge, the current best result is by Haviv and Regev \cite{HR2016} and achieves a lower bound $m \gtrsim k\log^2(n)$. Similarly, many researchers have studied recovery guarantees for compressed sensing with random measurement matrices  drawn from the partial circulant ensembles (e.g., \cite{HBRN2010, PRT2013, KMR2014, MRW2016}), eventually showing a similar dependence of $m$ on $k$ and $n$, i.e., $m\gtrsim k\log^4{n}$ in \cite{KMR2014}.  Our own results will apply to both BOEs and PCEs (in both the compressed sensing and binary embedding contexts) with the added caveat of randomizing the signs of their rows. It is therefore useful to introduce the following construction. 
\begin{construction}\label{distribution}
An admissible distribution on $m\times n$ matrices corresponds to constructing  $\Phi \in \C^{m\times n}$ as follows.
\begin{enumerate}
\item Draw $A$ either from a bounded orthogonal ensemble associated with a matrix $U$ as in Definition \ref{def:BOE} or from a partial circulant ensemble associated with a set $\Omega$ of size $m$ as in Definition \ref{def:PCE}. Let $a_j$ be the $j$-th row of $A$.
\item Let $\epsilon_j$ be independent $\pm 1$ Bernoulli random variables which are also independent of $A$.
\item Let $\Phi$ be an $m\times n$ matrix whose $j$-th row is $\epsilon_j a_j$.
\end{enumerate}
\end{construction}

\subsection{A brief introduction to quantization}
Consider the linear measurements $y = A x$ of a signal $x$. Quantization is the process by which the measurement vector $y$ is replaced by a vector of elements from a finite set, say $\mathcal{A}$, known as the quantization alphabet. The finiteness of $\mathcal{A}$ allows its elements to be represented with finite binary strings, which in turn allows  digital storage and processing. Indeed, such digital processing is 
necessary in the field of binary embedding \cite{JLBB2013, PV2013, PV2014, YCP2015} and compressed sensing (see, e.g., \cite{CGKRO2015}), where careful selection of $A$ and $\mathcal{A}$ can yield faster, memory-efficient algorithms. In binary embedding one wishes to design a quantization map $\Q:\C^m \to \A^m$ which approximately preserves the distance between two signals (see below). On the other hand, in compressed sensing one typically requires a reconstruction algorithm $\mathcal{R}: \mathcal{A}^m \to \C^n$ such that given the quantized measurements $\Q(A x),$  $\hat{x} = \mathcal{R}(\Q(A x))$ ensures a small reconstruction error $\|x - \hat{x}\|_2.$
Various choices of quantization maps and reconstruction algorithms have been proposed in the context of binary embedding and compressed sensing. These have ranged from the most intuitive quantization approaches, namely memoryless scalar quantization \cite{JLBB2013, PV2013, DJR2017}, to more sophisticated approaches based on  noise shaping quantization, including $\Sigma\Delta$  quantization \cite{GLPSY2013, KSW2012, KSY2014} and distributed noise shaping quantization \cite{Chou2013,H2016}, as well as others \cite{BFNPW2017, JC2013}. Indeed, the afore-mentioned noise shaping quantization techniques combine computational simplicity and the ability to yield favorable error guarantees as a function of the number of measurements. We will use them to obtain our  results and thus provide a necessary, more detailed, overview in Section \ref{sec:quantization}.

\subsection{A brief introduction to binary embeddings}

Low distortion embeddings have played an important role recently in  signal processing and machine learning  as they transform high dimensional signals into low-dimensional ones while preserving geometric properties. The benefits of such embeddings are direct consequences of the dimensionality reduction and they include the potential for reduced storage space and computational time associated with the (embedded) signals. Perhaps one of the most important embeddings is given by the Johnson-Lindenstrauss (JL) lemma \cite{JL} which shows that one can embed $N$ points in $\R^n$ into a $m=O(\delta^{-2}\log N)$-dimensional space and simultaneously preserve pairwise Euclidean distance up to $\delta$-Lipschitz distortion. 
More recently, binary embeddings   have also attracted a lot of attention in the signal processing and machine learning communities (e.g., \cite{WTF2009, RL2009, SH2009, LWKC2011, GL2013, YBKGC2015}), in part due to theoretical interest, and in part motivated by the potential benefits of further reductions in memory requirements and computational time. Roughly speaking, they are nonlinear embeddings that map high dimensional signals into a discrete cube in lower-dimensional space. As each signal is now represented by a short binary code, the memory needed for storing the entire data set is reduced considerably, as is the cost of computation.

To be more precise, let $\mathcal{T}$ be a set of vectors in $\R^n$ and let $\{-1, 1\}^m$ be the binary cube. Given a distance $d_{\T}$ on $\T$, an $\alpha$-binary embedding of $\T$ is a map $f: \T \to \{-1, 1\}^m$ with an associated function $d$ on $\{-1, 1\}^m$ so that 
\[
	\left|d(f(x), f(\tilde{x})) - d_{\T}(x, \tilde{x})\right| \leq \alpha, \qquad \text{for all } x, \tilde{x} \in \T.
\]
Recent works have shown that there exist such $\alpha$-binary embeddings \cite{JLBB2013, PV2013, PV2014, YCP2015}. For $\T$ a finite set of $N$ points in the unit sphere $S^{n-1}$, endowed with the angular distance $d_{\T}(x, \tilde{x}) = \arccos(\|x\|_2^{-1}\|\tilde{x}\|_2^{-1}\inner{x, \tilde{x}})$, these works consider the  map
\begin{equation}\label{sign function}
	f_A(x) = \sign(Ax),
\end{equation}
where $A$ is an $m\times n$ matrix whose entries are independent standard Gaussian random variables. Then with probability exceeding $1 - \eta$, $f_A$ is an $\alpha$-binary embedding for $\T$ into $\{-1, 1\}^m$ associated with the normalized Hamming distance $d_H(q, \tilde{q}) = \frac{1}{2m}\sum_{i}|q_i - \tilde{q}_i|$, provided that ${m\gtrsim \alpha^{-2}\log(N/\eta).}$ 

While achieving the optimal bit complexity for \emph{Hamming distances} as shown in \cite{YCP2015}, the above binary embeddings rely on $A$ being Gaussian. In general, they suffer from the following drawbacks.
\begin{itemize}
\item[] {\bf Drawback (i):} To execute an embedding, one must incur a memory cost  of order $mn$ to store $A$, and must also incur the full computational cost of dense matrix-vector multiplication each time a point is embedded. 
\item[] {\bf Drawback (ii):} The embedding completely loses all magnitude information associated with the original points. That is, all points $c x \in \R^n$ with $c > 0$ embed to the same image. 
\end{itemize}
To address the first point above, researchers have tried to design other binary embeddings which can be implemented more efficiently. For example, such embeddings include circulant binary embedding \cite{YBKGC2015, OTH2017, DS2016}, structured hashed projections \cite{CCBJKL2016}, binary embeddings based on Walsh-Hadamard matrices and partial Gaussian Toeplitz matrices with random column flips \cite{YCP2015}. To our knowledge, all of these results do not address the second point above; they use the sign function \eqref{sign function} (which is an instance of Memoryless Scalar Quantization) to \emph{quantize} the measurements. While  simple, this quantization method cannot yield much better distance preservation when we have more measurements, i.e., when $m$ increases, as discussed in {Section \ref{sec:quantization}}. 
\begin{itemize}
\item[] {\bf Drawback (iii):} At best, the approximation error associated with the embedding distance decays slowly as  $m$, the embedding dimension (and number of bits) increases.
\end{itemize}
We resolve these issues in this paper. 


\subsection{Notation}
A vector is $k$-sparse if it has at most $k$ nonzero entries, i.e., it belongs to the set $${\sparseset := \{ z\in \C^n, |\mbox{supp}(z)| \leq k\}.}$$ Roughly speaking, a compressible vector $x$ is one that is close to a $k$-sparse vector. More precisely, we can use the best $k$-term approximation error of $x$ in $\ell_1$, denoted by $\sigma_k(x)_1 := \min_{v\in\sparseset} \|x - v\|_1,$ to measure how close $x$ is to being $k$-sparse. We use the notation $B_2^n$ and $B_1^n$ to refer to the unit Euclidean ball and the unit $\ell_1$ ball in $\R^n$, respectively. In what follows $A \lesssim B$ means that there is a universal constant $c$ such that $A \leq cB$, and $\gtrsim$ is defined analogously. 
We use the notation $\EE_g$ to indicate expectation taken with respect to the random variable $g$ (i.e., conditional on all other random variables). The operator norm of a matrix $A$, $\|A\|_{p\to q}$, is defined as $\|A\|_{p\to q} := \max\limits_{\|x\|_p=1} \|Ax\|_q,$ though we may simply use $\|A\|$ in place of $\|A\|_{2\to 2}$.

\subsection{Roadmap}

 The rest of the paper is organized as follows.  In Section \ref{sec:contributions} we highlight the main contributions of the paper, and in Section \ref{sec:math_prelim} we review some necessary technical definitions and tools from compressed sensing, embedding theory, and probability. Section \ref{sec:quantization} is dedicated to a review of quantization theory, and in particular the noise-shaping techniques that are instrumental to our results. Sections \ref{sec:binary} and \ref{sec:CSresults} are dedicated to our main theorems and proofs on binary embeddings and quantized compressed sensing. Section \ref{sec:RIPproof} provides a proof of a critical technical result, namely that certain matrices associated with our methods satisfy a restricted isometry property. Finally Section \ref{sec:technical} provides proofs of some technical lemmas that we require along the way.


\section{Contributions}\label{sec:contributions}
Below, we describe our main results as they pertain to binary embeddings and quantized compressed sensing, and we summarize the main technical contribution that allows us to obtain our results. To that end, let $D_{\epsilon}$ be an $n\times n$ diagonal matrix with random signs, $\Phi$ an $m\times n$  matrix as in Construction \ref{distribution}, $\Q$ be a noise-shaping quantizer, i.e., either $\Q_{\Sigma \Delta}$ a $\Sigma\Delta$ quantization or $\Q_{\beta}$ a distributed noise-shaping quantization (as in Sections \ref{sec:Sigma Delta} and \ref{sec:distributed}). Let $\widehat{V}$ be the associated linear operator defined in \eqref{Vtilde} below and $\widetilde{V} = (9/8)\widehat{V}.$ 


\subsection{Contributions related to binary embeddings}
 \begin{algorithm}[tb]
\caption{Fast binary embeddings}
\label{alg:binary_embedding}
\begin{algorithmic}[1]
    \REQUIRE  \hfill \vspace{-5pt}
\begin{itemize}\itemsep -5pt
\item[(i)] a matrix $\Phi\in \R^{m\times n}$ drawn according to Construction \ref{distribution}, 
\item[(ii)]a stable noise-shaping quantization scheme $\mathcal{Q}$ ($\Sigma\Delta$, as in Section \ref{sec:Sigma Delta} or "distributed", as in Section \ref{sec:distributed}) with alphabet $\mathcal{A}=\{\pm 1\}$ and associated matrix $\widetilde{V}\in \R^{p\times m}$ as in Definition \ref{SDcond} or \ref{Beta_cond},
\item[(iii)]a diagonal matrix $D_\epsilon \in \R^{n\times n}$ with random signs,
\item[(iv)] any points $a, b \in \mathcal{T}\subset{\R^n}$ with $\|a\|_1 \leq 1, \|b\|_1 \leq 1$.
\end{itemize}
    \ENSURE $d_{\widetilde{V}}(a,b) \approx \|a-b\|_2$.
    \STATE Compute $f(a) = \mathcal{Q} (\frac{8\Phi D_\epsilon a}{9}) \in \{ \pm 1 \}^m$ and $f(b) = \mathcal{Q} (\frac{8\Phi D_\epsilon b}{9}) \in \{\pm 1\}^m$.
    \STATE  Return $d_{\widetilde{V}}(a,b) = \| \widetilde{V}(f(a)-f(b)) \|_2$.
    \end{algorithmic}
\end{algorithm}

Let $\mathcal{T}\subset \R^n$ be a subset of the unit $\ell_1$-ball. In Algorithm \ref{alg:binary_embedding} we construct quasi-isometric (i.e., distance preserving) embeddings between  $(\mathcal{T}, \|\cdot\|_2)$ and the binary  cube $\{-1,+1\}^m$ endowed with the pseudometric
$$d_{\widetilde{V}}(\tilde{q},q) := \| \widetilde{V}(\tilde{q} -{q}) \|_2.$$

As they rely on Construction \ref{distribution} our embeddings $f: \mathcal{T} \to \{\pm 1\}^m$  support fast computation, and despite their highly quantized non-linear nature, they perform as well as linear Johnson-Lindenstrauss methods up to an additive error that decays exponentially (or polynomially) in $m$. \smallskip

{\bf \noindent When $\mathcal{T}$ is finite} we show that with high probability 
and for prescribed distortion $\alpha$
\begin{align*} m \gtrsim \frac{\log{(|\mathcal{T}|)} \log^4 n}{\alpha^2} \quad \implies  \quad \big|d_{\widetilde{V}}(f(x), f(\tilde{x})) - \|x - \tilde{x}\|_2\big| & \leq {\alpha}\|x - \tilde{x}\|_2 + c\eta(m), \\ &\quad\text{where} \ \eta(m) \xrightarrow[m \rightarrow \infty]{} 0. \end{align*}
Above, $\eta(m)$ decays polynomially fast in $m$ (when $\mathcal{Q}$ is a $\Sigma\Delta$ quantizer), or exponentially fast (when $\mathcal{Q}$ is a distributed noise shaping quantizer). In fact, we prove a more general version of the above result, which applies to infinite sets $\mathcal{T}$. 

{\bf \noindent When $\mathcal{T}$ is arbitrary (possibly infinite)} we show that with high probability  and for prescribed distortion $\alpha$
\[ m \gtrsim \frac{\log^4 n}{\alpha^2}\cdot\frac{\omega(\mathcal{T})^2}{\rad(\mathcal{T})^2} \quad \implies \quad \big|d_{\widetilde{V}}(f(x), f(\tilde{x})) - \|x - \tilde{x}\|_2\big| \leq \max(\sqrt{\alpha},\alpha) \rad(\mathcal{T})+ c\eta(m)\]
where $\eta(m)$ is as before and where $\omega(\mathcal{T})$ and $\rad(\mathcal{T})$ denote the Gaussian width and Euclidean radius of  $\mathcal{T}$, defined in Section \ref{sec:math_prelim}.

We remark that our results are even more general in the sense that we can embed into any finite set $\mathcal{A}^m$, where $\mathcal{A}$ is the quantization alphabet used by the noise-shaping quantization scheme (see Section \ref{sec:Sigma Delta} for details). In particular, this means that one may choose an alphabet $\mathcal{A}$ that contains more than two elements and obtain improved constants in the additive part (i.e., $c\eta(m)$) of our error bounds. Thus, in effect our results amount to fast quantized Johnson-Lindenstrauss embeddings.
%
%
%
%


 \subsection{Contributions related to the quantization of compressed sensing measurements arising from random structured matrices}
  \begin{algorithm}[tb]
\caption{Fast Quantized Compressed Sensing}
\label{alg:compressed_sensing}
\begin{algorithmic}[1]
    \REQUIRE  \hfill \vspace{-5pt}
\begin{itemize}\itemsep -5pt
\item[(i)]a matrix $\Phi\in \C^{m\times n}$ drawn according to Construction \ref{distribution}, 
\item[(ii)] A noise-shaping quantization scheme $\mathcal{Q}$ (as in Sections \ref{sec:Sigma Delta} or \ref{sec:distributed}) with associated matrix $\widehat{V}$ as in \eqref{Vtilde}, quantization alphabet $\A$, and appropriate error parameter $\eta$.
\item[(iii)] any point $x \in \Sigma_k^n$ with $\|x\|_2 \leq 1$.
\end{itemize}
    \ENSURE Accurate reconstruction from quantized compressed sensing measurements 
    \STATE Quantize the compressed sensing measurements: $q = \mathcal{Q}(\Phi x)$.
    \STATE Reconstruct an approximation 
    \begin{equation}\label{l1-min variant}
    \hat{x} :=\arg\min_z \|z\|_1  \text{\quad subject to \quad} \|\widehat{V}\Phi z-\widehat{V}q \|_2 \leq \eta.
    \end{equation}
    \end{algorithmic}
\end{algorithm}

 We provide the first non-trivial quantization results, with optimal (linear) dependence on sparsity, in the setting of compressed sensing with bounded orthonormal systems. Our results also apply to partial circulant ensembles and our approach is summarized in Algorithm \ref{alg:compressed_sensing}. 

 We show that noise-shaping quantization approaches, namely $\Sigma\Delta$ quantization (e.g., \cite{GLPSY2013, KSW2012, CGKRO2015}) and distributed noise-shaping quantization \cite{Chou2013} significantly outperform the naive (but commonly used) quantization approach, memoryless scalar quantization.  They yield polynomial and exponential error decay, respectively, as a function of the number of measurements. In contrast, the error decay of memoryless scalar quantization can be no better than linear \cite{GVT1998, BJKS2015}. 

Noise shaping quantization techniques produce vectors $q$ whose entries belong to a finite set. Equally importantly,  $q$ is related to the vector of measurements $y=Ax$ via a noise-shaping relation of the form 
\[ y - q = H u, \]
where $H$ is a so-called noise shaping operator and $u$ is a vector with bounded entries. 
Recalling that in compressed sensing, one wishes to recover the (approximately) sparse vector $x$, our reconstruction approach is to solve the optimization problem 
\[ \min_z \|z\|_1  \text{\quad subject to \quad} \|\widehat{V}\Phi z-\widehat{V}q \|_2 \leq \eta. \]
Above, we choose the matrix $\widehat{V}$ and we set the scalar $\eta$ depending on the quantization scheme. As a result we can show that 
\[ 
m \gtrsim k \log^4 n  \quad \implies \quad 	\|x -  \hat{x}\|_2 \lesssim  \eta(m) +  \frac{\sigma_k(x)_1}{\sqrt{k}} 
\]
with high probability, for all $x\in \R^n$. Above, the estimate $\hat{x}$ is produced by Algorithm \ref{alg:compressed_sensing} and as before $\eta(m)$ decays polynomially fast in $m$ (when $\mathcal{Q}$ is a $\Sigma\Delta$ quantizer), or exponentially fast (when $\mathcal{Q}$ is a distributed noise shaping quantizer). 

\subsection{Technical considerations}
The choice of $\widehat{V}$ is critical to the success of our reconstruction, as we require $\widehat{V}\Phi$ to satisfy a restricted isometry property \cite{CTR2006, CT2006}, see Section \ref{sec:quantization}. Once $\widehat{V}\Phi$ satisfies this property, the reconstruction step of Algorithm \ref{alg:compressed_sensing} yields an estimate ${\hat{x}}$ with $\|\hat{x}-x\|_2 \leq C\eta$ by standard compressed sensing results (see, e.g. \cite{FR2013}). Thus, the main technical challenge is to construct $\widehat{V}$ in such a way that $\widehat{V}\Phi$ satisfies the restricted isometry property. This is non-trivial due to the dependencies implicit in the random model which generates bounded orthonormal systems and partial circulant matrices. Simultaneously, the scalar $\eta$ that we choose must be small enough to yield the desired polynomial or exponential decay rates in the number of measurements $m$. The choice of $\widehat{V}$ that achieves our goals is 
\begin{equation}\label{Vtilde}
\widehat{V} = \frac{1}{\|v\|_2 \sqrt{p}} (I_p \otimes v), 
\end{equation}
where $v = (v_1, \ldots, v_{\lambda})\in \R^\lambda$ depends on the quantization scheme  ($v$ is given explicitly in Sections \ref{sec:Sigma Delta} and \ref{sec:distributed}), $I_p$ is the $p\times p$ identity matrix and $\lambda p = m$. Here, $\otimes$ denotes the Kronecker product. We show that with high probability, for matrices $\Phi$ drawn according to Construction \ref{distribution}  
\[ m \gtrsim \frac{k \log^4 n}{\alpha^2}  \quad \implies \quad \sup_{x \in T_k}\left|  \|\widehat{V}\Phi x\|_2^2 - \|x\|_2^2\right| \leq \alpha, \]
where $T_k$ denotes the set of $k$-sparse signals on the unit sphere. That is, we show that $\widehat{V}\Phi$ has the restricted isometry property. This restricted isometry property is also crucial for our results on binary embeddings.

\section{Definitions and Tools}\label{sec:math_prelim}
\subsection{The restricted isometry property and its implications}
We now recall the restricted isometry property of matrices along with some of its implications to compressed sensing, as well as theorems showing that BOEs and PCEs satisfy it with high probability if $m\gtrsim k \operatorname{polylog}n$.
\begin{definition}[The restricted isometry property (RIP)]\label{def:RIP}
We say that a matrix $A\in \C^{m\times n}$ satisfies the $(\delta_k,k)$-RIP if for every $x\in \Sigma_k^n$, we have 
\[ (1-\delta_k)^2\|x\|^2\leq  \|Ax \|_2^2 \leq (1+\delta_k)^2\|x\|^2.  \]
\end{definition}

It is by now well-known that, if a matrix $A$ satisfies $(\delta_k, k)$-RIP with an appropriate small constant $\delta_k$, $\ell_1$-minimization is robust. We present one such result as a representative.

\begin{theorem}[See Theorem 6.12 in \cite{FR2013}]\label{FR Theorem}
Suppose that the matrix $A\in\C^{m\times n}$ satisfies the $(\delta_{2k},k)$-RIP with $\delta_{2k}(A) < \frac{4}{\sqrt{41}}.$ Then for any $x\in\C^n$ and $b\in\C^m$ with $\|A x - b\|_2 \leq \eta,$ a solution $x^\#$ of 
\[
	\min_{z\in\C^n} \|z\|_1 \mbox{ subject to } \|A z - b\|_2 \leq \eta
\]
approximates the vector $x$ with 
\begin{align*}
\|x - x^\#\|_2 &\leq C\eta + \frac{D}{\sqrt{k}}\sigma_k(x)_1,
\end{align*}
where the constants $C, D > 0$ depend only on $\delta_{2k}.$
\end{theorem}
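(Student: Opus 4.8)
The plan is to deduce the statement from the by-now-classical principle that the restricted isometry property implies a robust form of the null space property, which in turn guarantees stable and robust recovery by $\ell_1$-minimization. Concretely, I would proceed in two steps: (1) show that the $(\delta_{2k},k)$-RIP with $\delta_{2k}$ below the stated threshold implies the $\ell_2$-robust null space property of order $k$ --- namely, that there exist $\rho\in(0,1)$ and $\tau>0$, depending only on $\delta_{2k}$, such that
\[
\|v_S\|_2 \;\le\; \frac{\rho}{\sqrt{k}}\,\|v_{S^c}\|_1 \;+\; \tau\,\|Av\|_2 \qquad \text{for all } v\in\C^n,\ |S|\le k;
\]
and (2) show that this robust null space property implies the error bound for any minimizer $x^\#$, with $C,D$ expressed through $\rho$ and $\tau$.

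For step (1) I would use the standard consequence of the RIP that $A$ acts almost orthogonally on sparse vectors with disjoint supports: $|\langle Au,Au'\rangle|\le\delta_{2k}\|u\|_2\|u'\|_2$ whenever $u,u'$ have disjoint supports with $|\mathrm{supp}(u)|+|\mathrm{supp}(u')|\le 2k$. Fixing $v$ and $S$, one may assume $S$ indexes the $k$ largest-magnitude entries of $v$ and partition $S^c$ into consecutive blocks $S_1,S_2,\dots$ of size $k$ in order of decreasing magnitude; since each entry of $v_{S_j}$ is bounded by the average magnitude on $S_{j-1}$, this yields $\sum_{j\ge1}\|v_{S_j}\|_2\le k^{-1/2}\|v_{S^c}\|_1$. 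Writing $Av_{S_0}=Av-\sum_{j\ge1}Av_{S_j}$, expanding an inner product such as $\langle Av_{S_0\cup S_1},Av_{S_0\cup S_1}\rangle$, and invoking the near-orthogonality estimate on the cross terms then bounds $\|v_{S_0}\|_2$ by a multiple of $\|Av\|_2$ plus a multiple of $\sum_{j\ge1}\|v_{S_j}\|_2$, which gives the displayed inequality. A careless version of this argument only reaches $\delta_{2k}<\sqrt2-1$; attaining the sharp threshold $\delta_{2k}<4/\sqrt{41}$ claimed here requires the more careful bookkeeping of Foucart used in the proof of Theorem~6.12 of \cite{FR2013}, which avoids the lossy inequality $\|v_{S_0}\|_2\le\|v_{S_0\cup S_1}\|_2$ and instead argues through a sparse convex-combination representation of $v_{S_0^c}/\|v_{S_0^c}\|_1$. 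I expect this refinement to be the only genuinely delicate point; everything else is routine.

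For step (2), set $h=x-x^\#$ and let $S$ index the $k$ largest-magnitude entries of $x$. Since $x$ is itself feasible for the program ($\|Ax-b\|_2\le\eta$) and $x^\#$ is a minimizer, $\|x^\#\|_1\le\|x\|_1$; splitting both $\ell_1$ norms over $S$ and $S^c$ gives the cone-type inequality $\|h_{S^c}\|_1\le\|h_S\|_1+2\sigma_k(x)_1$. Also $\|Ah\|_2\le\|Ax-b\|_2+\|Ax^\#-b\|_2\le 2\eta$. Applying the robust null space property to $h$ with this $S$, using $\|h_S\|_1\le\sqrt k\,\|h_S\|_2$ together with the cone inequality, and absorbing the resulting $\rho\|h_S\|_2$ term (possible since $\rho<1$) yields $\|h_S\|_2 \le \frac{2\rho}{1-\rho}\,\frac{\sigma_k(x)_1}{\sqrt k}+\frac{2\tau}{1-\rho}\,\eta$. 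A further, routine application of the robust null space property --- now exploiting that $h$ restricted to the complement of its own top-$k$ support is nearly flat, so that $\|h_{S^c}\|_2$ is controlled by $k^{-1/2}\|h_{S^c}\|_1$ and a term of the form already bounded --- controls $\|h_{S^c}\|_2$ by a quantity of the same type, and $\|h\|_2\le\|h_S\|_2+\|h_{S^c}\|_2$ then produces $\|x-x^\#\|_2\le C\eta+\frac{D}{\sqrt k}\sigma_k(x)_1$ with $C,D$ depending only on $\delta_{2k}$ through $\rho$ and $\tau$. As noted, the main obstacle is pushing the admissible range of $\delta_{2k}$ up to $4/\sqrt{41}$ in step (1); the rest is bookkeeping.
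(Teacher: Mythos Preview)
The paper does not prove this theorem; it is quoted verbatim as a known result from \cite{FR2013} (Theorem~6.12) and used as a black box in the arguments of Sections~\ref{sec:CSresults}. There is therefore no ``paper's own proof'' to compare against.

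That said, your sketch is the standard route to this result and matches the argument in \cite{FR2013}: RIP with a sufficiently small constant implies the $\ell_2$-robust null space property, which in turn yields the stable and robust $\ell_1$-recovery bound. Your identification of the delicate point is accurate: the naive block-decomposition argument only reaches $\delta_{2k}<\sqrt{2}-1$, and pushing to $4/\sqrt{41}$ requires Foucart's sparse convex-combination trick (the ``sparse representation of the polytope'' lemma). Step~(2) is routine once step~(1) is in hand. Nothing in your outline is wrong or missing for the purposes of this paper, which only needs the theorem as a quotable fact.
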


Indeed several ensembles of random matrices satisfy the RIP and in this paper we are particularly interested in bounded orthonormal ensembles and partial circulant matrices for which we recall the following results. 

\begin{theorem}[See, e.g., \cite{RV2008, FR2013}]\label{thm:RV}
Let $A\in\C^{m\times n}$ be drawn from the bounded orthogonal ensemble. Fix $\alpha \in (0,1).$ Then there exists a universal constant $C >0$ such that
\begin{equation}\label{BOS-RIP}
	\EE \sup_{x\in T_k} \left| \frac{1}{m}\|Ax\|_2^2 - \|x\|_2^2 \right| \leq \alpha,
\end{equation}
provided $\frac{m}{\log(9m)} \geq C\frac{k\log^2(4k)\log(8n)}{\alpha^2}$.
\end{theorem}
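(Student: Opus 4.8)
The plan is to recognize the quantity as the expected supremum of a centered empirical process and to bound it by chaining. First I would write, for $x\in T_k$,
\[
\frac{1}{m}\|Ax\|_2^2 - \|x\|_2^2 \;=\; \frac{1}{m}\sum_{j=1}^m\bigl(|\langle a_j,x\rangle|^2 - \|x\|_2^2\bigr),
\]
and note that each summand is centered: since $\tfrac{1}{\sqrt n}U$ is unitary, $\EE_{a_j}|\langle a_j,x\rangle|^2 = \tfrac1n\sum_{i}|\langle u_i,x\rangle|^2 = \tfrac1n x^{*}U^{*}Ux = \|x\|_2^2$. Hence $E:=\EE\sup_{x\in T_k}\bigl|\tfrac1m\|Ax\|_2^2 - \|x\|_2^2\bigr|$ is the expected sup of a sum of i.i.d.\ centered, bounded functions of the rows $a_j$, and the standard symmetrization inequality yields $E \le 2\,\EE_A\EE_\epsilon\sup_{x\in T_k}\bigl|\tfrac1m\sum_{j}\epsilon_j|\langle a_j,x\rangle|^2\bigr|$ with $\epsilon_j$ independent Rademacher signs.

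Next, conditionally on $A$, the process $x\mapsto\tfrac1m\sum_j\epsilon_j|\langle a_j,x\rangle|^2$ is sub-Gaussian with respect to the random semimetric $\rho_A(x,y) = \bigl(\sum_j(|\langle a_j,x\rangle|^2 - |\langle a_j,y\rangle|^2)^2\bigr)^{1/2}$, so Dudley's entropy integral bounds $\EE_\epsilon\sup$ by $\tfrac Cm\int_0^\infty\sqrt{\log N(T_k,\rho_A,u)}\,du$. From $\bigl||\langle a_j,x\rangle|^2 - |\langle a_j,y\rangle|^2\bigr|\le |\langle a_j,x-y\rangle|\,(|\langle a_j,x\rangle| + |\langle a_j,y\rangle|)$ and summing over $j$, I would deduce $\rho_A(x,y)\lesssim\sqrt{m\,R_k}\,\|A(x-y)\|_\infty$, where $R_k:=\sup_{x\in T_k}\tfrac1m\|Ax\|_2^2$; this converts the Dudley integral into $\sqrt{R_k/m}\int_0^\infty\sqrt{\log N(T_k,\,\|A\cdot\|_\infty,\,s)}\,ds$, with the $\ell_\infty^m$-type norm $\|Az\|_\infty=\max_{j\le m}|\langle a_j,z\rangle|$. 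The covering numbers in this norm are then estimated by the minimum of two bounds that are \emph{deterministic} (i.e.\ independent of $A$) because $|a_{ji}|\le 1$: a volumetric bound $\log N\lesssim k\log(en/k) + k\log(1+\sqrt{k}/s)$ at small scales (using $\|Az\|_\infty\le\sqrt{2k}\,\|z\|_2$ on $2k$-sparse $z$ together with a union over the $\binom nk$ supports), and a Maurey empirical-method bound $\log N\lesssim \tfrac{k\log(2n)\log(2m)}{s^2}$ at larger scales (using $T_k\subset\sqrt k\,\mathrm{conv}\{\pm e_i\}$ and $\|Ae_i\|_\infty\le 1$, where the extra $\log(2m)$ is the unavoidable cost of Maurey's argument for an $\ell_\infty$ norm over $m$ rows — this is precisely where the $\log(9m)$ in the statement comes from).

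Integrating, I would split the range at a fixed scale $s_0$ balancing the two estimates and terminate at the $\|A\cdot\|_\infty$-diameter $\lesssim\sqrt k$ of $T_k$; the $\int ds/s$ tail of the Maurey bound contributes one factor $\log(4k)$ and the $k\log(1+\sqrt k/s)$ term of the volumetric bound a second, giving $\int_0^\infty\sqrt{\log N}\,ds\lesssim \log(4k)\sqrt{k\log(8n)\log(9m)}$. Assembling the pieces and using $R_k\le 1 + \sup_{x\in T_k}\bigl|\tfrac1m\|Ax\|_2^2 - 1\bigr|$ together with Jensen's inequality gives
\[
E \;\lesssim\; \sqrt{\frac{k\log^2(4k)\log(8n)\log(9m)}{m}}\;\EE_A\sqrt{R_k}\;\le\; \sqrt{\frac{k\log^2(4k)\log(8n)\log(9m)}{m}}\;\sqrt{1+E}.
\]
Writing $\theta$ for the square-root prefactor, the relation $E\le\theta\sqrt{1+E}$ forces $E\le\theta^2+\theta\le 2\theta$ whenever $\theta\le 1$, so $E\le\alpha$ follows once $\theta\le\alpha/2$, i.e.\ once $\tfrac{m}{\log(9m)}\ge C\,\tfrac{k\log^2(4k)\log(8n)}{\alpha^2}$ for a suitable universal $C$.

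The main obstacle is this last, self-bounding step: preventing the random envelope $R_k=\sup_x\tfrac1m\|Ax\|_2^2$ appearing in the chaining estimate from costing a spurious factor of $k$. The naive bound $R_k\le k$ would only deliver $m\gtrsim k^2\,\mathrm{polylog}$; the fix is the observation that $R_k$ is itself controlled by $1+E$ plus a fluctuation, which turns the chaining estimate into an inequality $E$ satisfies with itself. The other delicate point is tracking the logarithmic exponents — keeping the $k$-dependence down to $\log^2(4k)$ and correctly extracting the $\log(9m)$ from the $\ell_\infty$ Maurey bound — which is exactly where one must be careful and where this estimate improves on the original Rudelson--Vershynin bound.
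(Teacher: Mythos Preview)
The paper does not supply its own proof of this theorem; it is quoted as a known result from \cite{RV2008, FR2013}. Your sketch is the standard Rudelson--Vershynin argument as presented in those references (symmetrization, Dudley chaining in the random $\|A\cdot\|_\infty$ seminorm, the volumetric/Maurey covering-number dichotomy producing the $\log^2(4k)\log(8n)\log(9m)$ factor, and the self-bounding $E\le\theta\sqrt{1+E}$ step), and it is correct.
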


\begin{theorem}[See, e.g., \cite{KMR2014}]\label{thm:RIP-KMR}
Let $A\in\R^{m\times n}$ be drawn from the partial circulant ensemble associated with an index set $\Omega$ of size $m$. Fix $\alpha \in (0, 1)$. Then there exists a universal constant $C > 0$ such that
\begin{equation}\label{Circulant-RIP}
	\EE \sup_{x\in T_k} \left| \frac{1}{m}\|Ax\|_2^2 - \|x\|_2^2 \right| \leq \alpha,
\end{equation}
provided $m \geq C\frac{k\log^2(k)\log^2(n)}{\alpha^2}$.
\end{theorem}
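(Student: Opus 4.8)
The plan is to recognize the left-hand side as the expected supremum of a second-order chaos process and then apply the generic-chaining bound for such processes. First I would rewrite the measurements: letting $R_\Omega\in\R^{m\times n}$ denote restriction to the coordinates in $\Omega$ and $H_x$ the circulant matrix generated by $x$, commutativity of convolution gives $Ax = R_\Omega H_\sigma x = R_\Omega H_x\sigma$, so with $V_x := \tfrac1{\sqrt m}R_\Omega H_x$ one has $\tfrac1m\|Ax\|_2^2 = \|V_x\sigma\|_2^2 = \inner{\sigma, V_x^\ast V_x\sigma}$. Since $\sigma$ is a Rademacher vector (independent, mean-zero, variance-one, subgaussian entries), $\EE_\sigma\inner{\sigma, V_x^\ast V_x\sigma} = \operatorname{tr}(V_x^\ast V_x) = \|V_x\|_F^2$, and because each row of $H_x$ is a cyclic permutation of $x$ this equals $\|x\|_2^2$. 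Hence
\[
\sup_{x\in T_k}\Abs{\tfrac1m\|Ax\|_2^2 - \|x\|_2^2} = \sup_{B\in\mathcal A}\Abs{\,\|B\sigma\|_2^2 - \EE\|B\sigma\|_2^2\,}, \qquad \mathcal A := \set{V_x : x\in T_k},
\]
which is exactly a chaos process indexed by the matrix set $\mathcal A$.

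Next I would invoke the master bound for suprema of chaos processes — the Krahmer–Mendelson–Rauhut inequality that underlies the cited theorem — which gives
\[
\EE\sup_{B\in\mathcal A}\Abs{\,\|B\sigma\|_2^2 - \EE\|B\sigma\|_2^2\,} \lesssim \gamma_2(\mathcal A)^2 + \gamma_2(\mathcal A)\,d_F(\mathcal A) + d_F(\mathcal A)\,d_S(\mathcal A),
\]
where $\gamma_2(\mathcal A) := \gamma_2(\mathcal A, \|\cdot\|_{2\to 2})$ is Talagrand's functional in the operator-norm metric, $d_F(\mathcal A) := \sup_{B\in\mathcal A}\|B\|_F$, and $d_S(\mathcal A) := \sup_{B\in\mathcal A}\|B\|_{2\to 2}$. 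The two diameters are immediate: the row-permutation structure gives $d_F(\mathcal A) = \sup_{x\in T_k}\|x\|_2 = 1$; and since $H_x$ is diagonalized by the discrete Fourier transform, with eigenvalues equal to the (unnormalized) Fourier coefficients of $x$, one has $\|V_x\|_{2\to 2}\le\tfrac1{\sqrt m}\|\widehat x\|_\infty\le\tfrac1{\sqrt m}\|x\|_1\le\sqrt{k/m}$ for $x\in T_k$, so $d_S(\mathcal A)\le\sqrt{k/m}$.

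The crux — and the step I expect to be the main obstacle — is the bound $\gamma_2(\mathcal A,\|\cdot\|_{2\to 2})\lesssim \sqrt k\,\log k\,\log n/\sqrt m$; the real difficulty is tracking the logarithmic exponents tightly enough that the final requirement on $m$ comes out as $k\log^2 k\log^2 n/\alpha^2$ rather than a higher power of $\log n$. I would estimate $\gamma_2$ through Dudley's entropy integral,
\[
\gamma_2(\mathcal A,\|\cdot\|_{2\to 2}) \lesssim \int_0^{d_S(\mathcal A)}\sqrt{\log N(\mathcal A,\|\cdot\|_{2\to 2},u)}\,\rd u,
\]
using the Lipschitz estimate $\|V_x - V_{x'}\|_{2\to 2}\le\tfrac1{\sqrt m}\|\widehat{x-x'}\|_\infty$, and I would bound the covering numbers by a multiscale combination of two ingredients: at the finest scales, a volumetric estimate exploiting that $T_k$ is a union of $\binom{n}{k}$ Euclidean unit spheres of dimension at most $k$; and at moderate scales, the empirical method of Maurey — since $T_k\subset\sqrt k\,B_1^n\cap B_2^n$, each $V_x$ is a convex combination of the rescaled atoms $\pm\sqrt k\,\tfrac1{\sqrt m}R_\Omega H_{e_i}$, each of operator norm at most $\sqrt{k/m}$, so a short average of independently sampled atoms approximates $V_x$ in operator norm, with the needed concentration supplied by a matrix Bernstein (equivalently, noncommutative Khintchine) bound for the operator norm of such partial-circulant averages. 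Carrying out the chaining with this sharp logarithmic bookkeeping — in particular beating the cruder $\sqrt k\,\log^2 n/\sqrt m$ that a single-scale Maurey argument would give — is where essentially all of the technical work resides.

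Finally I would assemble the pieces: substituting $\gamma_2(\mathcal A)\lesssim \sqrt k\,\log k\,\log n/\sqrt m$, $d_F(\mathcal A)=1$, and $d_S(\mathcal A)\le\sqrt{k/m}$ into the chaos inequality yields
\[
\EE\sup_{x\in T_k}\Abs{\tfrac1m\|Ax\|_2^2 - \|x\|_2^2} \lesssim \frac{k\log^2 k\,\log^2 n}{m} + \frac{\sqrt k\,\log k\,\log n}{\sqrt m} + \sqrt{\frac{k}{m}}.
\]
Since $\alpha\in(0,1)$ and $\log k,\log n\ge 1$, all three terms are at most $\alpha/3$ as soon as $m\ge C\,k\log^2(k)\log^2(n)/\alpha^2$ for a suitable universal constant $C$ — with the middle term $\gamma_2(\mathcal A)\,d_F(\mathcal A)$ the binding one — which is exactly the stated hypothesis.
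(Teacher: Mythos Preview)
The paper does not prove Theorem~\ref{thm:RIP-KMR}; it is quoted from \cite{KMR2014} and used as a black box throughout. Your sketch is a correct and faithful outline of the argument in that reference: the commutativity trick $Ax = R_\Omega H_x\sigma$ to expose a Rademacher chaos, the chaos-process expectation bound (the paper records a two-term version as Theorem~\ref{thm:KMR}; your extra $d_F d_S$ term is in any case dominated by $\gamma_2 d_F$ here), the immediate computations of $d_F(\mathcal A)=1$ and $d_S(\mathcal A)\le\sqrt{k/m}$, and the two-regime covering estimate for $\gamma_2$ via Dudley's integral --- volumetric at fine scales, Maurey's empirical method with a matrix-Bernstein/noncommutative-Khintchine input at coarse scales --- are exactly the ingredients of the cited proof, and you correctly identify the sharp $\gamma_2$ bound as the place where all the work lies.
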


\subsection{Embedding tools}
We begin with some useful definitions of  \emph{quasi-isometric} embeddings, and of Gaussian widths of sets.

\begin{definition}[see, e.g., \cite{BGBL2008}]
A function $f: \mathcal{X} \to\mathcal{Y}$ is called a \emph{quasi-isometry} between metric spaces $(\mathcal{X}, d_{\mathcal{X}})$ and $(\mathcal{Y}, d_{\mathcal{Y}})$ if there exist $C > 0$ and $D \geq 0$ such that
\[
	\frac{1}{C}d_{\X}(x, \tilde{x}) - D \leq d_{\mathcal{Y}}(f(x), f(\tilde{x})) \leq Cd_{\X}(x,\tilde{x}) + D
\]
for all $x, \tilde{x} \in \X.$
\end{definition}
\begin{remark} We will relax the above definition by only requiring that $d_\mathcal{Y}$ be a pseudo-metric, so we require $d_\mathcal{Y}(x,y)=d_\mathcal{Y}(y,x)$ and we require $d_\mathcal{Y}$ to respect the triangle inequality, but we allow $d_\mathcal{Y}(x,y)=0$ for $x\neq y$. 
\end{remark}

\begin{definition}\label{def: pseudometric}
 Let $\mathcal{C} = \{-1,1\}^m$ be a discrete cube in $\R^m$ and let $V\in \R^{p\times m}$ with $p\leq m$. We define a \emph{pseudo-metric} $d_{V}$ on $\mathcal{C}$ by \[ d_{V}(q, \tilde{q}) = \|V(q - \tilde{q})\|_2.\] 
\end{definition}

\begin{definition}\label{def:gaussian_width}
For a set $\mathcal{T} \subset \R^n$, the \emph{Gaussian mean width} $\omega(\mathcal{T})$ is defined by
\[
	\omega(\mathcal{T}) := \EE[\sup_{v\in\mathcal{T}} \langle v, g\rangle]
\]
where $g$ is a standard Gaussian random vector in $\R^n.$ Moreover, \[\rad(\mathcal{T}) = \sup_{v\in \mathcal{T}}\|v\|_2\] is the maximum Euclidean norm of a vector in $\mathcal{T}$.
\end{definition}

Krahmer and Ward \cite{KW2011} showed that matrices satisfying the RIP can be used to create Johnson-Lindenstrauss, i.e., near-isometric, embeddings of finite sets. 
\begin{theorem}[\cite{KW2011}]\label{thm:KrahmerWard}
Fix $\nu > 0$ and $\alpha \in (0, 1)$. Consider a finite set $\T \subset\R^n.$ Set ${k \geq 40(\log(40|\T|) + \nu),}$ and suppose that $\Psi\in\R^{p\times n}$ satisfies $(\alpha/4, k)$-RIP. Let $D_\epsilon \in\R^{n\times n}$ be a random sign diagonal matrix. Then, with probability exceeding $1 - e^{-\nu}$, the matrix $\Psi D_\epsilon$ satisfies
\[ 
	\big|\|\Psi D_\epsilon\, x\|_2^2 - \|x\|_2^2\big| \leq \alpha\|x\|_2^2
\]
simultaneously for all $x\in\T$.
\end{theorem}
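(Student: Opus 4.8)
Since the inequality is homogeneous in $x$, it suffices to fix one $x\in\T$ with $\|x\|_2=1$ and show that the ``bad'' event $\{\,|\,\|\Psi D_\epsilon x\|_2^2-1\,|>\alpha\,\}$ has probability at most $e^{-\nu}/|\T|$; a union bound over the $|\T|$ points then gives the claim, and the hypothesis $k\ge 40(\log(40|\T|)+\nu)$ is exactly what makes this per-point estimate go through. Write $\delta:=\alpha/4$, let $\epsilon$ be the Rademacher vector on the diagonal of $D_\epsilon$, and use $D_\epsilon x=\operatorname{diag}(x)\,\epsilon$ so that $\|\Psi D_\epsilon x\|_2^2=\epsilon^{\!\top}B\epsilon$ with $B=\operatorname{diag}(x)\Psi^{\!\top}\Psi\operatorname{diag}(x)\succeq 0$. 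Split $x=x_S+x_{S^c}$, where $S$ indexes the $k$ coordinates of largest magnitude; then $x_S$ is $k$-sparse, $\|x_{S^c}\|_\infty\le \|x\|_2/\sqrt k=1/\sqrt k$, and expanding the square yields
\[
\|\Psi D_\epsilon x\|_2^2-1=\big(\|\Psi D_\epsilon x_S\|_2^2-\|x_S\|_2^2\big)+2\langle\Psi D_\epsilon x_S,\Psi D_\epsilon x_{S^c}\rangle+\big(\|\Psi D_\epsilon x_{S^c}\|_2^2-\|x_{S^c}\|_2^2\big)=:(\mathrm I)+(\mathrm{II})+(\mathrm{III}).
\]

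Term $(\mathrm I)$ is handled deterministically: $D_\epsilon x_S$ is $k$-sparse with the same $\ell_2$ norm as $x_S$, so the $(\delta,k)$-RIP (in the squared form of Definition \ref{def:RIP}) gives $|(\mathrm I)|\le\big((1+\delta)^2-1\big)\|x_S\|_2^2$; likewise, applying the RIP to $1$-sparse vectors, $\big|\EE_\epsilon(\mathrm{III})\big|=\big|\sum_{j\notin S}x_j^2(\|\Psi e_j\|_2^2-1)\big|\le\big((1+\delta)^2-1\big)\|x_{S^c}\|_2^2$. Since $\|x_S\|_2^2+\|x_{S^c}\|_2^2=1$ these two deterministic contributions total at most $(1+\delta)^2-1=\alpha/2+\alpha^2/16<\alpha$, leaving a positive budget for the two random fluctuations $(\mathrm{II})$ and $(\mathrm{III})-\EE_\epsilon(\mathrm{III})$. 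For $(\mathrm{II})$ note $\EE_\epsilon(\mathrm{II})=0$ (disjoint supports), and conditioning on $\epsilon_S$ turns $(\mathrm{II})=2\langle C^{\!\top}\epsilon_S,\epsilon_{S^c}\rangle$, with $C=\operatorname{diag}(x_S)\Psi^{\!\top}\Psi\operatorname{diag}(x_{S^c})$, into a Rademacher sum in $\epsilon_{S^c}$ with sub-Gaussian parameter $\|C^{\!\top}\epsilon_S\|_2$. For $(\mathrm{III})-\EE_\epsilon(\mathrm{III})=\epsilon^{\!\top}\overline B\epsilon$, an off-diagonal Rademacher chaos ($\overline B$ is $\operatorname{diag}(x_{S^c})\Psi^{\!\top}\Psi\operatorname{diag}(x_{S^c})$ with its diagonal zeroed), I would invoke the Hanson--Wright inequality, $\PP(|\epsilon^{\!\top}\overline B\epsilon|>t)\le 2\exp\!\big(-c\min(t^2/\|\overline B\|_F^2,\ t/\|\overline B\|_{2\to2})\big)$.

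The crux is to bound $\|C^{\!\top}\epsilon_S\|_2$ and $\|\overline B\|_F$ by $O(\delta/\sqrt k)$ and $\|\overline B\|_{2\to2}$ by $O(\delta/k)$, uniformly in $\epsilon_S$ and $x$: with $t\asymp\alpha\asymp\delta$ these make all the exponents above of order $k$, hence $\gtrsim\log|\T|+\nu$, so each fluctuation stays within its budget with probability $\ge 1-2e^{-ck}\ge 1-e^{-\nu}/(2|\T|)$. I would get these norm bounds from three ingredients: (a) the RIP near-orthogonality estimates $|\langle\Psi u,\Psi v\rangle|\lesssim\delta\|u\|_2\|v\|_2$ and, crucially, $\|P_T\Psi^{\!\top}\Psi u\|_2\lesssim\delta\|u\|_2$ whenever $\operatorname{supp}(u)\cap T=\varnothing$ and $|T|\le k$; (b) a magnitude-sorted block decomposition, splitting the coordinates outside $S$ into consecutive size-$k$ blocks $B_1,B_2,\dots$ in decreasing order of $|x_j|$, which gives $\sum_b\|v_{B_b}\|_2\le\|v\|_2+\|v\|_1/\sqrt k$ for any $v$; and (c) the smallness $\|x_{S^c}\|_\infty\le1/\sqrt k$ (together with $\|x_{S^c}\|_2\le1$). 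For instance $\sum_{l\notin S}x_l^2\langle\Psi e_j,\Psi e_l\rangle^2\le\delta^2\sum_b\|x_{B_b}\|_\infty^2\le\delta^2(\|x_{S^c}\|_\infty^2+\|x_{S^c}\|_2^2/k)\lesssim\delta^2/k$ yields $\|\overline B\|_F^2\lesssim\delta^2/k$, and a parallel estimate — using $\|x\circ z\|_1\le\|x\|_2\|z\|_2$ to keep the block sums bounded — gives $\|\overline B\|_{2\to2}\lesssim\delta/k$ and $\|C^{\!\top}\epsilon_S\|_2=\|\operatorname{diag}(x_{S^c})\Psi^{\!\top}\Psi(x_S\circ\epsilon_S)\|_2\lesssim\delta\|x_S\|_2/\sqrt k$.

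I expect the block-decomposition bookkeeping to be the main obstacle, especially for the cross term: the naive bound $\|C^{\!\top}\epsilon_S\|_2\le\|C\|_{2\to2}\sqrt{|S|}$ combined with $\|C\|_{2\to2}\lesssim\delta/\sqrt k$ only produces an exponent of order $\sqrt k$, which is too weak for large $\T$, so one must instead bound $\|C^{\!\top}\epsilon_S\|_2=\|\operatorname{diag}(x_{S^c})\Psi^{\!\top}\Psi(x_S\circ\epsilon_S)\|_2$ directly through the whole-block RIP estimate of (a). With all three pieces in hand, adding the failure probabilities of $(\mathrm{II})$ and $(\mathrm{III})$ and union-bounding over $\T$ completes the proof; the universal constant is absorbed into the factor $40$ in the hypothesis on $k$, chosen so that $40(\log(40|\T|)+\nu)$ dominates the constant coming from the Hanson--Wright inequality.
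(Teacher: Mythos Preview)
The paper does not prove Theorem~\ref{thm:KrahmerWard}; it is quoted verbatim from \cite{KW2011} and used as a black box, so there is no in-paper proof to compare against. What you have written is, in outline, precisely the original Krahmer--Ward argument: fix $x$, split into a $k$-sparse head $x_S$ and a flat tail $x_{S^c}$, control $(\mathrm I)$ and $\EE_\epsilon(\mathrm{III})$ deterministically via RIP, and bound the random fluctuations $(\mathrm{II})$ and $(\mathrm{III})-\EE_\epsilon(\mathrm{III})$ by Hoeffding/Hanson--Wright, with the required norm bounds $\|\overline B\|_F\lesssim\delta/\sqrt k$, $\|\overline B\|_{2\to2}\lesssim\delta/k$, $\|C^\top\epsilon_S\|_2\lesssim\delta/\sqrt k$ coming from RIP near-orthogonality combined with the sorted size-$k$ block decomposition of the tail.

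Your sketch is correct and matches \cite{KW2011}. Two minor points of care: (i) the near-orthogonality and block estimates you invoke (e.g.\ $\|P_{B_b}\Psi^\top\Psi(x_S\circ\epsilon_S)\|_2\lesssim\delta$ and $|\langle\Psi w_b,\Psi w_{b'}\rangle|\lesssim\delta\|w_b\|_2\|w_{b'}\|_2$) require RIP at order $2k$, not $k$; this is handled in \cite{KW2011} by taking $|S|$ and the block size to be a fixed fraction of $k$, which is what the constant $40$ absorbs. (ii) Your operator-norm bound $\|\overline B\|_{2\to2}\lesssim\delta/k$ does \emph{not} follow from the naive estimate $\|\Psi w\|_2\le(1+\delta)\sum_b\|w_{B_b}\|_2$ (that loses the factor $\delta$); you need to expand $\|\Psi w\|_2^2-\|w\|_2^2$ block-by-block and use near-orthogonality on the cross terms. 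With those two adjustments the argument closes exactly as you describe.
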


We will need the following ``multiresolution'' version of the restricted isometry property, used by Oymak et. al in \cite{ORS2015} to prove embedding results for arbitrary, i.e., not necessarily finite, sets.
\begin{definition}[Multiresolution RIP \cite{ORS2015}]
Let $L = \lceil \log_2 n\rceil$. Given $\delta > 0$ and a number $s \geq 1,$ for $\ell = 0, 1, \ldots, L,$ let $(\delta_{\ell}, s_{\ell}) = (2^{\ell/2}\delta, 2^\ell s)$ be a sequence of distortion and sparsity levels. We say a matrix $A\in \R^{m\times n}$ satisfies the Multiresolution Restricted Isometry Property (MRIP) with distortion $\delta > 0$ at sparsity $s$, if for all $\ell \in \{1, 2, \ldots, L\},$ $(\delta_{\ell}, s_{\ell})$-RIP holds.
\end{definition}

\begin{theorem}[\cite{ORS2015}]\label{thm:OymakRechtSoltanolkotabi}
Let $\mathcal{T} \subset \R^n$ and suppose the matrix $\Psi\in\R^{m\times n}$ obeys the Multiresolution RIP with sparsity and distortion levels 
\[
	s = 150(1 + \eta) \quad\text{and}\quad \tilde{\alpha} = \frac{\alpha\cdot\rad(\T)}{C\max\{\rad(\mathcal{T}), \omega(\mathcal{T})\}},
\]
with $C > 0$ an absolute constant. Then, for a diagonal matrix $D_\epsilon$ with an i.i.d. random sign pattern on the diagonal, the matrix $A = \Psi D_\epsilon$ obeys
\[ \sup_{x\in\mathcal{T}}\left|\|Ax\|_2^2 - \|x\|_2^2\right| \leq \max\{\alpha, \alpha^2\}(\rad(\mathcal{T}))^2,\]
with probability at least $1 - e^{-\eta}$.
\end{theorem}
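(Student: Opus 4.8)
\medskip
\noindent\textbf{Proof proposal.} The plan is to turn the statement into a concentration bound for a Rademacher quadratic form and then promote the pointwise bound to a uniform one over $\mathcal{T}$ by chaining. Writing $D_\epsilon=\mathrm{diag}(\epsilon)$, for a fixed $x$ one has
\[ \|Ax\|_2^2=\|\Psi\,\mathrm{diag}(x)\,\epsilon\|_2^2=\langle\epsilon,M_x\epsilon\rangle,\qquad M_x:=\mathrm{diag}(x)\,\Psi^\top\Psi\,\mathrm{diag}(x), \]
a chaos with mean $\mathbb{E}_\epsilon\langle\epsilon,M_x\epsilon\rangle=\mathrm{tr}(M_x)=\sum_j x_j^2\|\psi_j\|_2^2$, where $\psi_j$ are the columns of $\Psi$. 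Applying the $\ell=1$ part of the MRIP to the vectors $e_j$ gives $\big|\|\psi_j\|_2^2-1\big|\lesssim\tilde\alpha$, so $\big|\mathrm{tr}(M_x)-\|x\|_2^2\big|\lesssim\tilde\alpha\,\rad(\mathcal{T})^2$ already lies within the target tolerance; it therefore remains to control $\sup_{x\in\mathcal{T}}\big|\langle\epsilon,M_x\epsilon\rangle-\mathrm{tr}(M_x)\big|$.

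The deterministic heart of the proof — and the step I expect to be the main obstacle — is a lemma showing that the MRIP forces $\|M_x\|_F$ and $\|M_x\|_{2\to2}$ to be $\lesssim\tilde\alpha\cdot\mathrm{polylog}(n)\cdot\|x\|_2^2$ for \emph{every} $x$, not just for sparse $x$ (and likewise for the bilinear forms $\mathrm{diag}(u)\Psi^\top\Psi\,\mathrm{diag}(v)$ that appear in increments). To prove it I would sort the coordinates of $x$ by decreasing magnitude, partition the index set dyadically into $L+1=\lceil\log_2 n\rceil+1$ blocks $T_0,\dots,T_L$ with $x_{T_\ell}$ being $s_\ell$-sparse and $\|x_{T_\ell}\|_\infty\lesssim 2^{-\ell/2}s^{-1/2}\|x\|_2$ (a ``shelling'' estimate), expand $M_x=\sum_{a,b}\mathrm{diag}(x_{T_a})\,\Psi^\top\Psi\,\mathrm{diag}(x_{T_b})$, and estimate each block with the $(\delta_\ell,s_\ell)$-RIP — using its restricted-orthogonality consequence $\|\Psi_{T_a}^\top\Psi_{T_b}\|_{2\to2}\lesssim\delta_{(a\vee b)+1}$ for the off-diagonal terms. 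The multiresolution scaling $\delta_\ell=2^{\ell/2}\delta$ is designed precisely so that, against the $2^{-\ell/2}$ decay of $\|x_{T_\ell}\|_\infty$, the resulting double sum over $(a,b)$ collapses to $\delta$ times a convergent geometric series (with only polylogarithmic loss from the number of blocks); keeping the powers of $\tilde\alpha$, $\log n$, and $s$ sharp here is the delicate bookkeeping.

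Granting the lemma, the rest is standard. Pointwise, the Hanson--Wright inequality gives $\mathbb{P}\big(|\langle\epsilon,M_x\epsilon\rangle-\mathrm{tr}(M_x)|>t\big)\le2\exp\!\big(-c\min\{t^2/\|M_x\|_F^2,\,t/\|M_x\|_{2\to2}\}\big)$, and the same inequality applied to $M_x-M_y$ shows the process $x\mapsto\langle\epsilon,M_x\epsilon\rangle$ has mixed sub-gaussian/sub-exponential increments in the Euclidean metric on $\mathcal{T}$, with ``radii'' controlled by the lemma as $\lesssim\tilde\alpha\,\|x-y\|_2\,\rad(\mathcal{T})$ up to $\mathrm{polylog}(n)$. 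A generic-chaining/Dudley bound in this metric, whose complexity is measured by $\omega(\mathcal{T})$ via Talagrand's majorizing-measure theorem, then bounds $\mathbb{E}_\epsilon\sup_{x\in\mathcal{T}}\big|\langle\epsilon,M_x\epsilon\rangle-\mathrm{tr}(M_x)\big|$ by a constant times $\tilde\alpha\,\omega(\mathcal{T})\,\rad(\mathcal{T})$, plus a lower-order sub-exponential term of order $\tilde\alpha^2\omega(\mathcal{T})^2$ (which is what yields the $\max\{\alpha,\alpha^2\}$). Plugging in $\tilde\alpha=\alpha\,\rad(\mathcal{T})/\big(C\max\{\rad(\mathcal{T}),\omega(\mathcal{T})\}\big)$ with $C$ a large absolute constant makes this at most $\tfrac12\max\{\alpha,\alpha^2\}\,\rad(\mathcal{T})^2$. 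It remains to pass from expectation to a high-probability statement: Talagrand's concentration inequality for suprema of Rademacher chaos (or a bounded-difference argument using $\|M_x\|_{2\to2}\le\rad(\mathcal{T})^2$) does this with failure probability $e^{-\eta}$, the choice $s=150(1+\eta)$ being exactly what makes the MRIP strong enough at this confidence level; combining with the trace estimate from the first step finishes the argument.
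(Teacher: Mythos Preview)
The paper does not prove this theorem at all: it is quoted verbatim from \cite{ORS2015} as a black-box tool (note the citation in the theorem header), and is used in the proof of Theorem~\ref{thm: bin embed} without any argument being supplied. So there is no ``paper's own proof'' to compare your proposal against.

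That said, your sketch is broadly in line with how the result is established in the original Oymak--Recht--Soltanolkotabi paper: write $\|Ax\|_2^2$ as a Rademacher chaos $\langle\epsilon,M_x\epsilon\rangle$, use a dyadic ``shelling'' of the coordinates of $x$ together with the multiresolution RIP to bound the Frobenius and operator norms of $M_x$ (and of increments $M_x-M_y$) uniformly over all $x$, then control the supremum over $\mathcal{T}$ via Hanson--Wright plus generic chaining, with the complexity captured by $\omega(\mathcal{T})$. The one place where your outline is a bit optimistic is the claim that $\|M_x\|_F\lesssim\tilde\alpha\cdot\mathrm{polylog}(n)\cdot\|x\|_2^2$: the off-diagonal restricted-orthogonality estimate $\|\Psi_{T_a}^\top\Psi_{T_b}\|_{2\to2}\lesssim\delta_{(a\vee b)+1}$ controls operator norms of blocks, not their Frobenius norms, so passing to a Frobenius bound on $M_x$ requires additional care (in \cite{ORS2015} this is handled by working with appropriately weighted $\ell_2$ and $\ell_\infty$ norms on the blocks rather than a direct Frobenius estimate). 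Apart from that bookkeeping, the architecture you describe is correct.
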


\subsection{Tools from probability}
We start with a definition of the $\gamma_2$ functional, followed by a result by Krahmer et al. \cite{KMR2014} that will be very useful in proving our technical results.
 
\begin{definition}[\cite{talagrand2006generic}, also see  \cite{KMR2014}]\label{def:gamma_2}
For a metric space $(T,d)$, an admissible sequence of $T$ is a collection of subsets of $T$, $\{T_r:r\geq 0\}$, such that for every $s\geq 1$, $|T_r|\leq 2^{2^r}$ and $|T_0|=1$. For $\beta\geq1$, the $\gamma_\beta$ functional is given by the following infimum, taken over all admissible sequences of $T$:
\[\gamma_\beta(T, d) = \inf\sup_{t\in T} \sum_{r = 0}^\infty 2^{r/\beta} d(t, T_r).\]  

\end{definition}
\begin{theorem}[See, e.g., \cite{KMR2014}]\label{thm:KMR}
 Let $\mathcal{M} \subset\mathbb{C}^{p\times n}$ be a symmetric set of matrices, i.e., $\mathcal{M} = -\mathcal{M}.$ Let $\epsilon$ be a $\pm 1$ Bernoulli vector of length $n$. Then
$$\EE \sup_{M\in\mathcal{M}} \left| \|M\epsilon\|_2^2 - \EE\|M\epsilon\|_2^2 \right| \lesssim d_F(\mathcal{M})\gamma_2(\mathcal{M}, \|\cdot\|_{2\to 2}) + \gamma_2(\mathcal{M}, \|\cdot\|_{2\to 2})^2,$$
where $d_F(\mathcal{M}) = \sup_{M\in\mathcal{M}} \|M\|_F$ and $\gamma_2(\cdot)$ is as in Definition \ref{def:gamma_2}.
\end{theorem}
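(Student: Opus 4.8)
The plan is to treat $X_M := \norm{M\epsilon}_2^2 - \EE\norm{M\epsilon}_2^2$ as a second-order Rademacher chaos process indexed by $\mathcal{M}$ and to bound its supremum by generic chaining, following \cite{KMR2014}. First I would rewrite $X_M = \epsilon^\top(M^\top M)\epsilon - \operatorname{tr}(M^\top M) = \sum_{i\ne j}(M^\top M)_{ij}\,\epsilon_i\epsilon_j$, a centered homogeneous chaos of degree two; since $X_M = X_{-M}$, the hypothesis $\mathcal{M}=-\mathcal{M}$ costs nothing, and it is precisely what will let the final estimate collapse to the stated form.

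The engine is a Bernstein-type (Hanson--Wright) bound for the increments. For $M,M'\in\mathcal{M}$ the difference $X_M - X_{M'}$ is again a centered degree-two Rademacher chaos, with coefficient matrix $B := M^\top M - M'^\top M'$, so
\[
  \PP\big(\abs{X_M - X_{M'}} \ge t\big) \;\le\; 2\exp\!\Big(-c\,\min\big(\tfrac{t^2}{\norm{B}_F^2},\, \tfrac{t}{\norm{B}_{2\to 2}}\big)\Big).
\]
To use this I would control $\norm{B}_F$ and $\norm{B}_{2\to 2}$ through the base metric $\rho(M,M') := \norm{M-M'}_{2\to 2}$: writing $B = M^\top(M-M') + (M-M')^\top M'$ and using $\norm{XY}_F\le\norm{X}_F\norm{Y}_{2\to2}$ together with $\norm{X}_{2\to2}\le\norm{X}_F$ yields $\norm{B}_F \le 2\,d_F(\mathcal{M})\,\rho(M,M')$, and keeping the sharper estimate $\norm{B}_{2\to2}\le(\norm{M}_{2\to2}+\norm{M'}_{2\to2})\,\rho(M,M')$. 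Hence $\{X_M\}$ has mixed sub-Gaussian/sub-exponential increments, both scales proportional to $\rho$.

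Then I would run Talagrand's generic chaining for chaos against a near-optimal admissible sequence for $\gamma_2(\mathcal{M},\rho)$, with chains $\pi_r(M)$, and split $X_M - X_{M_0} = \sum_{r\ge1}\big(X_{\pi_r(M)}-X_{\pi_{r-1}(M)}\big)$, estimating the $r$-th increment at the scale $u_r$ at which the tail exponent equals $2^r$. The sub-Gaussian parts sum to $\lesssim d_F(\mathcal{M})\sum_r 2^{r/2}\rho(\pi_r(M),\pi_{r-1}(M)) \lesssim d_F(\mathcal{M})\,\gamma_2(\mathcal{M},\rho)$, and the base-point term $\EE\abs{X_{M_0}}$ is of the same order by the symmetry argument below. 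For the sub-exponential parts I would use $\norm{\pi_r(M)}_{2\to2}\le\norm{M_0}_{2\to2}+\sum_{s\le r}\rho(\pi_s(M),\pi_{s-1}(M))$ together with the fact that along an optimal $\gamma_2(\mathcal{M},\rho)$-chain the accumulated increments $\sum_s\rho(\pi_s(M),\pi_{s-1}(M))$ are themselves $\lesssim\gamma_2(\mathcal{M},\rho)$; organizing the slow-tail sum this way produces a contribution $\lesssim\gamma_2(\mathcal{M},\rho)^2 + d_F(\mathcal{M})\operatorname{diam}_\rho(\mathcal{M})$. Finally, because $\mathcal{M}=-\mathcal{M}$ we have $\operatorname{diam}_\rho(\mathcal{M})\ge 2\sup_M\norm{M}_{2\to2}$ while $\gamma_2(\mathcal{M},\rho)\gtrsim\operatorname{diam}_\rho(\mathcal{M})/2$, so the residual $d_F(\mathcal{M})\operatorname{diam}_\rho(\mathcal{M})$ is absorbed into $d_F(\mathcal{M})\gamma_2(\mathcal{M},\rho)$; integrating the resulting tail bound gives $\EE\sup_M\abs{X_M}\lesssim d_F(\mathcal{M})\gamma_2(\mathcal{M},\norm{\cdot}_{2\to2}) + \gamma_2(\mathcal{M},\norm{\cdot}_{2\to2})^2$.

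The main obstacle is exactly this last chaining step: a naive two-metric generic chaining against the pair (sub-Gaussian scale $\sim d_F\rho$, sub-exponential scale $\sim d_F\rho$) only yields the weaker $d_F(\mathcal{M})\,\gamma_1(\mathcal{M},\norm{\cdot}_{2\to2})$ for the slow-tail part, and replacing this $\gamma_1$ by $\gamma_2^2$ requires the chaos-specific ``second-order Dudley'' refinement of \cite{KMR2014}, which in effect conditions the chaos to linearize it and reuses the $\gamma_2$-chain a second time. Getting this bookkeeping right, and verifying that the per-increment tail bounds propagate cleanly to a two-sided bound on the supremum, is the technical heart of the argument.
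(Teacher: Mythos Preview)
The paper does not prove this theorem: it is stated as a tool from the literature with the citation ``See, e.g., \cite{KMR2014}'' and used as a black box in the proof of Theorem~\ref{main theorem}. There is therefore nothing to compare against in the paper itself.

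That said, your outline is essentially the argument of \cite{KMR2014}. You correctly identify the process as a homogeneous order-two Rademacher chaos, invoke the Hanson--Wright increment bound with the two scales $\norm{B}_F$ and $\norm{B}_{2\to2}$, and --- most importantly --- you flag the right obstacle: a naive mixed-tail generic chaining only delivers $d_F(\mathcal{M})\gamma_2 + d_{2\to2}(\mathcal{M})\gamma_1$, and upgrading the slow-tail contribution to $\gamma_2^2$ requires the chaos-specific refinement in \cite{KMR2014} (conditioning to linearize and reusing the same admissible sequence). Your sketch of how the symmetry $\mathcal{M}=-\mathcal{M}$ absorbs the diameter term into $d_F\gamma_2$ is also correct. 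The one place where your write-up is a bit loose is the claim that along an optimal $\gamma_2$-chain the unweighted sum $\sum_s \rho(\pi_s(M),\pi_{s-1}(M))$ is itself $\lesssim \gamma_2$; this is not automatic (the $\gamma_2$ sum carries $2^{s/2}$ weights), and in \cite{KMR2014} the bookkeeping is handled more carefully via a decoupling/conditioning step rather than by summing raw increments. If you want to turn this into a self-contained proof, that is the step to work out in detail.
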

Next we present a version of Bernstein's inequality that we will use in proving that our matrix $\widehat{V}\Phi$ satisfies an appropriate RIP with high probability.

\begin{theorem}[Theorem 8.42 in \cite{FR2013}]\label{thm:probability}
Let $\mathcal{F}$ be a countable set of functions $F:\C^n \to \R$. Let $Y_1, \ldots, Y_M$ be independent random vectors in $\C^n$ such that $\EE F(Y_{\ell}) = 0$ and $F(Y_{\ell}) \leq K$ almost surely for all $\ell \in [M]$ and for all $F \in \mathcal{F}$ with some constant $K > 0.$ Introduce \[ Z = \sup_{F\in \mathcal{F}} \sum_{\ell = 1}^M F(Y_\ell). \]
Let $\sigma_{\ell}^2 > 0$ such that $\EE[F(Y_{\ell})^2] \leq \sigma_{\ell}^2$ for all $F\in \mathcal{F}$ and $\ell\in[M]$. Then, for all $t > 0,$
\[ \PP(Z \geq \EE Z + t) \leq \exp\left(-\frac{t^2/2}{\sigma^2 + 2K\EE Z + tK/3}\right),\]
where $\sigma^2 = \sum_{\ell = 1}^M\sigma_{\ell}^2.$
\end{theorem}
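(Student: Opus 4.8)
This is a concentration inequality for the supremum over $\mathcal{F}$ of a sum of independent, centered, uniformly bounded-above functionals --- the Talagrand concentration inequality for suprema of empirical processes, in the sharp form due to Bousquet --- and the plan is to prove it by the \emph{entropy method} (tensorization of entropy together with the Herbst argument), rather than by chaining or a union bound, since the factor $\EE Z$ appearing in the denominator cannot be produced by a naive argument and is exactly what makes the bound sharp.

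First I would reduce to a finite index set: by countability of $\mathcal{F}$ one may write it as an increasing union of finite subfamilies $\mathcal{F}_N$, with suprema $Z_N \uparrow Z$, prove the inequality for each $Z_N$ with constants not depending on $N$, and let $N\to\infty$ using $\EE Z_N \to \EE Z$ (monotone convergence) and continuity of the claimed bound in $t$. With $\mathcal{F}$ finite, let $\widehat{F}\in\mathcal{F}$ be a (random) maximizer, so $Z = \sum_{\ell=1}^M \widehat{F}(Y_\ell)$, and for each coordinate $i$ introduce the $Y_i$-independent comparison variable $Z_i := \sup_{F\in\mathcal{F}}\sum_{\ell\neq i}F(Y_\ell)$. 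The two structural facts to extract are: $Z - Z_i \leq \widehat{F}(Y_i) \leq K$ together with the self-bounding-type property $\sum_{i=1}^M (Z - Z_i) \leq \sum_{i}\widehat{F}(Y_i) = Z$; and a matching control of $(Z - Z_i)_-$ via the maximizer of $Z_i$, which, with the hypothesis $\EE F(Y_\ell)^2 \leq \sigma_\ell^2$, yields $\sum_i \EE[(Z - Z_i)_-^2] \leq \sum_\ell \sigma_\ell^2 = \sigma^2$.

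Next I would feed these into the modified logarithmic Sobolev inequality for functions of independent variables, $\operatorname{Ent}\!\big(e^{\theta Z}\big) \leq \sum_{i=1}^M \EE\big[e^{\theta Z}\,\psi(-\theta(Z - Z_i))\big]$ with $\psi(x) = e^{x} - x - 1$. On the event $0 \leq Z - Z_i \leq K$ one has $\psi(-\theta(Z - Z_i)) \leq \tfrac{1}{2} \theta^2 (Z - Z_i)^2 \leq \tfrac{1}{2} \theta^2 K (Z - Z_i)$, so that --- using $\sum_i (Z - Z_i) \leq Z$ for the positive part and the variance bound for the negative part --- the entropy inequality turns into a differential inequality for $H(\theta) := \log\EE e^{\theta(Z - \EE Z)}$ of the shape $\theta H'(\theta) - H(\theta) \leq \dfrac{v\,\theta^2}{2(1 - K\theta/3)}$ on $0 < \theta < 3/K$, with effective variance $v = \sigma^2 + 2K\,\EE Z$. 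Integrating (divide by $\theta^2$, integrate from $0$) gives $H(\theta) \leq \dfrac{v\,\theta^2}{2(1 - K\theta/3)}$, and then Markov's inequality $\PP(Z \geq \EE Z + t) \leq e^{H(\theta) - \theta t}$ together with the choice $\theta = t/(v + Kt/3) \in (0, 3/K)$ yields the stated bound $\exp\!\big(-\tfrac{t^2/2}{\sigma^2 + 2K\,\EE Z + tK/3}\big)$.

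The main obstacle is the middle step: squeezing out the \emph{sharp} effective variance $v = \sigma^2 + 2K\,\EE Z$ --- in particular the term linear in $\EE Z$, which is precisely Bousquet's refinement of Talagrand's original inequality --- requires a careful simultaneous treatment of the upper and lower deviations $Z - Z_i$ (the case $Z - Z_i < 0$, where $\psi$ is evaluated at a positive argument, resists the crude quadratic bound and needs a dedicated argument) and exact control of the constant in the convexity estimate for $\psi$, so that the $tK/3$ in the denominator appears with the right numerical factor. The reduction to finite families and the terminal optimization over $\theta$ are routine. Alternatively, and as done here, one may simply invoke Theorem 8.42 of \cite{FR2013} --- equivalently, Bousquet's inequality --- verbatim.
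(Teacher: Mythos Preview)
The paper does not prove this statement at all: it is quoted verbatim as Theorem~8.42 from \cite{FR2013} and used as a black-box tool in Step~(III) of the proof of Theorem~\ref{main theorem}. So there is no ``paper's own proof'' to compare against; your final sentence --- that one may simply invoke the reference --- is exactly what the paper does.

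That said, your sketch is a correct outline of the standard proof of Bousquet's inequality via the entropy method (tensorization of entropy, the modified log-Sobolev inequality, the self-bounding structure $\sum_i(Z-Z_i)\le Z$, and the Herbst argument), and you correctly identify the delicate point as extracting the sharp effective variance $v=\sigma^2+2K\,\EE Z$ with the right constant in the $tK/3$ term. This goes well beyond what the paper requires, but nothing in the outline is wrong.
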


\section{Quantization: Background and preliminaries}\label{sec:quantization}


As previously mentioned, various choices of quantization maps and reconstruction algorithms have been proposed in the context of binary embedding and compressed sensing. 
Below, we first review the \emph{lower bounds} associated with optimal (albeit highly impractical) quantization and decoding, which no scheme can outperform. We then review the basics of \emph{memoryless scalar quantization} as well as \emph{noise shaping quantization}, as we will need these for our results. 

\subsection{Optimal error bounds via vector quantization}

For a class $\mathcal{X}$ of signals (e.g., $\mathcal{X} = \Sigma_k^n$), the performance of a practical data acquisition scheme consisting of three stages (i.e., measurement, quantization, and reconstruction) can be measured against the optimal error bounds. Specifically, the performance of any such a scheme can be measured by the tradeoff between the number of bits (rate) and reconstruction error (distortion) and we now describe how to examine the optimal rate-distortion relationship. Given a desired worst-case  approximation error $\epsilon$, measured in the $\ell_2$ norm,  we can cover the class $\mathcal{X}$ with balls of radius $\epsilon$ associated with the Euclidean norm $\|\cdot\|_2$. The smallest number of such $\epsilon$-balls is called the covering number of $\mathcal{X}$ and denoted by $\mathcal{N}(\mathcal{X}, \|\cdot\|_2, \epsilon).$ An optimal scheme consists of  encoding each signal $x$ in the class $\mathcal{X}$ using $R := \log_2\mathcal{N}(\mathcal{X}, \|\cdot\|_2, \epsilon)$ bits by mapping $x$ to the center of an $\epsilon$-ball in which it lies. A simple volume argument yields the optimal lower bound of the approximation error in term of the rate (see, e.g. \cite{BB2007, BJKS2015}) which, when $\mathcal{X}=\Sigma_k^n$, satisfies $$ \epsilon(R) \gtrsim \frac{n}{k}\,2^{-R/k}.$$

On the other hand, the distortion of a three-stage scheme comprised of measurements using a matrix $A$, quantization via the map $\mathcal{Q}$, and reconstruction using an algorithm $\mathcal{R}$ is defined by \[\mathcal{D}(A,\Q, \mathcal{R}) := \sup_{x\in\mathcal{X}}\|x - \mathcal{R}(\Q(Ax))\|_2.\] 
Thus, trivially, 
$$\mathcal{D}(A, \Q, \mathcal{R})\gtrsim \frac{n}{k}2^{-R/k},$$
and one seeks practical\footnote{Note that the optimal scheme described herein is impractical as it requires direct measurements of $x$ which may not be available, and it requires finding the ball to which $x$ belongs. As the number of such balls scales \emph{exponentially} with the ambient dimension, this task becomes prohibitively expensive.} schemes that approach this lower bound.

In the case of Gaussian (and subgaussian) matrices $A$, there exist schemes \cite{baraniuk2017exponential, Chou2013, SWY2017_2} that approach the optimal error bounds, albeit with various tradeoffs. Among these, the results in \cite{baraniuk2017exponential} apply only to Gaussian random matrices $A$, while those of 
 \cite{Chou2013} and \cite{SWY2017_2} rely on noise-shaping quantization techniques and apply to a more general class of subgaussians. In the quantization direction, our work extends the noise-shaping results of  \cite{Chou2013} and \cite{SWY2017_2} to the case of structured random matrices which are selected according to Construction \ref{distribution}. Before describing the necessary technical details pertaining to noise-shaping we quickly describe the most basic  (but highly suboptimal) approach to quantization, i.e., memoryless scalar quantization, as the most widely assumed approaches to binary embeddings and quantization of compressed sensing measurements use it.

\subsection{Memoryless Scalar Quantization}
The most intuitive approach to quantize linear measurements $y = Ax$ is the so-called memoryless scalar quantization (MSQ) in which the quantization map $\Q_{\text{MSQ}}$ is defined by
\[
	q_i := (\Q_{\text{MSQ}}( y))_i := \arg\min_{r\in \A}|y_i - r|.
\]
In other words, we round each measurement $y_i$ to the nearest element in the quantization alphabet $\A$. For instance, in the case of binary embedding, i.e., $\A = \{-1, 1\}, (\Q_{\text{MSQ}}(y))_i = \sign(y_i).$ Here $\sign(t) = 1$ if $t \geq 0$ and $-1$ otherwise. 

If the quantization alphabet $\A = \delta \Z$ for some step size $\delta > 0$ is to be used, we can bound  $\|y - q\|_2 \leq \frac{1}{2}\delta\sqrt{m}.$ Thus, in the setting of binary embedding, MSQ defines a \emph{quasi-isometric} embedding
\begin{equation} (1 - \alpha)\|v - w\|_2 - \delta \leq \frac{1}{\sqrt{m}}\|\Q(Av) - \Q(Aw)\|_2 \leq (1 + \alpha)\|v - w\|_2 + \delta \label{eq:err_bin_MSQ}\end{equation}
with high probability if $A$ is a Gaussian random matrix \cite{Jacques2015}.
On the other hand, if we consider the quantization problem in compressed sensing, the robust recovery result \eqref{error bound} guarantees that
\begin{equation}
	\|x - \hat{x}_{\text{MSQ}}\|_ 2 \lesssim \delta \label{eq:err_quant_MSQ}
\end{equation}
when $x$ is a $k$-sparse vector.

In spite of its simplicity, MSQ is not optimal when one oversamples, i.e., one has more measurements than needed. In particular, the error bounds \eqref{eq:err_bin_MSQ} and \eqref{eq:err_quant_MSQ} do not improve with increasing the number of measurements $m$. This is due to the fact that MSQ ignores correlations between measurements as it acts component-wise. While one can in principle still get a reduction in the error by  using a finer quantization alphabet (i.e., a smaller $\delta$), this is often not feasible in practice because the quantization alphabet is fixed once the hardware is built, or not desirable as one may prefer a simple, e.g., 1-bit embedding.  In order to address this problem, noise-shaping quantization methods, such as Sigma-Delta ($\Sigma\Delta$) modulation and alternative decoding methods (see, e.g., \cite{DD2003, Gunturk2003, BPY2006, GLPSY2013, KSW2012, Chou2013, FK2014, W2016, FKS2017, CGKRO2015, BJKS2015, H2016}), have been proposed in the settings of quantization of bandlimited functions, finite frame expansions, and compressed sensing measurements. However, these methods have not been studied in the framework of binary embedding problems. As stated before, the main contributions of our work are to extend noise-shaping approaches to embedding problems and to compressed sensing with structured random matrices.

\subsection{Noise-shaping quantization methods}
Noise-shaping quantizers, for example $\Sigma \Delta$ modulation, were first proposed for analog-to-digital conversion of bandlimited functions (see, e.g., \cite{DD2003, Gunturk2003, ST2005}). Their success is essentially due to the fact they push the quantization error toward the nullspace of their associated reconstruction operators. These methods have been successfully extended to the frameworks of finite frames (see, e.g., the survey \cite{CGKRO2015} and  \cite{BPY2006, BLPY2010, LPY2010, GLPSY2013, KSW2012, KSY2014}) and compressed sensing 
(see, e.g. \cite{GLPSY2013, KSW2012, Chou2013, FK2014, H2016, W2016, FKS2017, CGKRO2015, BJKS2015}). In fact, the approaches based on $\Sigma \Delta$ quantization  \cite{SWY2018} and beta encoders \cite{Chou2013} achieve near-optimal bounds for sub-Gaussian measurements.

To explain these methods, consider a real quantization alphabet $\A_{L, \delta}$ consisting of $2L$ symmetric levels of spacing $2\delta,$ i.e., \begin{equation}\label{eq:alphabet}\A_{L, \delta} := \{a\delta \;|\; a = -2L + 1, \ldots, -1, 1, \ldots, 2L - 1\},\end{equation} and let ${\A := \A_{L, \delta} + i\A_{L, \delta}}$ be a complex quantization alphabet. 
\begin{remark}We  mention the complex case here for the sake of completeness and remark that all our results apply in the complex setting, but after this discussion we will restrict our attention to real valued matrices and vectors.
\end{remark} 
A noise-shaping quantizer $\Q:\C^m\to\A^m$, associated with the so-called noise transfer operator $H$, is defined such that for each $y\in\C^m$ the resulting quantization $q := \Q(y)$ satisfies the noise-shaping relation
\begin{equation}\label{ns relation}
y - q = Hu,
\end{equation}
where $u\in\C^m$ is called the \emph{state vector} and $\|u\|_\infty \leq C$ for some constant $C$ independent of $m$. Here, $H$ is an $m\times m$ lower triangular Toeplitz matrix with unit diagonal. Noise-shaping quantizers do not exist unconditionally because of the requirement that $\|u\|_\infty$ is uniformly bounded in $m$. However, under certain suitable assumptions on $H$ and $\A$, they exist and can be simply implemented via recursive algorithms \cite{CG2017}. For example, the following lemma provides conditions under which the schemes are stable.
\begin{lemma}\label{lem:noise shaping}
Let $\A := \A_{L, \delta} +i\A_{L, \delta}$. Assume that $H = I - \widetilde{H}$ where $\widetilde{H}$ is strictly lower triangular, and $\mu \geq 0$ such that  $$\|\widetilde{H}\|_{\infty\to\infty} + \mu/\delta \leq 2L.$$ Suppose that $|y_j|_* := \max\{|\Re(y_j)|, |\Im(y_j)|\} \leq \mu$ for all $j\in[m].$ For each $s \geq 1,$ let $w_s := y_s + \sum_{j = 1}^{s - 1}\widetilde{H}_{s, s - j}u_{s - j}$,
\begin{equation}\label{quantizer}
	q_s := (\Q(y))_s = \arg\min _{r\in\A} \left|w_s - r\right|_*,
\end{equation}
and
\begin{equation}\label{recursive}
	u_s := w_s - q_s.
\end{equation}
Then the resulting $q$ satisfies the noise shaping relation \eqref{ns relation} with $\|u\|_\infty \leq \sqrt{2}\delta.$ In the case of strictly real $y$, the $\sqrt{2}$ factor can be replaced by $1$. 
\end{lemma}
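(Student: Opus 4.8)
The plan is to verify the noise-shaping relation \eqref{ns relation} directly from the recursive definitions, and then prove the bound $\|u\|_\infty \le \sqrt{2}\delta$ by induction on the index $s$, exploiting the stability hypothesis $\|\widetilde H\|_{\infty\to\infty} + \mu/\delta \le 2L$ together with the fact that $q_s$ is the nearest point of the alphabet $\A$ to $w_s$ in the $|\cdot|_*$ (coordinatewise $\max$ of real and imaginary parts) norm. First I would check the algebra of the noise-shaping identity: from $q_s = w_s - u_s$ and $w_s = y_s + \sum_{j=1}^{s-1}\widetilde H_{s,s-j}u_{s-j}$ we get $y_s - q_s = u_s - \sum_{j=1}^{s-1}\widetilde H_{s,s-j}u_{s-j}$. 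Writing this across all $s$ and using that $\widetilde H$ is strictly lower triangular with $H = I - \widetilde H$, this is exactly the $s$th coordinate of $Hu = (I-\widetilde H)u = y - q$; so \eqref{ns relation} holds as soon as the scheme is well-defined (i.e., as soon as each $w_s$ actually lies in the region where the nearest alphabet point is within the claimed distance).

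The core is the induction. Suppose $|u_j|_* \le \delta$ for all $j < s$ (in the real case $\|u_j\|_\infty \le \delta$, with the complex case handled coordinatewise in the real and imaginary parts). Then
\[
|w_s|_* \le |y_s|_* + \sum_{j=1}^{s-1}|\widetilde H_{s,s-j}|\,|u_{s-j}|_* \le \mu + \|\widetilde H\|_{\infty\to\infty}\,\delta \le \mu + (2L - \mu/\delta)\delta = 2L\delta.
\]
The real part of $w_s$ therefore lies in $[-2L\delta, 2L\delta]$, and the alphabet $\A_{L,\delta}$ consists of the $2L$ odd multiples of $\delta$ in $(-2L\delta, 2L\delta)$, which are $2\delta$-spaced and whose union of radius-$\delta$ intervals covers $[-2L\delta+\delta,\; 2L\delta-\delta]$ — but one must check the endpoints: if $\Re(w_s) \in [(2L-1)\delta, 2L\delta]$ the nearest level is $(2L-1)\delta$, at distance at most $\delta$. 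So in every case the nearest alphabet level to $\Re(w_s)$ is within $\delta$, hence $|\Re(u_s)| = |\Re(w_s) - \Re(q_s)| \le \delta$, and likewise $|\Im(u_s)|\le\delta$; since $\Q$ in \eqref{quantizer} minimizes $|w_s - r|_*$ over the product alphabet $\A = \A_{L,\delta} + i\A_{L,\delta}$, this is exactly choosing the nearest level in each coordinate, so $|u_s|_* \le \delta$. Finally $\|u\|_\infty = \max_s |u_s| = \max_s\sqrt{\Re(u_s)^2 + \Im(u_s)^2} \le \sqrt{2}\,\delta$; in the strictly real case $\Im(u_s)=0$ and we get $\|u\|_\infty \le \delta$, i.e.\ the $\sqrt2$ is replaced by $1$.

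The main obstacle is the endpoint/covering bookkeeping in the induction step: one needs the inequality $|w_s|_* \le 2L\delta$ to be \emph{exactly} tight enough that $\Re(w_s)$ still falls within distance $\delta$ of some point of $\A_{L,\delta}$, and this is precisely why the hypothesis is stated as $\|\widetilde H\|_{\infty\to\infty} + \mu/\delta \le 2L$ rather than a strict inequality — the quantizer at the outermost level "absorbs" the boundary. I would be careful to state the covering property of $\A_{L,\delta}$ cleanly (the points $\pm\delta,\pm3\delta,\dots,\pm(2L-1)\delta$ have their radius-$\delta$ neighborhoods covering all of $[-(2L-1)\delta-\delta,(2L-1)\delta+\delta] = [-2L\delta,2L\delta]$) so that no special casing is actually needed. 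Everything else — the base case $s=1$ (where the sum is empty, so $w_1 = y_1$ and $|w_1|_* = |y_1|_* \le \mu \le 2L\delta$), the complex-to-real reduction, and the passage from $|u_s|_*$ to $\|u\|_\infty$ — is routine.
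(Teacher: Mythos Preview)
Your proposal is correct and is exactly the induction argument the paper has in mind: the paper does not spell out a proof but simply states that ``the proof of the above is simply by induction and can be found, for example, in \cite{CG2017}.'' Your verification of the noise-shaping relation from the recursion, the inductive bound $|w_s|_* \le \mu + \|\widetilde H\|_{\infty\to\infty}\delta \le 2L\delta$, and the covering property of $\A_{L,\delta}$ on $[-2L\delta,2L\delta]$ are precisely the ingredients of that standard argument.
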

The proof of the above is simply by induction and can be found, for example,  in \cite{CG2017}.

Our algorithm are inspired by the approach to reconstruction from noise-shaping quantized measurements used in \cite{Chou2013} and \cite{SWY2017_2} (see also \cite{CGKRO2015}). We first introduce a so-called \emph{condensation operator} $V: \C^m \to \C^p$, for some $p \in [m]$, defined by
\begin{equation}\label{condensation}
	V:= I_p\otimes v = \begin{bmatrix}
	v& & &\\
	&v & &\\
	& & \ddots &\\
	& & & v
	\end{bmatrix},
\end{equation}
where $v$ is a row vector in $\R^\lambda$. Here, for simplicity, we suppose that $\lambda = m/p$ is an integer.
In this approach, to embed or recover the signal $x$, we first apply $V$ to the noise-shaping relation \eqref{ns relation} and obtain
\begin{equation}\label{eq:state_equations}
	V\Phi x - Vq = VHu.
\end{equation}
Our goal is to design a $p\times m$ matrix $V$ and an $m\times m$ matrix $H$ such that the norm $\|VHu\|_2$ is exponentially (or polynomially depending on the quantization scheme) small in $m$ and our quantization algorithm is stable. This in turn helps us achieve exponentially and polynomially small errors in the settings of binary embedding and compressed sensing. We now present two candidates for the pair $(V, H)$ based on $\Sigma \Delta$ quantization and distributed noise shaping.

\subsection{Sigma-Delta quantization}\label{sec:Sigma Delta}
The standard $r$th-order $\Sigma \Delta$ quantizer with input $y$ computes a uniformly bounded solution $u$ to the difference equation
\begin{equation}\label{SDeq}
	y - q = D^r u.
\end{equation}
This can be achieved recursively by choosing $q := \Q_{\Sigma\Delta}(y)$ and $u$ as in the expressions \eqref{quantizer} and \eqref{recursive}, respectively, for the matrix $H = D^r$. Here the matrix $D$ is the $m\times m$ first-order difference matrix with entries given by
\[
	D_{ij} = \left\{\begin{array}{ll}
	1& \text{if } i = j,\\
	-1& \text{if } i = j + 1,\\
	0& \text{otherwise}.
	\end{array}
	\right.
\]
%
Constructing \emph{stable} $\Sigma\Delta$ schemes of arbitrary order $r$, i.e., where \[ \|y\|_\infty \leq \mu \implies \|u\|_\infty \leq c(r)\] with a constant $c(r)$ that is independent of dimensions (but dependent on the order) is non-trivial for a fixed quantization alphabet. Nevertheless, there exist several constructions (see \cite{DD2003, Gunturk2003, DGK2011}) with exactly this boundedness property, albeit with different constants $c(r)$. For our results herein, when we assume a fixed alphabet, we use the stable $\Sigma\Delta$ schemes of \cite{DGK2011}. The relevant result for us is the following proposition from \cite{KSW2012} summarizing the results of \cite{DGK2011}\footnote{In the relevant proposition of \cite{KSW2012}, there is a typo (in equation (12)) as there is a missing $r^r$ factor. This can be seen from comparing their equations (11) and (12). We remedy this in the version presented here.}. 

\begin{proposition}({\cite{DGK2011}, see \cite{KSW2012}} Proposition 1)\label{prop:stableSD}
There exists a universal constant $C>0$ such that for any alphabet $\A=\{\pm (2\ell-1)\delta: \ 1 \leq \ell\leq L, \ \ell\in \Z  \}$, there is a stable $\Sigma\Delta$ scheme such that 
\begin{equation}
\|y\|_\infty \leq \mu \implies \|u\|_\infty \leq C{\delta}\left( \left\lceil \frac{\pi^2}{(\cosh^{-1}(2L-\mu/\delta))^2} \right\rceil \frac{e}{\pi}\cdot r   \right)^r.
\end{equation}
\end{proposition}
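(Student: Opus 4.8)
The plan is to realize the $r$th-order noise-shaping relation \eqref{SDeq} not by running the greedy recursion \eqref{quantizer}--\eqref{recursive} with $H = D^r$ directly (which is unstable for a fixed alphabet), but by a change of variables that turns the $r$th-order scheme into a \emph{first-order-like} one. Concretely, I would introduce a one-sided FIR sequence $g = (g_0, g_1, g_2, \ldots)$ with $g_0 = 1$ (to be chosen later), look for the state vector in the form $u = g * v$, and observe that then $y - q = D^r u = (D^r g)*v$. Writing the filter $D^r g$ as $\delta_0 - \tilde h$ with $\tilde h$ supported on the positive integers, the relation becomes $y - q = v - \tilde h * v$; that is, $v$ is exactly the sequence of round-off errors produced by the greedy recursion \eqref{quantizer}--\eqref{recursive} run with the strictly lower triangular, unit-diagonal Toeplitz matrix $\tilde H$ associated to $\tilde h$, and one defines $\Q_{\Sigma\Delta}(y) := q$ and then sets $u := g*v$.

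With this reduction in hand, the stability of the scheme in the new variable $v$ is immediate from Lemma \ref{lem:noise shaping}: since $\|\tilde H\|_{\infty\to\infty} = \|\tilde h\|_1$ for a (banded) Toeplitz matrix with symbol $\tilde h$, as long as $g$ is chosen so that
\[
\|\tilde h\|_1 = \|D^r g - \delta_0\|_1 \le 2L - \mu/\delta,
\]
Lemma \ref{lem:noise shaping} gives $\|v\|_\infty \le \delta$ in the real case. Transferring back, $u = g*v$ satisfies $y - q = D^r u$ with $\|u\|_\infty \le \|g\|_1\,\|v\|_\infty \le \delta\,\|g\|_1$. Thus the whole construction reduces to the deterministic extremal problem: \textbf{minimize $\|g\|_1$ over one-sided sequences $g$ with $g_0 = 1$ subject to $\|D^r g - \delta_0\|_1 \le 2L - \mu/\delta$}, and the proposition follows once we produce an admissible $g$ with $\|g\|_1 \lesssim \big(\lceil \pi^2/(\cosh^{-1}(2L-\mu/\delta))^2\rceil\,\tfrac{e}{\pi}\,r\big)^r$.

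For the extremal problem I would pass to generating functions: with $G(z) = \sum_{n\ge 0} g_n z^n$, the difference operator $D$ corresponds to multiplication by $(1-z)$, so $D^r g - \delta_0$ has generating function $(1-z)^r G(z) - 1$, and we seek a polynomial $G$ with $G(0)=1$, small coefficient $\ell^1$-norm of $(1-z)^r G(z)-1$, and small coefficient $\ell^1$-norm of $G$ itself. Writing $1 - (1-z)^rG(z) = Q(z)$ with $Q(0)=0$ and $(1-z)^r \mid 1 - Q(z)$, this becomes: choose $Q$ of small coefficient $\ell^1$-norm so that $G = (1-Q)/(1-z)^r$ also has small coefficient $\ell^1$-norm. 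After the appropriate change of variable this is a classical Chebyshev-type extremal problem (minimize the sup-norm on an interval of a polynomial normalized to value $1$ at an exterior point), whose solution is a rescaled Chebyshev polynomial; since $T_N(\cosh\theta) = \cosh(N\theta)$, the exterior point $2L-\mu/\delta$ contributes the factor $\cosh^{-1}(2L-\mu/\delta)$, the degree of the extremal polynomial is $\sim \lceil \pi^2/(\cosh^{-1}(2L-\mu/\delta))^2\rceil$, and estimating the resulting $\|g\|_1$ (the $r^r$ growth and the $e/\pi$ constant coming from Stirling-type bounds on the binomial coefficients of $(1-z)^{-r}$ over the truncated range) yields the stated bound. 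This is exactly the construction carried out in \cite{DGK2011}.

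The main obstacle is precisely this last step — the sharp solution of the extremal problem. A crude admissible choice, for instance taking $g$ to be the $r$-fold convolution of a fixed geometric sequence, already gives a bound of the shape $(c\,r)^r$, but with a constant $c$ that does not improve as the alphabet grows; extracting the correct dependence on $L$ through $\cosh^{-1}(2L-\mu/\delta)$, and in particular the $\pi^2/(\cosh^{-1}(\cdot))^2$ factor together with the near-optimal $e/\pi$, requires the Chebyshev extremal analysis and careful bookkeeping of the ceiling (the integer degree of the extremal polynomial) against the stability budget $2L - \mu/\delta$. Everything else — the change of variables $u = g*v$, the identity $y - q = D^r u$, and the stability estimate $\|v\|_\infty \le \delta$ — is routine once Lemma \ref{lem:noise shaping} is in place.
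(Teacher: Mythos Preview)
The paper does not prove this proposition; it is quoted from \cite{DGK2011} (via \cite{KSW2012}) as a black box, so there is no ``paper's own proof'' to compare against. Your sketch correctly reconstructs the construction of \cite{DGK2011}: the change of variables $u = g * v$ reducing the $r$th-order relation to a first-order greedy scheme governed by $\tilde h = \delta_0 - D^r g$, the application of Lemma~\ref{lem:noise shaping} to get $\|v\|_\infty \le \delta$ once $\|\tilde h\|_1 \le 2L - \mu/\delta$, and the resulting extremal problem of minimizing $\|g\|_1$ subject to that constraint, solved via a Chebyshev-type argument. That is precisely the architecture of the cited proof, so your proposal is on target; nothing further is needed here since the present paper only invokes the result.
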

In particular, this guarantees that even with the 1-bit alphabet, i.e.,  $\mathcal{A}=\{\pm 1\}$ and $\delta=1$, as assumed in binary embeddings, stability can be guaranteed as long as $\|y\|_\infty \leq \mu < 1$ with a corresponding bound 
\begin{equation}\label{eq:stableSD}
\|y\|_\infty \leq \mu < 1 \implies \|u\|_\infty \leq C \cdot c(\mu)^r \cdot r^r.
\end{equation}

Previous works (e.g., \cite{GLPSY2013, SWY2018}) using $\Sigma\Delta$ schemes in compressed sensing achieve a polynomially small error bound (in $m$). However, these approaches assume gaussian or subgaussian random matrices and the proof techniques are not easily extended to the case of structured random matrices. This is in part due to the role that $D^{-r}$, applied to $\Phi$, plays in the associated proofs; one essentially requires $D^{-r}\Phi$ to have a restricted isometry property. Due to the dependencies among the rows of $\Phi$ in the case of BOEs and PCEs, such a property  is difficult to prove. Nevertheless, in \cite{FKS2017}, the authors were first to study PCEs and prove  error bounds that decay polynomially in $m$. The approach proposed in \cite{FKS2017} uses a different sampling scheme to generate the PCE, a different reconstruction method (whose complexity scales polynomially in $m$, versus $p$ in our case), and requires a different proof technique. 

To circumvent the above issue, and generate the first results for BOEs,  we will need the following condensation operator. Let $$\lambda:= m/p=: r\tilde{\lambda} - r + 1$$ for some integer $\tilde{\lambda}$.
\begin{definition}[$\Sigma\Delta$ Condensation Operator]\label{SDcond}
Let $v$ be a \emph{row vector in $\R^{\lambda}$} whose entry $v_j$ is the $j$th coefficient of the polynomal ${(1 + z + \ldots + z^{\tilde{\lambda }- 1})^r}$. Define the $\Sigma\Delta$ condensation operator $V_{\Sigma \Delta}:\C^m \to \C^p$ by
\begin{equation}\label{sigma delta conden}
	V_{\Sigma \Delta}= I_p\otimes v = \begin{bmatrix}
	v& & &\\
	&v & &\\
	& & \ddots &\\
	& & & v
	\end{bmatrix}, 
\end{equation}
	and define its normalized versions
\[\widetilde{V}_{\Sigma\Delta} = \frac{9}{8 \|v\|_2\sqrt{p}}V_{\Sigma\Delta},\] 
and 
\[\widehat{V}_{\Sigma\Delta} = \frac{1}{\|v\|_2\sqrt{p}}V_{\Sigma\Delta}.\] 
\end{definition}
\begin{example}
In the case $r=1$, we have $v=(1,...,1)\in \R^{\lambda},$ while when $r=2$, $v=(1, \ 2,  \hdots, \ \tilde\lambda-1, \ \tilde\lambda, \ \tilde\lambda-1, \hdots, \ 2, \ 1)\in \R^{\lambda}.$
\end{example}
\begin{lemma}\label{lem:bound VD} For a stable $r$th order $\Sigma\Delta$ quantization scheme  we have
\[\|\widetilde{V}D^r\|_{\infty\to 2} \leq (8r)^{r+1}\lambda^{-r + 1/2}.\]

\end{lemma}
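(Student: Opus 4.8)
The inequality is purely about matrices, so stability of the scheme plays no role; what matters is (a) the cancellation structure inside $V_{\Sigma\Delta}D^r$ and (b) a lower bound on $\|v\|_2$. Since $\widetilde{V}_{\Sigma\Delta}=\frac{9}{8\|v\|_2\sqrt p}V_{\Sigma\Delta}$, the plan is to prove
\[
\|V_{\Sigma\Delta}D^r\|_{\infty\to 2}\le 2^r\sqrt p \qquad\text{and}\qquad \|v\|_2\ge \lambda^{r-1/2}r^{-r},
\]
and then combine them. Note that the naive submultiplicative bound $\|V_{\Sigma\Delta}D^r\|_{\infty\to\infty}\le\|V_{\Sigma\Delta}\|_{\infty\to\infty}\|D^r\|_{\infty\to\infty}=\tilde{\lambda}^r 2^r$ is far too weak (it would give a bound growing like $\sqrt\lambda$), so the point is to exploit the cancellation produced by applying $D^r$ to the smooth kernel $v$ — this is exactly the $\Sigma\Delta$ ``noise-shaping'' effect at the level of the condensation operator.

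\textbf{Step 1: sparsity of the rows of $V_{\Sigma\Delta}D^r$.} The row vector $v\in\R^\lambda$ is the coefficient sequence of $P(z)=(1+z+\cdots+z^{\tilde{\lambda}-1})^r$, which has degree exactly $\lambda-1$ since $\lambda=r\tilde{\lambda}-r+1$. Because $\deg P=\lambda-1$ and $\deg(1-z)^r=r$ are both attained, the generating-function identity
\[
(1+z+\cdots+z^{\tilde{\lambda}-1})^r(1-z)^r=(1-z^{\tilde{\lambda}})^r
\]
holds with \emph{no} truncation of coefficients, and translating it into the finite-matrix setting (a short but index-heavy computation, using $D^r=(I-S)^r$ with $S$ the down-shift) shows that the $i$th row of $V_{\Sigma\Delta}D^r$ has at most $r+1$ nonzero entries, equal to $(-1)^{r+j}\binom{r}{j}$ for $j=0,1,\dots,r$, located in the columns $(i-1)\lambda+j\tilde{\lambda}-r+1$ that are $\ge 1$ (the cases $r=1,2$ can be checked by hand as sanity checks: for $r=1$ one recovers the telescoping $u_{i\lambda}-u_{(i-1)\lambda}$). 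Columns with non-positive index are simply absent, which only decreases the norm, so no boundary bookkeeping is needed beyond noticing this. Consequently each row of $V_{\Sigma\Delta}D^r$ has $\ell_1$-norm at most $\sum_{j=0}^r\binom rj=2^r$, hence for any $u$ with $\|u\|_\infty\le 1$ every entry of $V_{\Sigma\Delta}D^r u$ is bounded by $2^r$ in modulus, giving $\|V_{\Sigma\Delta}D^r u\|_2\le 2^r\sqrt p$ and therefore $\|V_{\Sigma\Delta}D^r\|_{\infty\to 2}\le 2^r\sqrt p$.

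\textbf{Step 2: lower bound on $\|v\|_2$, and conclusion.} All coefficients of $P$ are nonnegative, so $\|v\|_1=P(1)=\tilde{\lambda}^r$, and Cauchy--Schwarz on the $\lambda$-dimensional vector $v$ gives $\|v\|_2\ge\|v\|_1/\sqrt\lambda=\tilde{\lambda}^r/\sqrt\lambda$. Since $\lambda=r\tilde{\lambda}-r+1\le r\tilde{\lambda}$ we have $\tilde{\lambda}\ge\lambda/r$, so $\|v\|_2\ge(\lambda/r)^r/\sqrt\lambda=\lambda^{r-1/2}r^{-r}$. Plugging Steps 1 and 2 into $\widetilde{V}_{\Sigma\Delta}D^r=\frac{9}{8\|v\|_2\sqrt p}V_{\Sigma\Delta}D^r$ yields
\[
\|\widetilde{V}_{\Sigma\Delta}D^r\|_{\infty\to 2}\le\frac{9}{8\|v\|_2\sqrt p}\cdot 2^r\sqrt p\le\frac{9}{8}\cdot 2^r r^r\,\lambda^{-r+1/2}=\frac98 (2r)^r\lambda^{-r+1/2},
\]
and since $(8r)^{r+1}=8r\cdot(8r)^r\ge 8\cdot 4^r r^r\ge\frac98(2r)^r$ for every integer $r\ge 1$, this is at most $(8r)^{r+1}\lambda^{-r+1/2}$, as claimed.

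\textbf{Main obstacle.} There is no deep difficulty, but the one genuinely delicate point is Step 1: carefully verifying that the polynomial identity transfers to the finite Toeplitz matrices $V_{\Sigma\Delta}$ and $D^r$ with the correct index alignment (block size $\lambda$ versus stride $\tilde{\lambda}$, and the ``reversed'' nature of right-multiplication by a lower-triangular Toeplitz matrix), so that one really gets only $r+1$ surviving entries per row rather than the $\lambda+r$ that appear before cancellation. Everything after that is elementary.
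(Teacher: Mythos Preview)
Your proof is correct and in fact cleaner than the paper's. Both arguments hinge on the same identity
\[
(1+z+\cdots+z^{\tilde\lambda-1})^r(1-z)^r=(1-z^{\tilde\lambda})^r,
\]
and both use the same lower bound $\|v\|_2\ge\|v\|_1/\sqrt\lambda=\tilde\lambda^r/\sqrt\lambda\ge\lambda^{r-1/2}r^{-r}$. The difference lies in how the polynomial identity is deployed. The paper works at the level of the action on a vector $u$: it proves a summation-by-parts lemma which rewrites $V_iD^ru$ as a ``bulk'' term $(-1)^r\sum_\ell(\Delta^rv)_{\ell+r}u_{t+\ell}$ plus a boundary term $\sum_{j=1}^r(-1)^j(\Delta^{j-1}v)_j(\Delta^{r-j}u)_t$. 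The bulk term is then controlled via Fourier series (giving $\le 2^r\|u\|_\infty$, exactly your $\ell_1$ row bound), while the boundary term is estimated crudely by $r2^{3r-2}\|u\|_\infty$, leading to $\|VD^r\|_{\infty\to\infty}\le r2^{3r-1}$.

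You instead compute the entries of the matrix $VD^r$ directly and observe that the cross-block entries are not a separate phenomenon but simply the low-$j$ terms of the same expansion of $(1-z^{\tilde\lambda})^r$; hence every row has $\le r+1$ nonzero entries with $\ell_1$-sum $\le 2^r$, and you get $\|VD^r\|_{\infty\to\infty}\le 2^r$. This avoids the summation-by-parts lemma entirely and yields the sharper intermediate constant $\tfrac98(2r)^r$ in place of the paper's $\tfrac{9}{16}\,8^r r^{r+1}$, both of which are then relaxed to $(8r)^{r+1}$. So your route is shorter and loses less; the paper's route has the minor advantage of isolating the boundary contribution explicitly, which may be conceptually reassuring but is not needed here.
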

The proof of Lemma \ref{lem:bound VD} is given in Section \ref{sec:technical}. 

\subsection{Distributed noise shaping}\label{sec:distributed}
In \cite{CG2016}, Chou and G\"unt\"urk achieved an exponentially small error bound in the quantization  of unstructured random (e.g., Gaussian) finite frame expansions by using a so-called \emph{distributed noise-shaping approach}. Our work in this paper  extends this approach handle the practically important cases of compressed sensing and binary embeddings, both with structured random matrices drawn from a BOE or a PCE. To that end, we now review distributed noise-shaping. 

Fixing $\beta > 1$, let $H_\beta$ be a $\lambda\times\lambda$ noise transfer operator given by 
\[
	(H_\beta)_{ij} = \left\{\begin{array}{ll}
	1& \text{if } i = j,\\
	-\beta & \text{if } i = j + 1,\\
	0& \text{otherwise}.
	\end{array}
	\right.
\]
Moreover, let $H:\C^m\to\C^m$ be the block-diagonal \emph{distributed noise-shaping transfer operator} given by
\begin{equation}
	H = I_p\otimes H_\beta = \begin{bmatrix}
	H_\beta& & &\\
	&H_\beta & &\\
	&  &\ddots &\\
	& & & H_\beta
	\end{bmatrix}
\label{nsoperator}
\end{equation}
In analogy with Definition \ref{SDcond} we now introduce the distributed noise-shaping condensation operator.

\begin{definition}[Distributed Noise-Shaping Condensation Operator]\label{Beta_cond}
Define the row vector \[v_\beta :=  [\beta^{-1} \; \beta^{-2} \quad\ldots\quad \beta^{-\lambda}] \] and  define the distributed noise-shaping condensation operator $V_{\beta}:\C^m \to \C^p$ by
\begin{equation}
	V_\beta= I_p\otimes v_{\beta} = \begin{bmatrix}
	v_\beta& & &\\
	&v_\beta & &\\
	& & \ddots &\\
	& & & v_\beta
	\end{bmatrix}.
\label{distributed condensation}
\end{equation}
	Define its normalized versions
\[\widetilde{V}_\beta = \frac{9}{8\|v_\beta\|_2\sqrt{p}}V_\beta,\]
and
\[\widehat{V}_\beta = \frac{1}{\|v_\beta\|_2\sqrt{p}}V_\beta.\]
\end{definition}
Notice that in this setting, $$V_\beta H = I_p\otimes (v_\beta H_\beta)$$ and consequently
\begin{equation}\label{bound VH}
\|V_\beta H\|_{\infty\to\infty} = \beta^{-\lambda} \implies \|\widetilde{V}_\beta H\|_{\infty\to 2} \leq \frac{9\beta^{-\lambda}}{8\|v_\beta\|_2} \leq \frac{9\beta^{-\lambda + 1}}{8} 
\end{equation}

since $v_\beta H_\beta = [0\; 0\ldots \beta^{-\lambda}]$ and $\|v_\beta\|_2 \geq \beta^{-1}.$ 


\section{Main results on binary embeddings} \label{sec:binary}
We are now ready to present our main results on binary (and more general) embeddings, showing that Algorithm \ref{alg:binary_embedding} yields quasi-isometric embeddings that preserve \emph{Euclidean} distances. We start with the case of finite sets, and we then present an analysis of our methods for infinite sets.  

\begin{theorem}[{\bf Binary embeddings of finite sets}]\label{thm: bin embed finite}
Consider a finite set 
$\T \subset B_1^n.$
Fix $\alpha \in (0, 1)$  and $\rho\in(0,1)$. 
Let $\Phi \in \R^{m\times n}, D_\epsilon\in \R^{n\times n}, \widetilde{V}\in \R^{p\times m}$ be as in Algorithm \ref{alg:binary_embedding} and suppose $\lambda = m/p$ is an integer. Define $\widetilde{\Phi} = \frac{8}{9}\Phi$ and let $\Q: \R^m \to \{\pm 1\}^m$, be the stable quantization scheme  corresponding either to the $r$th order $\Sigma\Delta$ quantization or to the distributed noise shaping quantization with $\beta\in(1, 10/9],$ that is used in Algorithm \ref{alg:binary_embedding}. 

Fix $\nu > 0$, and define the embedding map $f:\T \to \{\pm 1\}^m$ by $f = \Q\circ\widetilde{\Phi}\circ D_\epsilon$. Assume that ${k \geq 40(\log(8|\T|) + \nu)}$. Then there exist constants, $C_1>0$ and $C_2 > 0$,  that depend on the quantization such that if
\begin{equation}\label{eq:lower bound on p}
m \geq p \geq C_1 \frac{k\log^4 n}{\alpha^2\rho^2},
\end{equation}
the map $f$ satisfies
\[ \big|d_{\widetilde{V}}(f(x), f(\tilde{x})) - \|x - \tilde{x}\|_2\big| \leq {\alpha}\|x - \tilde{x}\|_2 + C_2\eta(\lambda), \quad \forall x,\tilde{x}\in\T\]
with probability exceeding $1-\rho - e^{-\nu}.$ 
Here, $\eta(\lambda) = \lambda^{-r + 1/2}$ 
if we use the $\Sigma\Delta$ quantization scheme and $\eta(\lambda)= \beta^{-\lambda+1}$ 
if we use the distributed noise-shaping scheme.
\end{theorem}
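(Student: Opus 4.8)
The proof rests on three pieces already in place: the restricted isometry property of $\widehat V\Phi$ (the main technical result, established in Section~\ref{sec:RIPproof} via Theorems~\ref{thm:KMR} and~\ref{thm:probability}), the Krahmer--Ward passage from RIP to Johnson--Lindenstrauss embeddings of a finite set (Theorem~\ref{thm:KrahmerWard}), and the noise-shaping relation~\eqref{ns relation} together with the operator bounds on $\widetilde V H$ from Lemma~\ref{lem:bound VD} and~\eqref{bound VH}.

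\emph{Step 1: the algebraic decomposition and stability.} For $a\in\mathcal{T}\subset B_1^n$, every row $a_j$ of $\Phi$ has $|a_{jk}|\le1$ (BOE) or $a_{jk}\in\{\pm1\}$ (PCE), so $\norm{\widetilde\Phi D_\epsilon a}_\infty\le\tfrac89\norm{a}_1\le\tfrac89<1$. Hence Lemma~\ref{lem:noise shaping} applies with $\mu=\tfrac89$ (for $\Sigma\Delta$ with the stability constants of Proposition~\ref{prop:stableSD}; for the distributed scheme the hypothesis $\beta+\mu\le2$ is exactly the restriction $\beta\le10/9$), giving $f(a)\in\{\pm1\}^m$ and a state vector $u_a$ with $\norm{u_a}_\infty\le C$, where $C$ is dimension-free (and, for $\Sigma\Delta$, $r$-dependent), such that $\widetilde\Phi D_\epsilon a-f(a)=Hu_a$. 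Subtracting the relations for $a,b$, applying $\widetilde V$, and using $\widetilde V\widetilde\Phi=\tfrac98\widehat V\cdot\tfrac89\Phi=\widehat V\Phi$,
\[ \widetilde V\big(f(a)-f(b)\big)=\widehat V\Phi D_\epsilon(a-b)-\widetilde V H(u_a-u_b), \]
whence
\[ \big|d_{\widetilde V}(f(a),f(b))-\norm{a-b}_2\big|\le\big|\norm{\widehat V\Phi D_\epsilon(a-b)}_2-\norm{a-b}_2\big|+\norm{\widetilde V H}_{\infty\to2}\,\norm{u_a-u_b}_\infty. \]

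\emph{Step 2: the two terms.} The second term is deterministic: for $\Sigma\Delta$, Lemma~\ref{lem:bound VD} gives $\norm{\widetilde V D^r}_{\infty\to2}\le(8r)^{r+1}\lambda^{-r+1/2}$, which with $\norm{u_a-u_b}_\infty\le2C$ is at most $C_2\lambda^{-r+1/2}=C_2\eta(\lambda)$; for the distributed scheme~\eqref{bound VH} gives $\norm{\widetilde V_\beta H}_{\infty\to2}\le\tfrac98\beta^{-\lambda+1}$, which with $\norm{u_a-u_b}_\infty\le2$ is at most $C_2\beta^{-\lambda+1}=C_2\eta(\lambda)$. For the first term, the RIP result of Section~\ref{sec:RIPproof} gives that once $m\ge p\ge C_1k\log^4 n/(\alpha\rho)^2$, with probability at least $1-\rho$ the matrix $\widehat V\Phi$ satisfies $\sup_{x\in T_k}\big|\norm{\widehat V\Phi x}_2^2-\norm{x}_2^2\big|\lesssim\alpha$, i.e.\ a $(\delta,k)$-RIP with $\delta\lesssim\alpha$; here the power $\rho^{-2}$ in~\eqref{eq:lower bound on p} comes from converting the expectation bound of the technical theorem into a tail bound by Markov's inequality, which trades off against the $\alpha^{-2}$ already present. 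Conditioning on this event and using $k\ge40(\log(8|\mathcal{T}|)+\nu)$, Theorem~\ref{thm:KrahmerWard} applied to $\Psi=\widehat V\Phi$ and the independent sign matrix $D_\epsilon$, on the finite set of pairwise differences $\{x-\tilde x:x,\tilde x\in\mathcal{T}\}$ (of cardinality at most $|\mathcal{T}|^2$, which only affects absolute constants in the sparsity requirement), yields with probability at least $1-e^{-\nu}$ over $D_\epsilon$ that $\big|\norm{\widehat V\Phi D_\epsilon(a-b)}_2^2-\norm{a-b}_2^2\big|\le\alpha\norm{a-b}_2^2$ for all $a,b\in\mathcal{T}$ simultaneously; since $\sqrt{1-\alpha}\ge1-\alpha$ and $\sqrt{1+\alpha}\le1+\tfrac{\alpha}{2}$ on $[0,1]$, this gives $\big|\norm{\widehat V\Phi D_\epsilon(a-b)}_2-\norm{a-b}_2\big|\le\alpha\norm{a-b}_2$. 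Summing the two bounds and taking a union bound over the two events yields the theorem with probability at least $1-\rho-e^{-\nu}$ (after a harmless rescaling of the absolute constants hidden in $\alpha$).

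\emph{Main obstacle.} The genuinely difficult ingredient --- showing that $\widehat V\Phi$ has the RIP despite the strong dependencies in the BOE/PCE models --- is deferred to Section~\ref{sec:RIPproof} and used here as a black box; within the present argument the points requiring care are (i) verifying the stability hypotheses so that $f$ maps into $\{\pm1\}^m$ with a dimension-free $\norm{u}_\infty$, which is exactly where $\mathcal{T}\subset B_1^n$, the $8/9$ rescaling, and $\beta\le10/9$ enter; (ii) the constant bookkeeping linking the squared-norm RIP deviation, the Krahmer--Ward distortion, and the target $\alpha$, together with the Markov step responsible for the $\rho^{-2}$ in~\eqref{eq:lower bound on p}; and (iii) noting that passing from $\mathcal{T}$ to its difference set only costs a constant factor in the required sparsity, so the stated hypothesis on $k$ suffices.
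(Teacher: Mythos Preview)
Your proof is correct and follows essentially the same approach as the paper: the decomposition via the triangle inequality into a Johnson--Lindenstrauss term (handled by Theorem~\ref{main theorem} with $\alpha\rho$ in place of $\alpha$, Markov's inequality, and Theorem~\ref{thm:KrahmerWard}) plus a deterministic quantization term (handled by the noise-shaping relation, the stability guaranteed by $\|\widetilde\Phi D_\epsilon z\|_\infty\le 8/9$, and the operator-norm bounds of Lemma~\ref{lem:bound VD} or~\eqref{bound VH}). Your explicit observation that $\widetilde V\widetilde\Phi=\widehat V\Phi$ and your remark that passing to the difference set only costs a constant factor in the sparsity requirement are both implicit in the paper's argument; otherwise the two proofs are organized identically.
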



\begin{proof}
The proof follows from Theorem \ref{thm:KrahmerWard} and our main technical result, Theorem \ref{main theorem}, respectively showing that RIP matrices with randomized column signs provide Johnson-Lindenstrauss embeddings, and that the matrices $\widetilde{V}\widetilde{\Phi}$ satisfy the appropriate RIP. The proof also requires bounds on $\|\widetilde{V}H\|_{\infty \to 2}$ to obtain the advertised decay rates. Indeed, 
 we have
\begin{align}
\big|d_{\widetilde{V}}(f(x), f(\tilde{x})) - \|x - \tilde{x}\|_2\big| \leq  \big|d_{\widetilde{V}}(f(x), f(\tilde{x})) & - \|\widetilde{V}\widetilde{\Phi} D_\epsilon\, (x - \tilde{x})\|_2\big| \notag \\ &+ \big| \|\widetilde{V}\widetilde{\Phi} D_\epsilon\, (x - \tilde{x})\|_2 - \|x - \tilde{x}\|_2\big|
\label{eq:summands}\end{align}
so we must bound the two summands on the right to obtain our result and we start with the second term. 

By \eqref{bound expected} in  Theorem \ref{main theorem}, with $\alpha\rho$ in place of $\alpha$, and Markov's inequality, for any $\rho\in(0,1)$, $\widetilde{V}\widetilde{\Phi}$ satisfies $(\alpha, k)$-RIP with probability exceeding $1-\rho$, provided that $p \geq C_1k\log^4 n/(\rho^2\alpha^2)$ for some constant $C_1 > 0$. Hence by Theorem \ref{thm:KrahmerWard}, with probability at least $1-\rho - e^{-\nu}$, the matrix $\widetilde{V}\widetilde{\Phi} D_\epsilon$ satisfies
\[ 
	\big|\|\widetilde{V}\widetilde{\Phi} D_\epsilon (x - \tilde{x})\|_2^2 - \|x - \tilde{x}\|_2^2\big| \leq {\alpha}\|x - \tilde{x}\|_2^2,
\]
which yields
\begin{equation}\label{eq:summand1}
	\big|\|\widetilde{V}\widetilde{\Phi} D_\epsilon (x - \tilde{x})\|_2 - \|x - \tilde{x}\|_2\big| \leq {\alpha}\|x - \tilde{x}\|_2
\end{equation}
simultaneously for all $x, \tilde{x} \in\T.$ 
To bound the first summand in \eqref{eq:summands}, note that in the case of $\Sigma\Delta$ quantization, for any $z\in\mathcal{T}$, \eqref{SDeq} implies that 
\[\|\widetilde{V}f(z) - \widetilde{V}\widetilde{\Phi} D_\epsilon\, z\|_2 = \|\widetilde{V}D^r u_z\|_2 \leq \|\widetilde{V}D^r  \|_{\infty \to 2} \|u_z \|_\infty \leq   C_1 c_2(\mu)^r r^{2r} \lambda^{-r + 1/2},\]
where $u_z$ is the resulting state vector of $z$ as in (\ref{SDeq}).

To obtain the last inequality above,  use Lemma  \ref{lem:bound VD} to bound  $\|\widetilde{V}D^r \|_{\infty \to 2}$; additionally, 
since all the entries of the matrix $\Phi$ are uniformly bounded by $1$ and $ \|z\|_1 \leq 1$, we obtain ${\|\widetilde{\Phi} D_\epsilon z\|_\infty = (8/9) \| \Phi D_\epsilon z\|_\infty \leq 8/9}$ that allows us to invoke the stability bound \eqref{eq:stableSD} to control $\|u_z\|_\infty$. 
Consequently by the triangle inequality we have
\begin{align}\label{eq:summand2}
 \big| \|\widetilde{V}(f(x)-f(\tilde{x}))\|_2 -\| \widetilde{V}\Phi D_\epsilon\, (x-\tilde{x})\|_2 \big| \leq 2C_1 c_2(\mu)^r r^{2r}\lambda^{-r + 1/2},
\end{align}
which yields the desired result for $\Sigma\Delta$ quantization when combined with \eqref{eq:summands}, and \eqref{eq:summand1}.
The exact same technique yields the advertised bound for distributed noise shaping, albeit we now use \eqref{bound VH} in place of Lemma \ref{lem:bound VD}.



\end{proof}

{\color{black}\begin{remark} (Extension to the $\ell_2$ ball) \label{rem:set T}
Theorem \ref{thm: bin embed finite} can be  modified to apply to finite subsets of the unit Euclidean ball instead of the $\ell_1$ ball.  Indeed, under the condition (\ref{eq:lower bound on p}) and the considered event in Theorem \ref{thm: bin embed finite}, the matrix $\frac{1}{\sqrt{m}}\Phi$ satisfies the $(\alpha, k)$-RIP   by Theorems \ref{thm:RV} and \ref{thm:RIP-KMR}. This implies $\frac{1}{\sqrt{m}}\|\Phi D_\epsilon z\|_2 \leq \sqrt{1 + \alpha}\|z\|_2 \leq \sqrt{1 + \alpha}$ for all $z\in \T$ by Theorem \ref{thm:KrahmerWard}. Hence, $\|\Phi D_\epsilon z\|_\infty \leq \sqrt{1 + \alpha}\sqrt{m}.$ So scaling the quantization alphabet carefully, e.g., considering $\{\pm \sqrt{(1+\alpha)m}\}$ in place of $\{\pm 1\}$, ensures our quantization schemes are stable with stability constant $ \sqrt{(1+\alpha)m}$ for all $z\in\T$. This yields the same error bound up to the transformation $\eta(\lambda) \mapsto \sqrt{(1+\alpha)m} ~\eta(\lambda)$. The price of this extension is that for the new $\eta(\lambda)$ to be small (e.g., less than one) and decreasing with $m$ one would require $m \gtrsim p^\frac{r-1/2}{r-1}$ in the case of $\Sigma\Delta$ quantization and $m\gtrsim p\log{m}$ in the case of distributed noise shaping.  This extension is also applicable to Theorem \ref{thm: bin embed} below by using Theorem \ref{thm:OymakRechtSoltanolkotabi} in place of Theorem \ref{thm:KrahmerWard}. 

Alternatively, in the case of a finite subset $\T\subset B_2^n$, to avoid the lower bound on $m$   one can use Bernstein's inequality and a union bound over the $m$ rows of $\Phi$ followed by a union bound over all $x\in\mathcal{T}$ (instead of invoking the restricted isometry property and Theorem \ref{thm:RIP-KMR}). This would yield the bound $\|\Phi D_\epsilon x\|_\infty<c\log n$ for all $x\in \mathcal{T}$, with high probability, provided $|\T|$ is at most polynomial in $n$. In turn, scaling the alphabet by $c\log n$ would imply the same bound in Theorem \ref{thm: bin embed finite} up to the transformation $\eta(\lambda) \mapsto c\eta(\lambda)\log n$.
\end{remark}}

\begin{remark}(The condition on $\beta$) Note that the condition $\beta \in (1,10/9]$ is just a byproduct of the convenient normalization of $\widetilde{\Phi}$ that we chose, and the resulting bound on $\|\widetilde{\Phi}D_\epsilon\|_{2\to\infty}$. Different normalizations would have allowed pushing $\beta$ closer to $2$. \end{remark}

\begin{remark}(Root-exponential error decay)\label{rem:root_exp}
In the case of $\Sigma\Delta$ quantization, one can optimize the bound \eqref{eq:summand2} by selecting $r$ as a function of $\lambda$. Choosing the $\Sigma\Delta$ scheme of order $r^*$, where $r^*$ minimizes the function $c_2(\mu)^r r^{2r}\lambda^{-r + 1/2}$ (see, e.g., \cite{Gunturk2003, KSW2012} for a similar detailed calculation) yields a quantization error decay of $exp(-c\sqrt{\lambda})$ for some constant $c$ so that
 \[ \big|d_{\widetilde{V}}(f(x), f(\tilde{x})) - \|x - \tilde{x}\|_2\big| \leq {\alpha}\|x - \tilde{x}\|_2 + C e^{-c\sqrt{\lambda}}, \quad \forall x,\tilde{x}\in\T.\]

\end{remark}

\begin{remark}(Optimizing the dimension $p$)\label{rem:root_exp}
The parameter $\alpha$ in Theorem \ref{thm: bin embed finite} behaves essentially as $\frac{1}{\sqrt{p}}$, and  $\|x-\tilde{x}\|_2 \leq 2$ for points in the unit $\ell_1$ ball. So, one can consider choosing $p$ to improve the bounds $C( \frac{1}{\sqrt{p}} + e^{-c \sqrt{m/p}} )$ and $C( \frac{1}{\sqrt{p}} + \beta^{- {m/p}+1})$ in the case of $\Sigma\Delta$ and distributed noise-shaping quantization respectively. Slightly sub-optimal but nevertheless good choices of $p$ are then $p_{\Sigma\Delta} \approx m/\log^2(m)$
and $p_{dist.} \approx m/\log(m)$, where the $\approx$ notation hides constants. In turn this yields a quantized Johnson-Lindenstrauss error decay of the form $\frac{\log{m}}{\sqrt{m}}$ and $\sqrt{\frac{\log m}{m}}$ for the two schemes, respectively. 
\emph{In other words, quantizing the Johnson-Lindenstrauss Lemma only costs us a logarithmic factor in the embedding dimension.}
\end{remark}

\begin{remark}\label{rem:bigger_set}(Embedding into $\mathcal{A}^m$) It is easy to see from the proof that one could replace the cube $\{\pm 1\}^m$ by the set $\mathcal{A}^m$ generated by the $2L$ level alphabet  in \eqref{eq:alphabet} provided $(2L-1)\delta \geq 1$. In turn, this simply reduces the constant $C_2$ in Theorem \ref{thm: bin embed finite} by a factor of $L$.\end{remark}

\begin{remark}(Unstructured random matrices)\label{rem:Gaussian}
Note that the only requirement on $\widetilde{\Phi}$ in the proof is that $\widetilde{V}\widetilde{\Phi}$ satisfies an appropriate restricted isometry property. It is not hard to see that matrices $\widetilde{\Phi}$ with independent entries drawn from appropriately normalized Gaussian or subgaussian distributions yield such a property. Hence a version of our results holds for such matrices, and we leave the details to the interested reader. 
\end{remark}

We now present our result on embedding arbitrary subsets of the unit ball, and comment that the contents of Remarks \ref{rem:set T}, \ref{rem:root_exp}, \ref{rem:bigger_set}, and \ref{rem:Gaussian} apply to the theorem below as well, with minor modification.
\begin{theorem}[{\bf Binary embeddings of general sets}]\label{thm: bin embed}
Consider the same setup as Theorem \ref{thm: bin embed finite} albeit now with \emph{any} set $\T \subset B_1^n$.  
Let $\gamma = \|v\|_1/\|v\|_2$. Suppose that 
\begin{equation}\label{condition on m}
m\geq p \geq C_1\gamma^2(1 + \nu)^2\log^4 n \frac{\max\left\{1, \frac{\omega^2(\T - \T)}{(\rad(\T - \T))^2}\right\}}{\alpha^2}.
\end{equation}
%
%
%
Then, with probability at least $1 - e^{-\nu}$, the map $f:\mathcal{T} \to \{-1, 1\}^m$ given by $f = \Q \circ\widetilde{\Phi}\circ D_{\epsilon}$  ensures that
\[ \left|d_{\widetilde{V}}(f(x), f(\tilde{x})) - d_{\ell_2}(x, \tilde{x}) \right| \leq \max\{\sqrt{\alpha}, \alpha\}\rad(\T - \T) + C_2\eta(\lambda), \quad\text{for all } x, \tilde{x} \in \T. \]
Here $\eta := \lambda^{-r + 1/2}$ if we use a stable $\Sigma\Delta$ quantization scheme; and ${\eta := \beta^{-\lambda + 1}}$ if we use the distributed noise-shaping scheme.
\end{theorem}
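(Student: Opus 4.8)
The plan is to follow the same two-step template as in the proof of Theorem~\ref{thm: bin embed finite}, but to replace the Krahmer--Ward result (Theorem~\ref{thm:KrahmerWard}), which only handles finite sets, by the Oymak--Recht--Soltanolkotabi dimension-reduction result (Theorem~\ref{thm:OymakRechtSoltanolkotabi}) applied to the difference set $\T-\T$. For fixed $x,\tilde x\in\T$ we split
\begin{align*}
\big|d_{\widetilde{V}}(f(x), f(\tilde{x})) - \|x - \tilde{x}\|_2\big| \leq{} & \big|d_{\widetilde{V}}(f(x), f(\tilde{x})) - \|\widetilde{V}\widetilde{\Phi} D_\epsilon (x - \tilde{x})\|_2\big| \\
& {}+ \big|\|\widetilde{V}\widetilde{\Phi} D_\epsilon (x - \tilde{x})\|_2 - \|x - \tilde{x}\|_2\big|
\end{align*}
and bound the two summands separately.

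The first summand (the ``quantization'' term) is handled exactly as in Theorem~\ref{thm: bin embed finite}, and does not care whether $\T$ is finite: since every entry of $\Phi$ is bounded by $1$ and $\|x\|_1,\|\tilde x\|_1\le 1$, we have $\|\widetilde\Phi D_\epsilon x\|_\infty,\|\widetilde\Phi D_\epsilon\tilde x\|_\infty\le 8/9$, so the stability bound \eqref{eq:stableSD} controls the state vectors $u_x,u_{\tilde x}$; then the noise-shaping relation $\widetilde\Phi D_\epsilon z - f(z) = Hu_z$ together with Lemma~\ref{lem:bound VD} (for $\Sigma\Delta$, $H=D^r$) or \eqref{bound VH} (for distributed noise shaping) gives
\[\big\|\widetilde V(f(x)-f(\tilde x)) - \widetilde V\widetilde\Phi D_\epsilon(x-\tilde x)\big\|_2 \le 2\|\widetilde V H\|_{\infty\to2}\max_{z\in\{x,\tilde x\}}\|u_z\|_\infty \le C_2\,\eta(\lambda),\]
with $\eta(\lambda)=\lambda^{-r+1/2}$ or $\beta^{-\lambda+1}$ as claimed, and $C_2$ depending only on the quantization scheme.

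The real content is the second summand, where we need $\widetilde V\widetilde\Phi D_\epsilon$ to be norm-preserving \emph{uniformly over the possibly infinite set $\T-\T$}. I would invoke Theorem~\ref{thm:OymakRechtSoltanolkotabi} with $\mathcal T\leftarrow\T-\T$, $\eta\leftarrow\nu$, $s=150(1+\nu)$, and $\tilde\alpha = \alpha\,\rad(\T-\T)/(C\max\{\rad(\T-\T),\omega(\T-\T)\})$; its hypothesis is that $\widetilde V\widetilde\Phi$ (which equals $\widehat V\Phi$ since $\widetilde V=(9/8)\widehat V$ and $\widetilde\Phi=(8/9)\Phi$) obeys the multiresolution RIP at sparsity $s$ and distortion $\tilde\alpha$, i.e.\ the $(2^{\ell/2}\tilde\alpha,\,2^\ell s)$-RIP for every $\ell\in\{1,\dots,L\}$, $L=\lceil\log_2 n\rceil$. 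The point that makes this affordable is that $(2^\ell s)/(2^{\ell/2}\tilde\alpha)^2 = s/\tilde\alpha^2$ is \emph{independent of $\ell$}, so the per-level RIP requirement on $p$ supplied by our main technical result Theorem~\ref{main theorem} --- which contributes the $\gamma^2$ and $\log^4 n$ factors --- is the same for every level; a union bound over the $L$ levels (inflating $\nu$ only by $\log L\lesssim\log\log n$, absorbed into the stated polylog and $(1+\nu)^2$) then yields the MRIP with probability at least $1-e^{-\nu}$, precisely under condition \eqref{condition on m} after writing $1/\tilde\alpha^2 = (C^2/\alpha^2)\max\{1,\omega^2(\T-\T)/\rad^2(\T-\T)\}$. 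Theorem~\ref{thm:OymakRechtSoltanolkotabi} then gives $\sup_{z\in\T-\T}\big|\|\widetilde V\widetilde\Phi D_\epsilon z\|_2^2-\|z\|_2^2\big|\le\max\{\alpha,\alpha^2\}\rad^2(\T-\T)$, and the elementary inequality $|a-b|\le\sqrt{|a^2-b^2|}$ for $a,b\ge0$ converts this into $\big|\|\widetilde V\widetilde\Phi D_\epsilon z\|_2-\|z\|_2\big|\le\max\{\sqrt\alpha,\alpha\}\rad(\T-\T)$. Combining the two summand bounds, together with a final union over the at most two failure events, proves the theorem.

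I expect the genuinely hard step --- showing that $\widehat V\Phi$ has the RIP at a given sparsity/distortion level despite the dependencies among the rows of BOE/PCE matrices --- to be no part of this argument at all; it is the separately established Theorem~\ref{main theorem} (Section~\ref{sec:RIPproof}). The only obstacles internal to \emph{this} proof are bookkeeping: (i) checking that the $\ell$-independence of $s/\tilde\alpha^2$ really allows imposing the lower bound on $p$ only once, so that upgrading RIP to MRIP costs nothing beyond the union bound over $L$ levels and no additional powers of $\log n$; and (ii) carefully tracking the normalizations ($\widetilde V\widetilde\Phi=\widehat V\Phi$) and the square-root passage so that $\rad(\T-\T)\le 2\rad(\T)\le 2$ enters with the correct power and the final probability is exactly $1-e^{-\nu}$.
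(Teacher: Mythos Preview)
Your proposal is correct and follows essentially the same approach as the paper: both split into the quantization term (handled via stability plus Lemma~\ref{lem:bound VD} or \eqref{bound VH}) and the embedding term, obtain the MRIP for $\widetilde V\widetilde\Phi=\widehat V\Phi$ from Theorem~\ref{main theorem} by noting that $(2^\ell s)/(2^{\ell/2}\tilde\alpha)^2$ is level-independent and taking a union bound over the $L$ levels (absorbing $\log L$ into $\nu$), then apply Theorem~\ref{thm:OymakRechtSoltanolkotabi} on $\T-\T$ and pass from squared norms to norms. Your bookkeeping is in fact slightly tidier than the paper's, which is a bit cavalier about combining the MRIP failure event with the $D_\epsilon$ failure event from Theorem~\ref{thm:OymakRechtSoltanolkotabi} into a single $1-e^{-\nu}$.
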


\begin{proof}


Note that as in Theorem \ref{thm: bin embed finite}, our quantization scheme is stable since $\|\widetilde{\Phi} D_\epsilon z\|_\infty \leq 8/9$ for all $z\in \mathcal T$. To start, as in \cite{ORS2015}, let $L = \lceil \log_2 n \rceil$. Fix $k$ and $\alpha$, then by Theorem \ref{main theorem} we have that for any level $\ell \in \{0, 1, \ldots, L\}$ 
\[ \PP \left(\sup_{x\in T_{2^\ell k}}\left|\|\widetilde{V}\widetilde{\Phi} x\|_2^2 - \|x\|_2^2\right| \geq 2^{\ell/2} \alpha\right) \leq e^{-\nu}\]
if $p \gtrsim k(\log^4 n + \gamma^2 \nu)/\alpha^2.$
Notice that by Theorem \ref{main theorem}, coupled with a union bound applied to $L$ levels and a change of variable $\nu \mapsto \nu + \log L$, we have that $\widetilde{V}\widetilde{\Phi}$ satisfies the Multiresolution RIP with sparsity $k$ and distortion ${\alpha} > 0$ with probability exceeding $1 - e^{-\nu}$ whenever
\begin{equation}\label{bound on m}
m \geq p \gtrsim \frac{k[\log^4 n + \gamma^2(\nu + \log(L + 1))]}{\alpha^2}.
\end{equation}
Since $\log(L + 1) \leq \log^4 n$ and $\log^4 n + \nu\leq (1 + \nu)\log^4 n$, we can achieve \eqref{bound on m} by setting
\[
	m \geq p \gtrsim \gamma^2(1 + \nu)\frac{k\log^4 n}{{\alpha}^2}.
\]
Consequently, we may now apply Theorem \ref{thm:OymakRechtSoltanolkotabi} (with $\widetilde{V}\widetilde{\Phi}$ in place of $H$) with 
$$k = 150(1+ \nu) \quad\text{and}\quad \tilde{\alpha} = \frac{\alpha}{C\max\left(1, \frac{\omega(\T - \T)}{\rad(\T - \T)}\right)}$$  
to conclude that 
\[
	\sup_{x\in\T-\T}\left|\|\widetilde{V}\widetilde{\Phi} D_ {\epsilon} x\|_2^2 - \|x\|_2^2 \right| \leq \max\{\alpha, \alpha^2\}(\rad(\T-\T))^2
\]
with probability exceeding $1 - e^{-\nu}.$ Indeed, this happens whenever
\[ m\geq p \geq C_1\gamma^2(1 + \nu)^2\log^4 n \frac{\max\left\{1, \frac{\omega^2(\T-\T)}{(\rad(\T-\T))^2}\right\}}{\alpha^2}.\]

%
%
%
%
%
Conditioning on the events happening simultaneously, we have that for all $x\in \T-\T$
\begin{equation}\label{eq:string}
\left| \frac{\|\widetilde{V}\widetilde{\Phi} D_{\epsilon} x\|_2}{\|x\|_2} - 1 \right|^2 \leq \left| \frac{\|\widetilde{V}\widetilde{\Phi} D_{\epsilon} x\|_2^2}{\|x\|_2^2} - 1 \right| \leq \frac{\max\{\alpha, \alpha^2\}(\rad(\T-\T))^2}{\|x\|_2^2}.
\end{equation}
This yields
\begin{equation}
\left|\|\widetilde{V}\widetilde{\Phi} D_{\epsilon} x\|_2 - \|x\|_2 \right| \leq \max\{\sqrt{\alpha}, \alpha\}\rad(\T-\T) \quad\text{for all } x\in\T-\T.
\end{equation}
Then for any $w$ and $v$ in $\T$, applying the quantization state equations \eqref{ns relation}, we have (as in the proof of Theorem \ref{thm: bin embed finite}) and proceeding as in the proof of that theorem,
\begin{align*}
\big|\|\widetilde{V}\Q(\widetilde{\Phi} D_{\epsilon} w)  -  \widetilde{V}\Q(\widetilde{\Phi} D_{\epsilon} v)\|_2 - &\|w - v\|_2  \big| \\ &= \left|\|(\widetilde{V}\widetilde{\Phi} D_{\epsilon} w - \widetilde{V}\widetilde{\Phi} D_{\epsilon} v) - (\widetilde{V}Hu_w - \widetilde{V}Hu_v)\|_2 - \|w - v\|_2\right|\\
&\leq \left| \|\widetilde{V}\widetilde{\Phi} D_{\epsilon} (w - v)\|_2 - \|w - v\|_2\right| + \left|\widetilde{V}H(u_w - u_v)\right|\\
&\leq \max\{\sqrt{\alpha}, \alpha\}\rad(\T - \T) + C_2\eta.
\end{align*}
\end{proof}

\begin{remark}
In distributed noise-shaping quantization, we can bound $\gamma^2 \leq \frac{2\beta}{\beta-1}=:C_\beta$ if $1<\beta \leq \frac{10}{9},$ so we can set the condition \eqref{condition on m} as $m\geq p \geq C_\beta C_1(1 + \nu)^2\log^4(n) \max\left\{1, \frac{\omega^2(\T - \T)}{(\rad(\T - \T))^2}\right\}\alpha^{-2}$. In $\Sigma\Delta$ quantization, we can bound $\gamma^2 \leq \lambda = m/p.$ Hence, the condition \eqref{condition on m} becomes $m\geq p \geq C_1\sqrt{m}\log^2(n)\max\left\{1, \frac{\omega(\T - \T)}{(\rad(\T - \T))}\right\}\alpha^{-1}.$ Note that this restriction on dimensions in the $\Sigma\Delta$ case, i.e., $ \sqrt{m}\lesssim p\leq m$ is not problematic as one should use $p\approx \frac{m}{\log(m)^2}$ as described in Remark \ref{rem:root_exp}, and this choice satisfies the restriction.

\end{remark}

\section{Main results on compressed sensing with BOE and PCE: Quantizers, decoders, and error bounds}\label{sec:CSresults}
Herein, we present our results on the compressed sensing acquisition and reconstruction paradigm outlined in Algorithm \ref{alg:compressed_sensing}. First, we cover the case when the acquired BOE or PCE measurements $\Phi x$ are quantized using a $\Sigma\Delta$ scheme, and we show polynomial error decay of the quantization error. Second, we present an analogous result with distributed noise shaping quantization, and we show exponential error decay of the quantization error as a function of the number of measurements.

\begin{theorem}[Recovery results for $\Sigma\Delta$ quantization]\label{thm:Sigma Delta}
Let $k$ and $p$ be in $\{1, \ldots, m\}$ such that the sampling ratio $\lambda := m/p$ is an integer. Let $\Phi\in\C^{m\times n}$ and $\widehat{V}\in\R^{p\times m}$ be as in Algorithm \ref{alg:compressed_sensing}.
Denote by $\Q_{\Sigma \Delta}^r$ a stable\footnote{e.g., $\Q^r_{\Sigma\Delta}$ can be either as in Proposition \ref{prop:stableSD} or as in Lemma \ref{lem:noise shaping} with  $2^r + 1/\delta \leq 2L$ and $H=D^r$ in the latter case.} $r$th-order scheme with the alphabet $\A$. For $\nu > 0$, there exists an absolute constant $c$ such that whenever
$$m \geq p \geq c\max\left\{k\log^4 n, \sqrt{km \nu}\right\}$$
 the following holds with probability at least $1 - e^{-\nu}$ on the draw of $\Phi$.

Let $x\in \R^n$ such that $\|\Phi x\|_\infty \leq \mu <1$ 
and  $q := \Q_{\Sigma \Delta}^r(\Phi x)$, then the solution $\widehat{x}$ to \eqref{l1-min variant} satisfies
\[
	\|\widehat{x} - x\|_2 \leq C_1 \left(\frac{m}{p}\right)^{-r + 1/2}\delta + C_2\frac{\sigma_k(x)_1}{\sqrt{k}}
\]
where $C_1$ and $C_2$ are explicit constants given in the proof.
\end{theorem}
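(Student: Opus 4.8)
The plan is to reduce the statement to a standard compressed-sensing recovery guarantee by (a) verifying that the ``condensed'' matrix $\widehat V\Phi$ satisfies a restricted isometry property, and (b) controlling the size of the constraint parameter $\eta$ in \eqref{l1-min variant}. Applying $\widehat V$ to the $\Sigma\Delta$ noise-shaping relation $\Phi x - q = D^r u$ gives $\widehat V\Phi x - \widehat V q = \widehat V D^r u$, so $x$ is feasible for \eqref{l1-min variant} as soon as $\eta \geq \|\widehat V D^r u\|_2$; we will set $\eta$ equal to an explicit upper bound of this quantity. Once $\widehat V\Phi$ has the RIP of order $2k$ with a sufficiently small constant, Theorem \ref{FR Theorem} applied with $A = \widehat V\Phi$ and $b = \widehat V q$ yields $\|\widehat x - x\|_2 \le C\eta + \tfrac{D}{\sqrt k}\sigma_k(x)_1$, and it remains only to substitute the bound on $\eta$.

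\textbf{Step 1 (RIP of $\widehat V\Phi$).} I would invoke Theorem \ref{main theorem}, the paper's main technical result, with sparsity level $2k$ and with $\alpha$ taken to be a small absolute constant, small enough that the resulting RIP constant $\delta_{2k}(\widehat V\Phi)$ is below $4/\sqrt{41}$. Its expectation estimate needs $p \gtrsim k\log^4 n$, while its tail bound — obtained from the chaos estimate of Theorem \ref{thm:KMR} together with the Bernstein inequality of Theorem \ref{thm:probability}, and carefully accounting for the dependencies built into Construction \ref{distribution} — upgrades this to a $1 - e^{-\nu}$ guarantee at the cost of the variance term, producing the requirement $p \gtrsim \max\{k\log^4 n,\, \sqrt{km\nu}\}$, which is precisely the hypothesis on $p$.

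\textbf{Step 2 (bounding $\eta$).} Since $\|\Phi x\|_\infty \le \mu < 1$, the stable $r$th-order $\Sigma\Delta$ scheme (Proposition \ref{prop:stableSD}, or the bound \eqref{eq:stableSD}) produces a state vector $u$ with $\|u\|_\infty \le C\,c(\mu)^r r^r \delta =: c(r,\mu)\delta$. Combining this with the operator-norm estimate of Lemma \ref{lem:bound VD}, rescaled by the factor $8/9$ since $\widehat V = \tfrac{8}{9}\widetilde V$, gives
\[
\|\widehat V D^r u\|_2 \;\le\; \|\widehat V D^r\|_{\infty\to 2}\,\|u\|_\infty \;\le\; \tfrac{8}{9}(8r)^{r+1}\lambda^{-r+1/2}\,c(r,\mu)\delta \;=:\; \eta,
\]
which is of the advertised order $(m/p)^{-r+1/2}\delta$ up to a factor depending only on $r$ and $\mu$.

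\textbf{Step 3 (conclusion).} On the event of Step 1, apply Theorem \ref{FR Theorem} with this $\eta$ to get $\|\widehat x - x\|_2 \le C\eta + \tfrac{D}{\sqrt k}\sigma_k(x)_1$; setting $C_1 = \tfrac{8}{9}C(8r)^{r+1}c(r,\mu)$ and $C_2 = D$ yields the claimed bound. The main obstacle is Step 1: proving that $\widehat V\Phi$ — rather than $\Phi$ itself — has the RIP with the stated probability, because the block-Kronecker structure of $\widehat V = \tfrac{1}{\|v\|_2\sqrt p}(I_p\otimes v)$ interacts nontrivially with the already-dependent rows of a BOE or PCE; Steps 2 and 3 are essentially bookkeeping given the tools already assembled in the paper.
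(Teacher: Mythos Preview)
Your proposal is correct and matches the paper's proof essentially step for step: bound $\|\widehat V D^r u\|_2$ via Lemma~\ref{lem:bound VD} and the stability of the scheme, invoke Theorem~\ref{main theorem} to get the RIP of $\widehat V\Phi$, then apply Theorem~\ref{FR Theorem}. The one detail the paper makes explicit that you leave implicit is that for the $\Sigma\Delta$ condensation vector one has $\gamma^2 = \|v\|_1^2/\|v\|_2^2 \le \lambda = m/p$, which is precisely what turns the generic hypothesis $p \gtrsim k\gamma^2\nu/\alpha^2$ of Theorem~\ref{main theorem} into the condition $p \gtrsim \sqrt{km\nu}$ appearing in the statement.
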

\begin{proof}
The proof essentially boils down to finding an upper bound for $\|\widehat{V}D^ru\|_2$ which we do by controlling $\left\|\widehat{V}D^r\right\|_{\infty\to 2}$ and $\|u\|_\infty$ . Indeed, by Lemma \ref{lem:bound VD}
\[\left\|\widehat{V}D^r\right\|_{\infty\to 2} \leq (8r)^{r+1}\lambda^{-r + 1/2}.\]
Moreover, since $\|\Phi x\|_\infty \leq \mu <1$, the quantization scheme is stable, i.e., $\|u\|_\infty \leq c \delta$, where $c$ may depend on $r$. The noise-shaping relation \eqref{SDeq} implies that
\[
	\|\widehat{V}\Phi x - \widehat{V}q\|_2 = \|\widehat{V}D^r u\|_2 \leq (8r)^{r+1}\lambda^{-r + 1/2}\|u\|_{\infty} \leq  C_1(r)\lambda^{-r + 1/2}\delta.
\]
where in the case of noise shaping schemes as in Lemma \ref{lem:noise shaping} we have $C_1(r) = (8r)^{r+1}$. 
On the other hand,  in the case of stable coarse schemes we have $C_1(r)=(8r)^{r+1} C(r)$, with $C(r)$ being the $r$-dependent constant on the right hand side of the bound in Prop. \ref{prop:stableSD}.

We will now apply 
Theorem \ref{main theorem} with $\gamma^2:=\|v\|_1^2/\|v\|_2^2 \leq m/p$, as it is in the case of $\Sigma\Delta$ quantization. Thus we deduce that if $p \geq c\max\{\frac{k\log^4 n}{\alpha^2}, \frac{\sqrt{km \nu}}{\alpha}\}$, $\widehat{V}\Phi$ satisfies $(\alpha, k)$-RIP with probability at least $1 - e^{-\nu}$. Therefore, robustness of the convex program \eqref{l1-min variant} (see Theorem \ref{FR Theorem}) implies that the solution, $\widehat{x}$, to \eqref{l1-min variant} satisfies
\[
	\|\widehat{x} - x\|_2 \leq C_1 \left(\frac{m}{p}\right)^{-r + 1/2}\delta + C_2\frac{\sigma_k(x)_1}{\sqrt{k}}.
\]

\end{proof}

\begin{theorem} [Recovery Bounds for Distributed Noise-shaping Quantization]
Fix  $ \beta \in(1 ,   2L- \mu/\delta) $  and consider the same setup as Theorem \ref{thm:Sigma Delta} albeit now with the distributed noise shaping quantization.
There exist positive constants $c, C,$ and $D$ such that whenever 
$$m \geq p \geq ck(\log^4(n) + \nu)$$
 the following holds with probability at least $1 - e^{-\nu}$ on the draw of $\Phi$.

Let $x\in \R^n$ such that $\|\Phi x\|_\infty \leq \mu < 1$. and let $q := \Q_{\beta}(\Phi x)$ be its resulting distributed noise-shaping quantization as in Lemma \ref{lem:noise shaping}, then the solution $\widehat{x}$ to \eqref{l1-min variant} satisfies
\[
	\|x -  \widehat{x}\|_2 \leq C\beta^{-\frac{m}{p} + 1}\delta + D \frac{\sigma_k(x)_1}{\sqrt{k}}.
\]

\end{theorem}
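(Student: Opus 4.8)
The plan is to follow the exact same template as the proof of Theorem \ref{thm:Sigma Delta}, replacing the $\Sigma\Delta$-specific ingredients with their distributed noise-shaping analogues. The goal reduces to two independent tasks: (1) bounding the quantization-induced perturbation $\|\widehat{V}_\beta\Phi x - \widehat{V}_\beta q\|_2$, and (2) showing that $\widehat{V}_\beta\Phi$ satisfies the $(\alpha,k)$-RIP with the claimed number of measurements. Once both hold, Theorem \ref{FR Theorem} (robustness of $\ell_1$-minimization under RIP) immediately delivers the advertised error bound $\|x-\widehat{x}\|_2 \leq C\beta^{-m/p+1}\delta + D\sigma_k(x)_1/\sqrt{k}$.

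For task (1), I would invoke the noise-shaping relation \eqref{ns relation} with the block-diagonal transfer operator $H = I_p\otimes H_\beta$ from \eqref{nsoperator}, giving $\Phi x - q = Hu$ with $\|u\|_\infty \leq c\delta$ by stability (Lemma \ref{lem:noise shaping}), which applies since $\|\Phi x\|_\infty \leq \mu < 1$ and $\beta \in (1, 2L - \mu/\delta)$ ensures $\|\widetilde{H}_\beta\|_{\infty\to\infty} + \mu/\delta = \beta + \mu/\delta \leq 2L$. Applying $\widehat{V}_\beta$ and using \eqref{bound VH}-style reasoning — namely $V_\beta H = I_p\otimes(v_\beta H_\beta)$ with $v_\beta H_\beta = [0\;\cdots\;0\;\beta^{-\lambda}]$ — one gets $\|\widehat{V}_\beta H\|_{\infty\to 2} \leq \frac{\beta^{-\lambda}}{\|v_\beta\|_2} \leq \beta^{-\lambda+1}$ since $\|v_\beta\|_2 \geq \beta^{-1}$. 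Hence $\|\widehat{V}_\beta\Phi x - \widehat{V}_\beta q\|_2 \leq \beta^{-\lambda+1}\|u\|_\infty \leq c\,\beta^{-m/p+1}\delta =: \eta$, which is the value of $\eta$ to feed into \eqref{l1-min variant}.

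For task (2), I would apply the main technical result, Theorem \ref{main theorem}, to the matrix $\widehat{V}_\beta\Phi$. Here the relevant ratio is $\gamma^2 = \|v_\beta\|_1^2/\|v_\beta\|_2^2$, and since $\|v_\beta\|_1 = \sum_{j=1}^\lambda \beta^{-j} \leq \frac{1}{\beta-1}$ while $\|v_\beta\|_2 \geq \beta^{-1}$, we have $\gamma^2 \leq \frac{\beta^2}{(\beta-1)^2}$, a constant independent of $m$ (bounded by a function of $\beta$ alone, and by an absolute constant if $\beta \leq 10/9$). Plugging this into the RIP condition from Theorem \ref{main theorem} — which requires roughly $p \gtrsim k(\log^4 n + \gamma^2\nu)/\alpha^2$ — the $\gamma^2$ term is absorbed into a constant, so the condition simplifies to $p \gtrsim k(\log^4 n + \nu)$ with $\alpha$ an absolute constant chosen small enough that $\delta_{2k}(\widehat{V}_\beta\Phi) < 4/\sqrt{41}$. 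This is the crucial structural difference from the $\Sigma\Delta$ case: because $\gamma$ is $O(1)$ rather than $O(\sqrt{\lambda})$, there is no additional $\sqrt{km\nu}$ constraint on $p$, which is exactly why the measurement requirement is the clean $m \geq p \geq ck(\log^4 n + \nu)$.

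The main obstacle — as the authors themselves flag in Section \ref{sec:contributions} — is not in this proof per se but in Theorem \ref{main theorem}, the RIP statement for $\widehat{V}\Phi$, whose proof must cope with the dependencies in the BOE/PCE model; here I am entitled to assume it. Within the present proof the only genuine care needed is tracking the normalization: one must verify that $\widehat{V}_\beta = \frac{1}{\|v_\beta\|_2\sqrt{p}}V_\beta$ is the correct normalization so that $\widehat{V}_\beta\Phi$ (rather than a scaled version) is what satisfies the RIP, and correspondingly that the stability hypothesis is stated for $\Phi x$ and not $\widetilde{\Phi}x$. With alphabet spacing $2\delta$ and $L$ levels, the constants $C,D$ come directly from the $\delta_{2k}$-dependent constants in Theorem \ref{FR Theorem} together with the factor $c$ from the stability bound $\|u\|_\infty \leq c\delta$, so they are explicit. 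Assembling: conditioning on the RIP event (probability $\geq 1-e^{-\nu}$), Theorem \ref{FR Theorem} applied to \eqref{l1-min variant} with the above $\eta$ yields $\|x-\widehat{x}\|_2 \leq C\eta + \frac{D}{\sqrt{k}}\sigma_k(x)_1 = C\beta^{-m/p+1}\delta + D\frac{\sigma_k(x)_1}{\sqrt{k}}$, completing the argument.
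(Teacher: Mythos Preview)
Your proposal is correct and follows essentially the same approach as the paper: bound $\|\widehat{V}_\beta H u\|_2$ via the identity $v_\beta H_\beta=[0\cdots 0\ \beta^{-\lambda}]$ and the stability Lemma~\ref{lem:noise shaping}, then invoke Theorem~\ref{main theorem} with $\gamma=\|v_\beta\|_1/\|v_\beta\|_2=O_\beta(1)$ to get the RIP and conclude via Theorem~\ref{FR Theorem}. Your treatment is in fact slightly more explicit than the paper's (you verify the stability hypothesis $\beta+\mu/\delta\leq 2L$ and explain why the $O(1)$ bound on $\gamma$ removes the extra $\sqrt{km\nu}$ term present in the $\Sigma\Delta$ case), but the logical skeleton is identical.
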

\begin{proof}
The proof is identical to the one of Theorem \ref{thm:Sigma Delta}.
The noise-shaping relation \eqref{ns relation} implies that 
\[
	\widehat{V}\Phi x - \widehat{V}q = \widehat{V}Hu,
\]
and Lemma \ref{lem:noise shaping} yields $\|u\|_\infty \leq \delta$.
Hence, using the first equality in \eqref{bound VH} we have
\[
	\|\widehat{V}\Phi x - \widehat{V}q\|_2 = \|\widehat{V}Hu\|_2 \leq \sqrt{p}\|\widehat{V}Hu\|_\infty \leq \sqrt{p}\|\widehat{V}H\|_{\infty\to\infty}\|u\|_\infty \leq \delta\beta^{-\lambda+ 1}.
\]
By Theorem \ref{main theorem} with $\gamma = \|v\|_1/\|v\|_2 \leq \frac{2\beta}{\beta-1}$ and $p \geq c\frac{k(\log^4(n) + \nu)}{\alpha^2}$,
$\widehat{V}\Phi =\frac{V\Phi}{\|v\|_2\sqrt{p}}$ satisfies $(\alpha, k)$-RIP with probability $1 - e^{-\nu}$. This completes the proof by using the robustness of the convex program \eqref{l1-min variant} as before.

\end{proof}

\begin{remark}(Noise robustness)
We note that by modifying the reconstruction technique slightly, non-quantization noise can be handled in a robust way, in an analogous manner to the method in \cite{SWY2017_2}. We leave the details to the interested reader. 
\end{remark}

\section{The Restricted Isometry Property of $\widehat{V}\Phi$}\label{sec:RIPproof}
Our main technical theorem shows that a scaled version of $V\Phi$ satisfies the restricted isometry property. Our proof of Theorem \ref{main theorem} below is based on the technique of \cite{NPW2014}, which in-turn relies heavily on \cite{RV2008}.
The rough architecture of the proof is as follows. We first show that the desired RIP property holds in expectation, then we leverage the expectation result and a generalized Bernstein bound to obtain the RIP property with high probability. The proof that the RIP holds in expectation in turn comprises several steps. A triangle inequality shows that the expected value of the RIP constant is bounded by the sum of the expected RIP constant of a BOS or PCE matrix and the expected supremum of a chaos process. To bound the chaos process, we use Theorem \ref{thm:KMR} which requires controlling a Dudley integral and hence certain covering numbers. Here again, we use a technique adapted from the works of Nelson et al. \cite{NPW2014}, and Rudelson and Vershynin \cite{RV2008}.

\begin{theorem}\label{main theorem} 
Fix $\alpha \in(0, 1)$ and $\nu > 0$. Let $k$ and $p$ be in $\{1, \ldots, m\}$ with  $\lambda := m/p$  an integer. Let 
$$V := \,I_p\otimes v \text{ and }\widehat{V} := \frac{1}{\|v\|_2\sqrt{p}}V$$
 for a row vector $v\in\R^\lambda$, and let $\Phi$ be as in Construction \ref{distribution}.  Then there exist universal constants $C_1$ and $C_2$ that may depend on the distribution of $\Phi$ but not the dimensions, such that if $$m \geq p \geq C_1\frac{k\log^4 n}{\alpha^2},$$ we have
\begin{equation}\label{bound expected}
	\EE \sup_{x \in T_k}\left|  \|\widehat{V}\Phi x\|_2^2 - \|x\|_2^2\right| \leq \alpha;
\end{equation}
and if 
\[
m \geq p \geq C_2k\max\left\{\frac{\log^4 n}{\alpha^2}, \frac{\gamma^2 \nu}{\alpha^2}\right\}\quad\text{or}\quad m \geq p \geq C_2\frac{k(\log^4 n + \gamma^2\nu)}{\alpha^2},
\] 
we have

\[
\mathbb{P}(	\sup_{x \in T_k}\left|  \|\widehat{V}\Phi x\|_2^2 - \|x\|_2^2\right| \leq \alpha) \geq 1-e^{-\nu}.
\]
Here, $\gamma := \|v\|_1/\|v\|_2.$
\end{theorem}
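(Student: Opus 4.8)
The plan is to follow the two‑stage strategy indicated above: first prove $\EE\sup_{x\in T_k}\big|\|\widehat V\Phi x\|_2^2-\|x\|_2^2\big|\le\alpha$, then bootstrap to the high‑probability statement via a Bernstein‑type bound that exploits the independence of the $p$ blocks. Write $m=\lambda p$, group the rows of $\Phi$ into $p$ consecutive blocks of length $\lambda$, and denote by $a_{i,j}$ the $j$‑th row in block $i$ and by $\epsilon_{i,j}$ the corresponding Bernoulli sign. Since $V=I_p\otimes v$, the $i$‑th coordinate of $V\Phi x$ equals $\sum_{j=1}^{\lambda}v_j\epsilon_{i,j}\langle a_{i,j},x\rangle$, so $\widehat V\Phi x = M_x\,\epsilon$ with $\epsilon\in\{\pm1\}^m$ and $M_x$ the $p\times m$ block‑diagonal matrix whose $i$‑th block is $\tfrac{1}{\|v\|_2\sqrt p}\,(v_j\langle a_{i,j},x\rangle)_{j=1}^{\lambda}$. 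I would first record the two elementary identities $\EE_\epsilon\|\widehat V\Phi x\|_2^2=\|M_x\|_F^2=\tfrac{1}{\|v\|_2^2 p}\sum_{i,j}v_j^2|\langle a_{i,j},x\rangle|^2$ and, taking $\EE_A$, that this equals $\|x\|_2^2$ (because $\tfrac1n U$ is unitary for a BOE, and by the analogous computation for a PCE). Then a triangle inequality gives
\[
\EE\sup_{x\in T_k}\big|\|\widehat V\Phi x\|_2^2-\|x\|_2^2\big|\ \le\ \EE_A\,\EE_\epsilon\sup_{x\in T_k}\big|\|M_x\epsilon\|_2^2-\EE_\epsilon\|M_x\epsilon\|_2^2\big|\ +\ \EE_A\sup_{x\in T_k}\big|\,\EE_\epsilon\|\widehat V\Phi x\|_2^2-\|x\|_2^2\,\big|.
\]
I would bound the second summand directly by Theorems~\ref{thm:RV} and~\ref{thm:RIP-KMR}, after absorbing the deterministic within‑block weights $\lambda v_j^2/\|v\|_2^2$ (which average to $1$ over $j$) into the ambient structured matrix; this affects only the constants and still leaves a $k\,\mathrm{polylog}\,n$ requirement on $p$.

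\textbf{The Bernoulli chaos term.} For fixed $A$ the first summand is a Bernoulli chaos process, so I would apply Theorem~\ref{thm:KMR} to the symmetric family $\mathcal M=\{M_x:x\in T_k\}$ (symmetric since $T_k=-T_k$ and $M_{-x}=-M_x$), yielding a bound of the form $d_F(\mathcal M)\,\gamma_2(\mathcal M,\|\cdot\|_{2\to 2})+\gamma_2(\mathcal M,\|\cdot\|_{2\to 2})^2$. Here $d_F(\mathcal M)^2=\sup_x\|M_x\|_F^2\le 1+\alpha$ on the event already controlled by the second summand above. The block‑diagonal structure gives $\|M_x-M_{x'}\|_{2\to 2}\le \tfrac{1}{\sqrt p}\max_{\ell\in[m]}|\langle a_\ell,x-x'\rangle|$ (the factor $\|v\|_2$ cancels against the normalization), so $\gamma_2(\mathcal M,\|\cdot\|_{2\to 2})\le\tfrac{1}{\sqrt p}\,\gamma_2(T_k,d_\infty)$ with $d_\infty(x,x')=\max_{\ell}|\langle a_\ell,x-x'\rangle|$. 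The core estimate is then $\gamma_2(T_k,d_\infty)\lesssim\sqrt k\,\log^2 n$, proved by splitting the Dudley covering‑number integral into a small‑scale regime (a volumetric bound) and a moderate‑scale regime (Maurey's empirical method together with a Bernstein bound using $\|a_\ell\|_\infty\le1$), as in \cite{RV2008, NPW2014}; for the PCE one additionally controls the randomness of $\{a_\ell\}$ through a second layer of generic chaining as in \cite{KMR2014}. Taking $\EE_A$ (via Cauchy--Schwarz and moment bounds for $\gamma_2(T_k,d_\infty)$) yields $\EE\sup_x\big|\|\widehat V\Phi x\|_2^2-\|x\|_2^2\big|\lesssim \sqrt k\log^2 n/\sqrt p+k\log^4 n/p$, which is $\le\alpha$ once $p\gtrsim k\log^4 n/\alpha^2$.

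\textbf{From expectation to high probability.} To upgrade I would use the independence of the $p$ blocks: $\|\widehat V\Phi x\|_2^2=\sum_{i=1}^p Y_i(x)$ with $Y_i(x)=\tfrac{1}{\|v\|_2^2 p}\big|\sum_j v_j\epsilon_{i,j}\langle a_{i,j},x\rangle\big|^2$ independent across $i$. Apply Theorem~\ref{thm:probability} to the countable family $F_{\pm,x}(Y_i)=\pm\big(Y_i(x)-\EE Y_i(x)\big)$ indexed by a dense subset of $T_k$, so that $Z=\sup_x\big|\|\widehat V\Phi x\|_2^2-\|x\|_2^2\big|$ and $\EE Z\le\alpha/2$ by the expectation bound. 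A Cauchy--Schwarz estimate gives the almost‑sure bound $Y_i(x)\le k\|v\|_1^2/(\|v\|_2^2 p)=k\gamma^2/p=:K$, while a hypercontractivity computation for $\EE|\langle b_i,\epsilon^{(i)}\rangle|^4$ gives a variance proxy $\sigma^2=\sum_i\EE F_{\pm,x}(Y_i)^2\lesssim k/p$. Feeding $K$, $\sigma^2$, $\EE Z$ and $t\asymp\alpha$ into the Bernstein tail produces the claimed requirement $p\gtrsim k(\log^4 n+\gamma^2\nu)/\alpha^2$ (equivalently its max‑form), and, using $\gamma^2\le m/p$ in the $\Sigma\Delta$ case, the $\sqrt{km\nu}$‑form used later in Theorem~\ref{thm:Sigma Delta}.

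\textbf{Main obstacle.} The crux is the chaining estimate $\gamma_2(T_k,d_\infty)\lesssim\sqrt k\,\log^2 n$ with the \emph{correct} power of the logarithm: this requires balancing the volumetric and empirical‑method covering‑number bounds across all scales, and --- in the PCE case --- simultaneously taming the dependence among the rows $a_\ell$, which is exactly the place where the \cite{RV2008, NPW2014, KMR2014} machinery has to be adapted rather than quoted. A secondary nuisance is the interaction between the random quantity $d_F(\mathcal M)$ and the chaos bound, which forces the two summands of the expectation decomposition to be treated in the right order.
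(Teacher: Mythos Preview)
Your proposal follows essentially the same route as the paper: split the expectation via the identity $\EE_\epsilon\|\widehat V\Phi x\|_2^2=\|M_x\|_F^2$, handle the chaos term with Theorem~\ref{thm:KMR} together with the Rudelson--Vershynin covering estimate on $\gamma_2(T_k,\max_\ell|\langle a_\ell,\cdot\rangle|)$, and upgrade to high probability via Bernstein over the $p$ row--blocks. The one place your sketch is genuinely imprecise is the second summand: ``absorbing'' the weights $\lambda v_j^2/\|v\|_2^2$ into the rows of $A$ destroys the BOE/PCE structure (the rescaled rows need not have $\|\cdot\|_\infty\le 1$ and are no longer identically distributed), so Theorems~\ref{thm:RV} and~\ref{thm:RIP-KMR} do not apply to the resulting matrix. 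The paper instead regroups by \emph{position within a block}: since
\[
\EE_\epsilon\|\widehat V\Phi x\|_2^2=\frac{1}{\|v\|_2^2}\sum_{j=1}^{\lambda} v_j^2\cdot\frac{1}{p}\|A_{\Omega_j}x\|_2^2,\qquad \Omega_j=\{j,\,j+\lambda,\dots,\,j+(p-1)\lambda\},
\]
the second summand is a convex combination (weights $v_j^2/\|v\|_2^2$) of expected RIP constants of genuine $p$-row BOE/PCE submatrices $A_{\Omega_j}$, to which the cited theorems apply verbatim. Your Bernstein step is a minor reformulation of the paper's (you take the supremum over $F_{\pm,x}(Y_i)$ rather than over the bilinear $f_{z,w}(X_\ell)$); either version delivers the same requirement $p\gtrsim k\gamma^2\nu/\alpha^2$, with $\gamma$ entering through the uniform bound $K=k\gamma^2/p$ on each summand.
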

\begin{proof}
We begin with the bound on the expectation.\\
{\bf \noindent Step (I):} Let $S_\ell = \{(\ell - 1)\lambda + 1, \ldots, \ell \lambda\}$ for $\ell = 1, \ldots, p.$ Let $\Lambda$ be a diagonal matrix having the random signs $\epsilon_j$ from Step 2 of Construction \ref{distribution} as its diagonal entries and let $a_j$ be the $j$th row of $A$. Then 
\[
V\Phi x = V\Lambda Ax = V\Lambda \begin{pmatrix} \<a_1, x\> \\ \vdots \\ \<a_m, x\>\end{pmatrix}
= V\begin{pmatrix} \epsilon_1\<a_1, x\> \\ \vdots \\ \epsilon_m\<a_m, x\>\end{pmatrix}
= \begin{pmatrix} \sum_{j\in S_1}v_j\epsilon_j\<a_j, x\> \\ \vdots \\ \sum_{j\in S_p}v_{j-(p-1)\lambda}\epsilon_j\<a_j, x\>\end{pmatrix},
\]
and the expected squared modulus, with respect to the random vector $\epsilon = (\epsilon_j)_{j=1}^m$, of an entry of $V\Phi x$ is
\begin{align*}
\EE_\epsilon\left|\sum_{j\in S_\ell} v_{j-(\ell-1)\lambda}\epsilon_j\<a_j, x\>\right|^2 &= \EE_\epsilon \sum_{j, j'\in S_\ell} v_{j-(\ell-1)\lambda} v_{j'-(\ell-1)\lambda} \epsilon_j \epsilon_{j'} \<a_j, x\> \overline{\<a_{j'}, x\>}\\
&= \sum_{j\in S_\ell}v_{j-(\ell-1)\lambda}^2 |\<a_j, x\>|^2.
\end{align*}

Then, defining $A_{\Omega_j}$ as the restriction of $A$ to its rows indexed by $\Omega_j = \{j, j + \lambda, \ldots, j + (p-1)\lambda\}\subset \{1,...,m\}$, for $j = 1, \ldots, \lambda$ we have

\begin{align*}
\EE_\epsilon \|V\Phi x\|_2^2 &= \sum_{\ell = 1}^p \sum_{j\in S_\ell} v_{j-(\ell-1)\lambda}^2 |\<a_j, x\>|^2\\
 &= \sum_{j=1}^\lambda v_{j}^2 \sum_{j'\in\Omega_j} |\<a_{j'}, x\>|^2\\
 &= \sum_{j=1}^\lambda v_j^2 \|A_{\Omega_j} x\|_2^2.
\end{align*}
  By the triangle inequality
\begin{align*}
\left| \frac{1}{\|v\|_2^2 p} \|V\Phi x\|_2^2 - \|x\|_2^2\right|
&\leq \frac{1}{\|v\|_2^2 p}\left| \|V\Phi x\|_2^2 - \sum_{j = 1}^\lambda v_j^2 \|A_{\Omega_j}x\|_2^2\right| + \left| \frac{1}{\|v\|_2^2}\sum_{j = 1}^\lambda v_j^2 (\frac{1}{p}\|A_{\Omega_j}x\|_2^2 - \|x\|_2^2) \right|\\
&\leq \frac{1}{\|v\|_2^2 p}\left| \|V\Phi x\|_2^2 - \sum_{j = 1}^\lambda v_j^2 \|A_{\Omega_j}x\|_2^2\right| + \frac{1}{\|v\|_2^2}\sum_{j = 1}^\lambda v_j^2 \left|\frac{1}{p}\|A_{\Omega_j}x\|_2^2 - \|x\|_2^2 \right|,
\end{align*}
which implies that for $$T_k := \{ x \in \Sigma_k^n, \|x\|_2=1\}$$ we have
\begin{align*}\EE_{A, \epsilon} \sup_{x \in T_k}\left| \frac{1}{\|v\|_2^2 p} \|V\Phi x\|_2^2 - \|x\|_2^2\right| \leq & \frac{1}{\|v\|_2^2 p}\EE_{A, \epsilon} \sup_{x\in T_k} \left|\|V\Phi x\|_2^2 - \EE_\epsilon\|{{V}}\Phi x\|_2^2\right| \\  & + \frac{1}{\|v\|_2^2} \sum_{j=1}^\lambda v_j^2\; \EE_A\sup_{x\in T_k}\left|\frac{1}{p}\|A_{\Omega_j}x\|_2^2 - \|x\|_2^2 \right|.
\end{align*}

%
The second summand above is the expected value of the RIP constant of matrices $A_{\Omega_j}$ drawn from the BOE or PCE, and this quantity is bounded by $\alpha_A$ provided $p \gtrsim k\log^4 n/\alpha_A^2$ by Theorems \ref{thm:RV} and \ref{thm:RIP-KMR}. 
Therefore,
$$\EE_{A, \epsilon} \sup_{x \in T_k}\left| \frac{1}{\|v\|_2^2 p} \|{{V}}\Phi x\|_2^2 - \|x\|_2^2\right| \leq \frac{1}{\|v\|_2^2p}\EE_{A, \epsilon} \sup_{x\in T_k} \left|\|{{V}}\Phi x\|_2^2 - \EE_\epsilon\|{{V}}\Phi x\|_2^2\right| + \alpha_A.$$
 Here, $\alpha_A$ is the RIP constant in either \eqref{BOS-RIP} or \eqref{Circulant-RIP}. 

{\bf \noindent Step (II):} Now, we need to bound $\EE_{A, \epsilon} \sup_{x\in T_k} \left|\|V\Phi x\|_2^2 - \EE_\epsilon\|{{V}}\Phi x\|_2^2\right|$.  

To control this quantity, we will use Theorem \ref{thm:KMR}, which requires a little bit of setting up. To that end, let $$A_j^v = \begin{bmatrix}
v_1 a_{(j-1)\lambda + 1}\\
\vdots\\
v_\lambda a_{j\lambda}\end{bmatrix},$$
and observe that
\begin{align}
{V}\Phi x
&= \begin{bmatrix}
v_1\<a_1, x\> & \ldots & v_\lambda \<a_{\lambda}, x\> & & & & & & &\\
& & & v_1\<a_{1 + \lambda}, x\> & \ldots & v_\lambda \<a_{2 \lambda}, x\> & & & &\\
& & & & & & \ddots & & &\\
& & & & & & & v_1\<a_{1 + (p-1)\lambda}, x\> & \ldots & v_\lambda \<a_{p\lambda}, x\>
\end{bmatrix}
\begin{bmatrix}
\epsilon_1\\
\epsilon_2 \\
\vdots\\
\epsilon_{m}
\end{bmatrix} \nonumber\\
&=: \begin{bmatrix}
(A_1^v x)^* & & & \\
& (A_2^v x)^* & &\\
& & \ddots & \\
& & & (A_p^v x)^*
\end{bmatrix}\begin{bmatrix}
\epsilon_1\\
\epsilon_2 \\
\vdots\\
\epsilon_{m}
\end{bmatrix} \label{eq:Ajv_def}\\
&=: A_x^v \epsilon.\label{eq:Axv_def}
\end{align}
Let $\A^v$ be the set of matrices given by $$\A^v := \{A_x^v, x\in T_k\}.$$ Then, conditioning on $A$, by Theorem \ref{thm:KMR}
\begin{align}
\EE_\epsilon \sup_{x\in T_k} \left|\|{V}\Phi x\|_2^2 - \EE_\epsilon\|{V}\Phi x\|_2^2\right|
&= \EE_\epsilon \sup_{A_x^v \in \A^v}\left| \|A_x^v \epsilon\|_2^2 - \EE_\epsilon \|A_x^v \epsilon\|_2^2 \right| \notag\\
&\lesssim d_F(\A^v)\gamma_2(\A^v, \|\cdot\|_{2\to 2}) + \gamma_2(\A^v, \|\cdot\|_{2\to 2})^2.
\end{align}
Our next step will be taking expectations over $A$, for which we will obtain an upper bound on the $\gamma_2$ functional, and a bound on the expectation of $d_F(\mathcal{A}^v)$. We start with the latter term, for which we observe that $\|A_x^v\|_F^2 = \sum_{j=1}^\lambda v_j^2 \|A_{\Omega_j} x\|_2^2.$ Hence, taking expectations over $A$, we have
\begin{align} 
\EE[d_F(\A^v)] &= \EE \sup_{x\in T_k}\sqrt{\sum_{j=1}^\lambda v_j^2 \|A_{\Omega_j} x\|_2^2}\\
&\leq \sqrt{\sum_{j=1}^\lambda v_j^2\, \EE\sup_{x\in T_k}\|A_{\Omega_j} x\|_2^2}\\
&\leq \sqrt{\sum_{j=1}^\lambda v_j^2\, \left(p\EE\sup_{x\in T_k}\left|\frac{1}{p}\|A_{\Omega_j} x\|_2^2 - \|x\|_2^2\right| + p\EE\sup_{x\in T_k}\|x\|_2^2\right)}\\
&\leq \sqrt{2p}\|v\|_2\label{eq:dF_bound}
\end{align}
where the third inequality is due to the fact the expected value of the RIP constant of matrices $A_{\Omega_j}$ is bounded by 1.

Now to control the $\gamma_2$ term, we note that since
\begin{align*}
\|A_x^v - A_z^v\|_{2\to 2} &= \max_{\|w\|_2 = 1}\|(A_x^v - A_z^v)w\|_2\\
&= \max_{\|w\|_2 = 1} \sqrt{\sum_{j=1}^\lambda |\<(w_i)_{i = 1 + (j-1)\lambda}^{j\lambda}, A_j^v(x - z)\>|^2}\\
&= \max_j \|A_j^v (x-z)\|_2,
\end{align*}
we have $\|A_x^v - A_z^v\|_{2\to 2} = \max_j \|A_j^v(x - z)\|_2 = \|x - z\|_X$ where we have defined the seminorm 
\begin{equation}\label{eq:Xnorm_def}
\|z\|_X := \max_{j\leq p} \|A_j^v z\|_2.
\end{equation}
Hence, \begin{equation}\gamma_2(\A^v, \|\cdot\|_{2\to 2}) = \gamma_2(T_k,\|\cdot\|_X).\label{eq:gamma2_to_gamma2}\end{equation} Then, 
\begin{align}\EE_{A, \epsilon} \sup_{x\in T_k} \left|\|{V}\Phi x\|_2^2 - \EE_\epsilon\|{V}\Phi x\|_2^2\right| \leq \ \EE_A [d_F(\A^v)\gamma_2(T_k, \|\cdot\|_X)] + \EE_A\gamma_2(T_k, \|\cdot\|_X)^2.\label{eq:KMR_bound}\end{align} 
To bound 
{\color{black} $\gamma_2(T_k, \|\cdot\|_X)$, }
we first observe that for any vector $x$
\[
	\|x\|_X \leq \|v\|_2 \max_{1\leq j \leq p}|\langle a_j, x\rangle| =: \|v\|_2 \|x\|_X',
\]
which implies that
\[
	\gamma_2(T_k, \|\cdot\|_X) \leq \|v\|_2 \gamma_2 (T_k, \|\cdot\|_X').
\]
By Dudley's inequality (e.g., \cite{talagrand2006generic}), we have
\begin{equation}\label{eq:Dudley}
\gamma_2(T_k, \|\cdot\|_X') \lesssim \int_0^{\infty} \sqrt{\log \mathcal{N}(T_k, \|\cdot\|_X', u)}\,du.
\end{equation}
Since $\|a_j\|_\infty \leq 1$, the right hand side of (\ref{eq:Dudley}) can be bounded as
\[
	\int_0^{\infty} \sqrt{\log \mathcal{N}(T_k, \|\cdot\|_X', u)}\,du \lesssim \sqrt{k\log p\log n}\log k
\]
(see, e.g., the proof of Lemma 3.8 in \cite{RV2008}).
Thus, we obtain
\begin{equation}
	\gamma_2(T_k, \|\cdot\|_X) \lesssim \|v\|_2\sqrt{k\log p\log(n)}\log k. \label{eq:gamma2_bound}
\end{equation}

\noindent Finally, substituting \eqref{eq:gamma2_bound} and \eqref{eq:dF_bound} into a scaled version of \eqref{eq:KMR_bound}, we obtain
\begin{align*}
\EE\sup_{x\in T_k}\left\lvert\frac{1}{\|v\|_2^2 p}\left(\|{V}\Phi x\|_2^2 - \EE\|{V}\Phi x\|_2^2\right) \right\rvert &\lesssim \frac{1}{\|v\|_2^2p}(\|v\|_2\sqrt{p}) \|v\|_2\sqrt{k\log p\log(n)}\log k\\
&\qquad + \frac{1}{\|v\|_2^2 p} \|v\|_2^2k\log p\log(n)\log^2 k\\
&\lesssim \frac{\sqrt{k\log p\log(n)}\log k}{\sqrt{p}} + \left(\frac{\sqrt{k\log p\log(n)}\log k}{\sqrt{p}}\right)^2\\
&\lesssim \frac{\sqrt{k\log p\log(n)}\log k}{\sqrt{p}}\\
&\lesssim \sqrt{\frac{k\log^4 n}{p}}\\
&\leq \alpha_1,
\end{align*}
provided that 
$n \gtrsim p \gtrsim k\log^4 n/\alpha_1^2$.
Therefore,
\begin{equation}
	\EE_A\EE_\epsilon \sup_{x \in T_k}\left| \frac{1}{\|v\|_2^2 p} \|\Phi x\|_2^2 - \|x\|_2^2\right| \leq \alpha_1 + \alpha_A = \alpha,
\end{equation}
provided that $p \geq C_1k\log^4 n/\alpha^2$.

\noindent{\bf Step (III): The probability estimate.} We will use Theorem \ref{thm:probability} and follow the technique that was used in \cite{FR2013} (Chapter 12) to prove a similar result for BOE's (i.e., without the condensation operator $V$). To that end, let $X^*_\ell$ be the $\ell$-th row of the matrix $\widehat{V}\Phi,$ and define $Q_{k, n} = \bigcup_{S\subset [n], |S| \leq k} Q_{S, n}$ with $Q_{S, n} = \{(z, w): \|z\|_2 = \|w\|_2 = 1, \text{supp}(z), \text{supp}(w)\subset S\}.$ The restricted isometry constant $\delta_k$ satisfies
\begin{align*}
p\delta_k &= \sup_{S\subset[n], |S| = k} \left\| \sum_{\ell=1}^p\left((X_\ell)_S(X_\ell^*)_S - (I_n)_S\right) \right\|_{2\to 2}
= \sup_{(z, w) \in Q_{k, n}}
\left\langle\sum_{\ell = 1}^p (X_{\ell} X_{\ell}^* - I_n)z, w\right\rangle\\
&= \sup_{(z, w)\in Q_{k, n}} \sum_{\ell = 1}^p 
\left\langle (X_{\ell} X_{\ell}^* - I_n)z, w\right\rangle.
\end{align*}
Let $Q^*_{k, n}$ be a dense countable subset of $Q_{k, n}$. Then
\[ p\delta_k = \sup_{(z, w)\in Q^*_{k, n}} \sum_{\ell = 1}^p f_{z, w}(X_{\ell}),\]
where $f_{z, w}(X) = 
\langle(XX^* - I_n)z, w\rangle$. To apply Theorem \ref{thm:probability}, we first check the boundedness of $f_{z, w}$ for $(z, w)\in Q_{k, n}$. That is,
\begin{align*}
|f_{z, w}(X)| &\leq |\langle (XX^* - I_n)z, w\rangle|\\
&\leq \|X_SX_S^* - (I_n)_S\|_{2\to 2}\\
&\leq  \|X_SX_S^* - (I_n)_S\|_{1\to 1}. 
\end{align*}
Recall that any row of $V\Phi/\|v\|_2$ is of the form $\frac{1}{\|v\|_2}\sum_{\ell=1}^\lambda v_{\ell}\phi_{\ell}^*$ where $\phi_{\ell}^*$ is a row of $\Phi$. Since $\|\phi_{\ell}\|_\infty \leq 1$, with $\delta_{s,s'}$ denoting the Kronecker delta function, we now have
\begin{align*}
|f_{z, w}(X)|&\leq \|X_SX_S^* - (I_n)_S\|_{1\to 1} = \max_{s\in S} \sum_{s' \in S}\left|\left(\frac{1}{\|v\|_2}\sum_{\ell = 1}^\lambda v_{\ell}\phi_{\ell}^*(s)\right)\left(\frac{1}{\|v\|_2}\sum_{\ell = 1}^\lambda v_{\ell}\phi_{\ell}^*(s')\right) - \delta_{s, s'}\right|\\
 &\leq \max_{s\in S} \sum_{s' \in S}\left|\frac{\|v\|_1^2}{\|v\|_2^2}\sup_{\ell}\|\phi_{\ell}\|_\infty \sup_{\ell'}\|\phi_{\ell'}\|_\infty + 1\right|
 \leq 2\gamma^2 k,
\end{align*} 
where $\gamma = \|v\|_1/\|v\|_2.$
We also observe that
\[
	\EE|f_{z, w}(X)|^2 \leq \EE|\langle (XX^* - I_n)z, w\rangle|^2 \leq \EE\|(X_S X_S^* - I_n)z\|_2^2 = \EE\|X_S\|_2^2 |\langle X, z\rangle|^2 - 2\EE|\langle X, z\rangle|^2 + 1.
\]
We bound $\|X_S\|_2^2$ and $\EE|\langle X, z\rangle|^2$ as follows:
\begin{align*}
\|X_S\|_2^2 &= \sum_{j\in S}|X_S(j)|^2
= \sum_{j\in S}\left|\frac{1}{\|v\|_2}\sum_{\ell = 1}^\lambda v_{\ell}\phi_{\ell}(j)\right|^2\\
&\leq \sum_{j\in S}\frac{1}{\|v\|_2^2}\|v\|_1^2 \|\phi_{\ell}\|_\infty^2
\leq \gamma^2 k, 
\end{align*}
and
\begin{align*}
\EE|\langle X, z\rangle|^2 &= \frac{1}{\|v\|_2^2}\EE\left|\sum_{\ell = 1}^\lambda v_{\ell} \langle \phi_\ell, z\rangle\right|^2
= \frac{1}{\|v\|_2^2} \left(\EE \sum_{\ell, \ell'}v_{\ell}v_{\ell'} \langle \phi_{\ell}, z\rangle\overline{\langle \phi_{\ell'}, z\rangle} \right)\\
&=  \frac{1}{\|v\|_2^2} \sum_{\ell, \ell'}v_{\ell}v_{\ell'} \EE \langle \phi_{\ell}, z\rangle\overline{\langle \phi_{\ell'}, z\rangle} 
=  \frac{1}{\|v\|_2^2} \sum_{\ell}v_{\ell}^2 \EE |\langle \phi_{\ell}, z\rangle|^2\\
&= \frac{1}{\|v\|_2^2} \sum_{\ell} v_{\ell}^2\|z\|_2^2 
= 1.
\end{align*}
Therefore,
$
	\EE|f_{z, w}(X)^2| \leq \gamma^2 k - 1 < \gamma^2 k.
$
Applying Theorem \ref{thm:probability} with $t = \alpha_2 p$, we obtain
\[
	\PP(\delta_k \geq \alpha + \alpha_2) \leq \PP(p\delta_k \geq \EE p\delta_k + \alpha_2 p) \leq \exp\left(-\frac{(\alpha_2 p)^2/2}{p\gamma^2 k + 4\gamma^2 kp\EE\delta_k + 2\alpha_2 p\gamma^2 k/3} \right),
\]
where $\EE \delta_k \leq C\sqrt{\frac{k\log^4 n}{p}} \leq \alpha < 1,$ by the first part of the theorem. Hence,
\[
\PP(\delta_k \geq \alpha + \alpha_2) \leq \exp\left(-\frac{\alpha_{2}^2 p}{2k\gamma^2}\frac{1}{1 + 4\alpha + 2\alpha_2/3} \right) \leq \exp\left(-c(\alpha)\frac{\alpha_{2}^2 p}{2k\gamma^2}\right)
\]
where $c(\alpha) := (1 + 4\alpha_1 + 2/3)^{-1} \geq (1 + 4 + 2/3)^{-1} = 3/17.$ The above term is less than $e^{-\nu}$ provided that 
\[ p \geq \frac{34}{3}\frac{\gamma^2 k\nu}{\alpha_2^2}.\]
In conclusion, by taking $\alpha_2 = \alpha$ and rescaling, we have that $\delta_k \leq \alpha$ with probability exceeding $1 - e^{-\nu}$ provided that 
\[ p \geq C_2\max\left\{\frac{k\log^4 n}{\alpha^2}, \frac{\gamma^2 k\nu}{\alpha^2}\right\}\quad\text{or}\quad p \geq C_2\frac{k(\log^4 n + \gamma^2\nu)}{\alpha^2}\]
for some absolute constant $C_2$.
\end{proof}
\section{Technical Lemmas and Proofs}\label{sec:technical}

\subsection*{Proof of Lemma \ref{lem:bound VD}}
We first need the following lemma
\begin{lemma}\label{summation by parts}

Let $(v_\ell)_{\ell \in\mathbb{Z}}$ and $(u_\ell)_{\ell\in\mathbb{Z}}$ be any sequences such that $v_{\ell} = 0$ for $\ell \neq 1, \ldots, \lambda$. Then for any integer $t \geq 0,$
\begin{equation}\label{key claim}
	\sum_{\ell = 1}^\lambda v_\ell (\Delta^r u)_{t + \ell} = (-1)^r \sum_{\ell = 1}^\lambda (\Delta^r v)_{\ell + r}u_{t + \ell} + \sum_{j = 1}^r (-1)^j (\Delta^{j - 1} v)_j (\Delta^{r - j} u)_t.
\end{equation}
\end{lemma}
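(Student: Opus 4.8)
Throughout, $\Delta$ denotes the backward finite difference $(\Delta w)_s = w_s - w_{s-1}$ (so that $\Delta^r$ is the operator implemented by the matrix $D^r$ of Lemma~\ref{lem:bound VD}), with the convention $\Delta^0 = \mathrm{id}$. I would prove \eqref{key claim} by induction on $r \geq 0$. The base case $r = 0$ is immediate: the $j$-sum on the right is empty and $(\Delta^0 v)_\ell = v_\ell$, so both sides equal $\sum_{\ell=1}^\lambda v_\ell u_{t+\ell}$. (One could also regard the identity as a purely formal one on bi-infinite sequences; the hypothesis $t\geq 0$ is only needed later so that all indices stay in range when the identity is applied to finite vectors.)

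For the inductive step, assume \eqref{key claim} holds for a given $r$, for \emph{all} sequences with $v$ supported on $\{1,\ldots,\lambda\}$ and all shifts. Applying it with $\Delta u$ in place of $u$, and using $(\Delta^r(\Delta u))_s = (\Delta^{r+1}u)_s$, reduces the claim for $r+1$ to performing \emph{one} further summation by parts on the term $(-1)^r\sum_{\ell=1}^\lambda (\Delta^r v)_{\ell+r}(\Delta u)_{t+\ell}$. Writing $p_\ell := (\Delta^r v)_{\ell+r}$, expanding $(\Delta u)_{t+\ell} = u_{t+\ell} - u_{t+\ell-1}$, reindexing, and collecting the coefficient of each $u_{t+\ell}$ yields the Abel-summation identity
\[
\sum_{\ell=1}^\lambda p_\ell (\Delta u)_{t+\ell} = \sum_{\ell=1}^\lambda (p_\ell - p_{\ell+1})\, u_{t+\ell} + p_{\lambda+1}\, u_{t+\lambda} - p_1\, u_t .
\]
Since $p_\ell - p_{\ell+1} = (\Delta^r v)_{\ell+r} - (\Delta^r v)_{\ell+r+1} = -(\Delta^{r+1}v)_{\ell+(r+1)}$ and $p_1 = (\Delta^r v)_{r+1} = (\Delta^{(r+1)-1}v)_{r+1}$, substituting back shows that the first right-hand sum produces exactly $(-1)^{r+1}\sum_{\ell=1}^\lambda (\Delta^{r+1}v)_{\ell+(r+1)}u_{t+\ell}$, while the term $(-1)^r(-p_1)u_t$ becomes the $j=r+1$ summand of $\sum_{j=1}^{r+1}(-1)^j(\Delta^{j-1}v)_j(\Delta^{(r+1)-j}u)_t$ (the remaining summands $j=1,\ldots,r$ being inherited unchanged from the $r$-step). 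This is precisely \eqref{key claim} at level $r+1$, closing the induction.

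The one point requiring care — and the only real obstacle in what is otherwise just bookkeeping of indices — is the vanishing of the right-endpoint boundary term, i.e.\ showing $p_{\lambda+1} = (\Delta^r v)_{\lambda+r+1} = 0$. This is where the support hypothesis on $v$ enters: since $v$ is supported on $\{1,\ldots,\lambda\}$ and each backward difference enlarges the support by at most one index on the right, $\Delta^r v$ is supported on $\{1,\ldots,\lambda+r\}$, so $(\Delta^r v)_{\lambda+r+1}=0$. (Equivalently, $\sum_s v_s z^s \cdot (1-z)^r$ has degree $\lambda+r$.) With that observation in hand, the rest of the argument is the routine index-chasing in the two summation-by-parts manipulations sketched above.
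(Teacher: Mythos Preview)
Your proposal is correct and follows essentially the same route as the paper: induction on $r$, with the inductive step obtained by applying the level-$r$ identity to $\Delta u$ and then performing one Abel summation on $(-1)^r\sum_\ell(\Delta^r v)_{\ell+r}(\Delta u)_{t+\ell}$, using the support hypothesis on $v$ to kill the right-endpoint boundary term $(\Delta^r v)_{\lambda+r+1}$. The only cosmetic difference is that you start the induction at $r=0$ while the paper verifies $r=1$ by hand.
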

Here $\Delta$ denotes the first order difference operator. That is, for any sequence $(a_j)_{j\in \Z}$,  ${(\Delta a)_j = a_j - a_{j-1}.}$ 
%

\begin{proof}
The proof is by induction. First, let's check the case $r = 1$. 
\begin{align*}
\sum_{\ell = 1}^\lambda v_\ell (\Delta u)_{t + \ell} &= \sum_{\ell = 1}^\lambda v_\ell u_{t+\ell} - \sum_{\ell = 1}^\lambda v_\ell u_{t + \ell -1}\\
&= \sum_{\ell = 1}^\lambda v_\ell u_{t+\ell} - \sum_{\ell = 1}^\lambda v_{\ell + 1}u_{t + \ell} + v_{\lambda + 1} u_{t + \lambda} - v_1u_t\\
&= (-1)\sum_{\ell = 1}^\lambda (\Delta v)_{\ell + 1}u_{t + \ell} + (-1)v_1u_t.
\end{align*}
Here, we used the fact that $v_{\lambda + j} = 0$ for $j \geq 1$.

Suppose that \eqref{key claim} is true. Then
\begin{align*}
\sum_{\ell = 1}^\lambda v_\ell (\Delta^{r + 1} u)_{t + \ell} &= \sum_{\ell = 1}^\lambda v_\ell (\Delta^r(\Delta u))_{t + \ell}\\
&= (-1)^r \sum_{\ell = 1}^\lambda (\Delta^r v)_{\ell + r}(\Delta u)_{t + \ell} + \sum_{j = 1}^r (-1)^j (\Delta^{j - 1} v)_j (\Delta^{r - j + 1} u)_t\\
&= (-1)^r\left[\sum_{\ell=1}^\lambda (\Delta^r v)_{\ell + r}u_{t +\ell} - \sum_{\ell = 1}^\lambda(\Delta^r v)_{\ell + r}u_{t + \ell - 1}\right] + \sum_{j = 1}^r (-1)^j (\Delta^{j - 1} v)_j (\Delta^{r - j + 1} u)_t\\
 &= (-1)^r\left[\sum_{\ell=1}^\lambda (\Delta^r v)_{\ell + r}u_{t +\ell} - \sum_{\ell = 1}^\lambda(\Delta^r v)_{\ell + r + 1}u_{t + \ell} - (\Delta^r v)_{r + 1} u _t\right]\\
 &\qquad + \sum_{j = 1}^r (-1)^j (\Delta^{j - 1} v)_j (\Delta^{r - j + 1} u)_t\\
 &= (-1)^{r+1}\sum_{\ell = 1}^\lambda (\Delta^{r+1} v)_{\ell + r + 1}u_{t + \ell} + (-1)^{r+1}(\Delta^r v)_{r+1} u_t + \sum_{j = 1}^r(-1)^j(\Delta^{j-1} v)_j (\Delta^{r + 1 - j} u)_t.
\end{align*}
In the fourth equality, we use the fact that $(\Delta^{r} v)_{\lambda + r + 1} = \sum_{j = 0}^r (-1)^j \binom{r}{j}v_{\lambda + r + 1 - j} = 0$. 
\end{proof}

We are now ready to prove Lemma \ref{lem:bound VD}. Let $V_j$ be the $j$th row of the matrix $V$,
then by Lemma \ref{summation by parts}
\begin{align*}
	V_jD^r u &= \sum_{\ell = 1}^\lambda v_\ell (\Delta^r u)_{(j - 1)\lambda + \ell}\\
	&= \underbrace{(-1)^r \sum_{\ell = 1}^\lambda (\Delta^r v)_{\ell + r}u_{t + \ell}}_{(I)} + \underbrace{\sum_{j = 1}^r (-1)^j (\Delta^{j - 1} v)_j (\Delta^{r - j} u)_t}_{(II)},
\end{align*}
where $t = (j - 1)\lambda.$

To bound $(I)$ and $(II)$, we first define the Fourier series of a sequence $z := (z_j)_{j\in\Z}$ in $\ell^2(\Z)$ as
\[
	\widehat{z}(\xi) := \sum_{j\in\Z} z_j e^{-i j \xi}.
\]
Let $h := (h_j)_{j\in \Z}$ where $h_j = (-1)^j \binom{r}{j}$, for $j = 0, \ldots, r$, and $0$ otherwise. We observe that
\begin{align*}
|(I)| &\leq \|u\|_\infty \sum_{\ell = 1}^\lambda |(\Delta^r v)_{\ell + r}|\\
&= \|u\|_\infty \sum_{\ell = 1}^\lambda |(h*v)_{\ell + r}|\\
&\leq \|u\|_\infty \|h*v\|_1,
\end{align*}
where $(h*v)_\ell = \sum_{j\in\Z}h_j v_{\ell -j}$ denotes the convolution of $h$ and $v$. 
Since
\[
	\widehat{h}(\xi) = \sum_{j\in\Z}h_j e^{-ij\xi} = \sum_{j = 0}^r (-1)^j \binom{r}{j}(e^{-i\xi})^j = (1 - e^{-i\xi})^r
\]
and
\[
	\widehat{v}(\xi) = \sum_{j\in\Z}v_j e^{-ij\xi} = (1 + e^{-i\xi} + \ldots + e^{-i\xi(\tilde{\lambda} - 1)})^r,
\]
we obtain
\[
	\widehat{h*v}(\xi) = \widehat{h}(\xi)\widehat{v}(\xi) = (1 - e^{-i\xi\tilde{\lambda}})^r = \sum_{j=0}^r(-1)^j\binom{r}{j}e^{-i\xi\tilde{\lambda} j},
\]
which implies that the non-zero entries of $h*v$ are $(h*v)_{\tilde{\lambda} j} = (-1)^j \binom{r}{j}$. Thus, $\|h*v\|_1 = 2^r$ 
and \[|(I)| \leq 2^r\|u\|_\infty.\]

To bound $(II)$, we observe that $|(\Delta^{j-1}v)_j|\leq 2^{j-1}\max_{1\leq j\leq r}|v_j|$ 
and $|(\Delta^{r-j}u)_t|\leq 2^{r-j} \|u\|_\infty$, thus
\[
	|(II)| \leq r2^{r-1}\max_{1\leq j\leq r}|v_j| \|u\|_\infty \leq r2^{r-1}2^{2r-1}\|u\|_\infty,
\]
where the last inequality is due to the fact that by the definition of $v$, $v_j$ are multinomial coefficients satisfying 
$\max_{1\leq j \leq r} |v_j| \leq \binom{2r-1}{r-1} = \frac{1}{2}\binom{2r}{r} \leq \frac{1}{2} 4^r$.
Combining the bounds on  $(I)$ and $(II)$, we get
\[
	\frac{|V_j D^r u|}{\|u\|_\infty} \leq  2^r + r2^{3r - 2} \leq r2^{3r - 1}.
\]
This yields
\begin{equation}\label{bound VDr}
\|\widetilde{V}D^r \|_{\infty \to 2} \leq \frac{9}{8}\frac{r2^{3r - 1}}{\|v\|_2}\leq (8r)^{r+1}\lambda^{-r + 1/2}.
\end{equation}

Here, the first and second inequalities in \eqref{bound VDr} are due to the facts that 
\[\|\widetilde{V}D^r\|_{\infty \to 2} \leq \sqrt{p}\|\widetilde{V}D^r\|_{\infty \to \infty} = \frac{9}{8\|v\|_2}\|VD^r\|_{\infty \to \infty}\] 
and $\|v\|_2 \geq \lambda^{r - 1/2}r^{-r}$, respectively.

\section*{Acknowledgments}
RS was supported in part by the NSF via DMS-1517204. The authors would like to thank Sjoerd Dirksen and Laurent Jacques for stimulating conversations on binary embeddings. 
\bibliographystyle{abbrv}
\bibliography{refs}

\end{document}